\newtheorem{thm}{Theorem}
\newtheorem{prop}[thm]{Proposition}
\newtheorem{lemma}[thm]{Lemma}
\newtheorem{rem}[thm]{Remark}
\newtheorem{example}[thm]{Example}
\newcommand{\arc}{\stackrel{\frown}{1\,0}}
\newcommand{\arczeroone}{\stackrel{\frown}{0\,\,1}}
\title{Linearization of the box-ball system: an elementary approach}
\author{{\sc Saburo Kakei}\\
\textit{\small Department of Mathematics, Rikkyo University,
Toshima-ku, Tokyo 171-8501, Japan}
\\[2mm]
{\sc Jonathan J.C. Nimmo}\\
\textit{\small School of Mathematics and Statistics, University of Glasgow, 
Glasgow G12 8QW, UK}
\\[2mm]
{\sc Satoshi Tsujimoto}\\
\textit{\small Department of Applied Mathematics and Physics, 
Graduate School of Informatics,}\\
\textit{\small Kyoto University, Sakyo-ku, Kyoto 606-8501, Japan}
\\[2mm]
{\sc and}
\\[2mm]
{\sc Ralph Willox}\\
\textit{\small Graduate School of Mathematical Sciences, 
the University of Tokyo,
Meguro-ku, 153-8914 Tokyo, Japan}
}
\date{}
\begin{document}
\maketitle
\begin{abstract}
Kuniba, Okado, Takagi and Yamada have found that the time-evolution of 
the Takahashi-Satsuma box-ball system can be linearized by considering
rigged configurations associated with states of the box-ball system. 
We introduce a simple way to understand the rigged configuration of
$\mathfrak{sl}_2$-type, and give an elementary proof of the linearization
property. 
Our approach can be applied to a box-ball system with finite carrier, 
which is related to a discrete modified KdV equation, and also to
the combinatorial $R$-matrix of $A_1^{(1)}$-type. 
We also discuss combinatorial statistics 
and related fermionic formulas 
associated with the states of the box-ball systems. 
A fermionic-type formula we obtain for the finite carrier case 
seems to be new.
\end{abstract}

\textit{Keywords:} 
soliton cellular automata; 
box-ball systems; 
rigged configuration; 
combinatorics.

\section{Introduction}
The box-ball system (BBS for short) was introduced in 1990 as a 
cellular automaton model that exhibits solitonic behaviour \cite{TS}.
Since then it has been studied from various perspectives 
such as ultra-discretisation of discrete soliton equations 
\cite{TTMS,TM,TsujimotoHirota,KNW2009,KNW2010,GNN}, 
representation theory of quantum groups \cite{HHIKTT,IKT,TakagiReview}, 
and combinatorics \cite{TTS,Ariki,Fukuda,FOY}. In particular, it is
known to be related to the ultra-discrete limit of the discrete KdV
equation \cite{TTMS,TsujimotoHirota,KNW2009}, a link which allowed for the
obtention of its $N$-soliton solution in \cite{TTMS,MIT2} and for the solution of
its general initial value problem by means of IST techniques, similar to those for the continuous KdV equation, in
\cite{RWudKdV1,RWudKdV2}.

Kuniba, Okado, Takagi, and Yamada found that the time-evolution of the
BBS can be linearized by considering rigged configurations 
associated with states of the BBS \cite{KOTY,KOSTY,Takagi,TakagiReview}. 
Originally, rigged configurations were introduced as combinatorial 
objects that label the solutions to the Bethe ansatz equations for 
integrable spin chains \cite{KKR,SchillingReview} and later they were investigated from the viewpoint of 
Kashiwara crystals \cite{HKOTT,OkadoRevrew,Schilling:X=M}.
The linearization property for the BBS was conjectured in \cite{KOTY} and 
proved in \cite{Takagi,KOSTY,Sakamoto1}. The original proof in \cite{Takagi} is 
formulated in terms of integer-valued two-row matrices that correspond to 
states of the BBS. 
A representation-theoretical proof has been presented in \cite{KOSTY,Sakamoto1}.
In the works \cite{KTT,Takagi,TakagiReview}, the term ``inverse
scattering transform (IST)'' is used to indicate that the time-evolution 
is linearized in terms of riggings, 
which is different from the ``IST'' in \cite{RWudKdV1,RWudKdV2}. 

The linearization property is useful in considering the 
initial value problem for the BBS with periodic boundary condition
\cite{KTT}. 
Mada, Idzumi and Tokihiro developed  
another approach for the same initial value problem based on 
``10-eliminations'' \cite{MIT}. 
The relationship between these two approaches, 
rigged configurations and 10-eliminations, 
has been discussed in great detail in \cite{KS} in crystal-theoretic terms.

In this paper we shall give a simple and elementary proof of the
linearization property, based on the correspondence between 
``$10$-eliminations'' and ``$01$-eliminations''. Our construction has a
more visual flavour than previous approaches and 
can be easily extended to the time-evolution of a BBS with finite
carrier \cite{TM}, for which we shall also establish its linearization
in elementary terms. 

Throughout the paper, we shall use the following notation:
\begin{itemize}
\item Semi-infinite binary sequences:
$\bold{u}=(u_0,u_1,u_2,\ldots)$, 
$u_j\in\{0,1\}$ ($j=0,1,2,\ldots$).\\
For later convenience we impose the restriction that $u_0=0$. 
\item The $j$th component of $\bold{u}=(u_0,u_1,u_2,\ldots)$: 
$(\bold{u})_j = u_j$.
\item The truth function $\chi$: $\chi(A)=1$ if $A$ is true, 
and $\chi(A)=0$ otherwise.
\item The number of ``balls'' in $\bold{u}$: 
$\mathcal{N}(\bold{u})=\sum_{j=0}^\infty
\chi\left((\bold{u})_j=1\right)$.
\end{itemize}
We also use some terminology and notation that is standard in combinatorics 
\cite{MacMahon,Mansour,Krattenthaler}:
\begin{itemize}
\item The descent number of $\bold{u}$: 
\begin{equation}
\mathrm{des}(\bold{u})=
\sum_{j=0}^{\infty}\chi\left(
(\bold{u})_j>(\bold{u})_{j+1}
\right), 
\end{equation}
which is the number of times 
the ``$10$'' pattern appears in $\bold{u}$.
\item The descent sequence for $\bold{u}$:
\begin{equation}
\mathrm{Des}(\bold{u})=\left\{ 
d_j\in\mathbb{Z}_{\geq 0}
\,\big|\,(\bold{u})_{d_j}>(\bold{u})_{d_j+1},\;
d_1>\cdots >
d_{\mathrm{des(\bold{u})}}
\right\}, 
\end{equation}
obtained from the positions of the descends in $\bold{u}$ 
(Figure \ref{fig:def_d_j}).
\item The ascent number of $\bold{u}$: 
\begin{equation}
\mathrm{asc}(\bold{u})=
\sum_{j=0}^{\infty}\chi\left(
(\bold{u})_j<(\bold{u})_{j+1}
\right), 
\end{equation}
which is the number of times the ``$01$'' pattern 
appears in $\bold{u}$.
\item The ascent sequence for $\bold{u}$: 
\begin{equation}
\mathrm{Asc}(\bold{u})=\left\{
a_j\in\mathbb{Z}_{\geq 0}\,\big|\,
(\bold{u})_{a_j}<(\bold{u})_{a_j+1},\;
a_1>\cdots >a_{\mathrm{asc}(\bold{u})}
\right\}, 
\end{equation}
obtained from the positions of the ascends in $\bold{u}$ 
(Figure \ref{fig:def_a_j}). 
\end{itemize}

\begin{figure}[htbp]
\begin{center}
\[
\begin{array}{ccccccccccccccccc}
j&:&0 & \cdots & d_N & \cdots & \cdots & d_{N-1} & \cdots & \cdots &
 \cdots & d_2 & \cdots & \cdots & d_1 & \cdots & \cdots\\
(\bold{u})_j&:&0 & \cdots & 1 & 0& \cdots & 1 & 0 & \cdots & \cdots & 1 & 0 & \cdots &
 1 & 0 & \cdots
\end{array}
\]
\caption{$\mathrm{Des}(\bold{u})=\{d_1>d_2>\cdots>d_N\}$, 
$N=\mathrm{des}(\bold{u})$}
\label{fig:def_d_j}
\end{center}
\end{figure}
\begin{figure}[htbp]
\begin{center}
\[
\begin{array}{ccccccccccccccccc}
j&:&0 & \cdots & a_N & \cdots & \cdots & a_{N-1} & \cdots & \cdots &
 \cdots & a_2 & \cdots & \cdots & a_1 & \cdots & \cdots\\
(\bold{u})_j&:&0 & \cdots & 0 & 1& \cdots & 0 & 1 & \cdots & \cdots & 0 & 1 & \cdots &
 0 & 1 & \cdots
\end{array}
\]
\caption{$\mathrm{Asc}(\bold{u})=\{a_1>a_2>\cdots>a_N\}$, 
$N=\mathrm{asc}(\bold{u})$}
\label{fig:def_a_j}
\end{center}
\end{figure}

This article is organized as follows;
In Section \ref{sec:TS-BBA}, using the above notation, 
we present the necessary background on the box-ball system needed to
prove the linearization property. In particular, we introduce the notion
of ``01-elimination with rigging'', which is actually equivalent to that
of a rigged configuration. 
In Section \ref{sec:finite}, we extend our approach
to the BBS with a carrier of finite size. 
In Section \ref{sec:statistics}, we consider 
combinatorial statistics associated with the BBS 
and its relation to fermionic formulas. 
We derive a similar fermionic-type formula for the 
BBS with finite carrier. 

\section{Takahashi-Satsuma box-ball system}
\label{sec:TS-BBA}
\subsection{Time-evolution, 10-eliminations and 01-eliminations}
\label{subsec:TimeEvolution_TakahashiSatsuma}
The time-evolution rule of the BBS can be formulated as an
operation on binary sequences \cite{TS,YYT}. 
For later convenience, we only consider binary sequences 
start with $u_0=0$.
Denoting by $\mathcal{U}$ the set of possible BBS states
\begin{equation}
\mathcal{U}=\left\{
\bold{u} \,\big|\, u_0=0,\, \mathcal{N}(\bold{u})\mbox{ is finite} \right\},
\end{equation}
we have that
\begin{equation}
\mathrm{des}(\bold{u})=\mathrm{asc}(\bold{u})
\mbox{ for all }\bold{u}\in\mathcal{U}, 
\label{N10=N01}
\end{equation}
which is the so-called ``soliton number'' of $\bold{u}$,
and we define the subset $\mathcal{U}_N\subset\mathcal{U}$ 
(the so-called $N$-soliton sector) as 
\begin{equation}
\mathcal{U}_N=\left\{\bold{u}\in\mathcal{U}\,\big|\,
\mathrm{des}(\bold{u})=\mathrm{asc}(\bold{u})=N\right\}.
\end{equation}
A decreasing sequence of integers that satisfies the interlacing condition
\begin{equation}
\label{InterlacingProperty}
d_1>a_1>\cdots>d_N>a_N\geq 0
\end{equation}
uniquely parametrises a 
semi-infinite sequence $\bold{u}\in\mathcal{U}_N$, 
which we denote by $\bold{u}(a_1,\ldots,a_N;d_1,\ldots,d_N)$
(cf. Figure \ref{fig:u(a_1,...,a_N;d_1,...,d_N)}). 

\begin{figure}[htbp]
\[
\begin{array}{ccc@{\ }c@{\ }cc@{\ }c@{\ }ccccc@{\ }c@{\ }cc@{\ }c@{\ }ccc}
j&:&0 & \cdots & a_N & \cdots & \cdots & d_N & \cdots & \cdots 
&d_2 & \cdots & \cdots & a_1 & \cdots & \cdots & d_1 & \cdots&\cdots\\
(\bold{u})_j&:&0 & \cdots & 0 & 1 & \cdots & 1 & 0 & \cdots &1&0& \cdots & 0 &
1 & \cdots & 1 & 0 & \cdots
\end{array}
\]
\caption{$\bold{u}(a_1,\ldots,a_N;d_1,\ldots,d_N)$}
\label{fig:u(a_1,...,a_N;d_1,...,d_N)}
\end{figure}

The time-evolution $T:\mathcal{U}\to\mathcal{U}$ can be 
described by drawing ``10-arc lines'' \cite{YYT} according to the
following simple rules
(Figure \ref{fig:time-evolution}): 
\begin{itemize}
\item[i)] For $\bold{u}\in\mathcal{U}$, 
connect all $10$ pairs with arc lines (``1st 10-arc lines'').
\item[ii)] 
Disregarding the 1s and 0s in the already connected $10$ pairs, connect all the 
remaining
$10$ pairs with arc lines (``2nd 10-arc lines''). 
\item[iii)] 
Repeat the above procedure until all the 1s are connected to 0s.
\item[iv)] 
Define $T(\bold{u})$ as the state obtained by 
exchanging the 1s and 0s in every connected $10$ pair.
\end{itemize}

\begin{figure}[htbp]
\begin{center}
\begin{tikzpicture}
\path (1.15,2) node {$\bold{u}$ \ \ $=$};
\path (2,2) node {$0$};
\path (2.5,2) node {$0$};
\path (3,2) node {$1$};
\path (3.5,2) node {$1$};
\path (4,2) node {$1$};
\path (4.5,2) node {$0$};
\path (5,2) node {$0$};
\path (5.5,2) node {$1$};
\path (6,2) node {$0$};
\path (6.5,2) node {$0$};
\path (7,2) node {$0$};
\path (7.5,2) node {$0$};
\path (8,2) node {$\cdots$};
\draw (4,2.2) to [out=60,in=120] (4.5,2.2);
\draw (5.5,2.2) to [out=60,in=120] (6,2.2);
\draw (3.5,2.2) to [out=60,in=120] (5,2.2);
\draw (3,2.2) to [out=60,in=120] (6.5,2.2);
\path (0,0) node {$\to$};
\path (1,0) node {$T(\bold{u})=$};
\path (2,0) node {$0$};
\path (2.5,0) node {$0$};
\path (3,0) node {$0$};
\path (3.5,0) node {$0$};
\path (4,0) node {$0$};
\path (4.5,0) node {$1$};
\path (5,0) node {$1$};
\path (5.5,0) node {$0$};
\path (6,0) node {$1$};
\path (6.5,0) node {$1$};
\path (7,0) node {$0$};
\path (7.5,0) node {$0$};
\path (8,0) node {$\cdots$};
\draw (4,0.2) to [out=60,in=120] (4.5,0.2);
\draw (5.5,0.2) to [out=60,in=120] (6,0.2);
\draw (3.5,0.2) to [out=60,in=120] (5,0.2);
\draw (3,0.2) to [out=60,in=120] (6.5,0.2);
\end{tikzpicture}
\caption{Example of the BBS time-evolution}
\label{fig:time-evolution}
\end{center}
\end{figure}

One can also draw ``01-arc lines'' for $\bold{u}$ in the same
manner. 
The following lemma is obvious from the definition of
$T(\bold{u})$, as discussed in \cite{YYT}. 
However, it will turn out to play a crucial role in our approach.
\begin{lemma}
\label{lemma:10-arcs_01-arcs}
The 10-arc lines for $\bold{u}\in\mathcal{U}$ coincide with 
the 01-arc lines for $T(\bold{u})$.
\end{lemma}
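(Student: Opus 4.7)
The plan is to recast the iterative arc-drawing procedure as a non-crossing bracket matching and then show that the swap defining $T$ sends 10-arcs to 01-arcs at the same positions, at which point the lemma follows by a count.

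First, I would observe that the iterative 10-arc procedure on $\bold{u}$ coincides with the standard stack algorithm: reading $\bold{u}$ from left to right, push the position $j$ whenever $u_j=1$, and when $u_j=0$ pop the top of the stack (if any) and match it with $j$. This yields the following characterization: a pair $(i,j)$ is a 10-arc of $\bold{u}$ if and only if $u_i=1$, $u_j=0$, and the substring $u_{i+1}\cdots u_{j-1}$ is ``balanced'', meaning that it contains equally many $1$s and $0$s while every prefix contains at least as many $1$s as $0$s (i.e.\ a Dyck word with ``$1$=up, $0$=down''). The 01-arc procedure on $T(\bold{u})$ admits the analogous characterization with the roles of $0$ and $1$ interchanged.

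The key auxiliary fact then drops out at once: if $(i,j)$ is a 10-arc of $\bold{u}$, every position strictly between $i$ and $j$ must itself lie in a 10-arc entirely contained in $(i,j)$, since the balanced condition forces each $1$ in $u_{i+1}\cdots u_{j-1}$ to be popped by some $0$ in the same interval and conversely. In particular, no unmatched $0$ of $\bold{u}$ sits strictly inside $(i,j)$. Consequently the action of $T$ flips every interior value, so $(T\bold{u})_{i+1}\cdots(T\bold{u})_{j-1}$ is the bitwise complement of $u_{i+1}\cdots u_{j-1}$. Swapping $0$s and $1$s turns a Dyck word with ``$1$=up'' into a Dyck word with ``$0$=up'', and $(T\bold{u})_i=0$, $(T\bold{u})_j=1$ by construction, so $(i,j)$ satisfies the 01-arc characterization for $T(\bold{u})$.

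This shows that every 10-arc of $\bold{u}$ is a 01-arc of $T(\bold{u})$. To conclude equality as sets of pairs it is enough to note that both sides are matchings of the same cardinality $\mathcal{N}(\bold{u})=\mathcal{N}(T\bold{u})$: each $1$ of $\bold{u}$ is the left endpoint of a unique 10-arc, and each $1$ of $T(\bold{u})$ is the right endpoint of a unique 01-arc. The only delicate point in the whole argument is the auxiliary fact that no unmatched $0$ occurs strictly inside a 10-arc, and this is precisely what makes the swap operation defining $T$ interact cleanly with the bracket structure.
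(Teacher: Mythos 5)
Your argument is correct, but note that the paper does not actually supply a proof of this lemma: it is declared ``obvious from the definition of $T(\bold{u})$'' with a pointer to \cite{YYT}, so what you have written is a genuine, self-contained justification rather than an alternative to an existing one. The reformulation of the iterative arc-drawing as bracket matching, with the characterization that $(i,j)$ is a $10$-arc iff $u_i=1$, $u_j=0$ and the intermediate word is a Dyck word with $1$ as the up-step, is standard and correct, and it isolates exactly the structural fact that makes the lemma work: every position strictly inside an arc is itself matched by an arc nested inside it, so $T$ complements the whole segment $u_i\cdots u_j$, and the complemented Dyck word certifies $(i,j)$ as a $01$-arc of $T(\bold{u})$. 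One remark on your final counting step: the assertion that every $1$ of $T(\bold{u})$ is the right endpoint of some $01$-arc is not free of charge on its own (it amounts to $T(\bold{u})$ being a lattice word), but you do not need to establish it separately --- the inclusion you have already proved exhibits $\mathcal{N}(T(\bold{u}))$ many $01$-arcs whose right endpoints exhaust the $1$s of $T(\bold{u})$, and since each $1$ can be the right endpoint of at most one $01$-arc, no further $01$-arcs can exist and the two arc systems coincide. Read this way, the proof is complete.
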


Next we introduce the  ``10-elimination'' procedure. 
We first prepare a map $\phi_{a}:\mathcal{U}\to\mathcal{U}$, where $a$
is a non-negative integer:
\begin{equation}
\left(\phi_a(\bold{u})\right)_j = 
\begin{cases}
(\bold{u})_j & (j<a),\\
(\bold{u})_{j+2} & (j\geq a).
\end{cases}
\end{equation}
Denote by $\Phi_{10}(\bold{u})$ the 10-eliminated sequence of
$\bold{u}$: 
\begin{equation}
\Phi_{10}(\bold{u}) = \left(
\phi_{d_N}\circ\cdots\circ\phi_{d_2}\circ\phi_{d_1}
\right)(\bold{u}), \quad
\{d_1>\cdots > d_N\}=\mathrm{Des}(\bold{u}).
\end{equation}
The 01-eliminated sequence $\Phi_{01}(\bold{u})$ 
can be described in the same fashion: 
\begin{equation}
\Phi_{01}(\bold{u}) = \left(
\phi_{a_N}\circ\cdots\circ\phi_{a_2}\circ\phi_{a_1}
\right)(\bold{u}), \quad
\{a_1>\cdots > a_N\}=\mathrm{Asc}(\bold{u}).
\end{equation}
We remark that $\Phi_{10}$ can act on $\mathcal{U}$ repeatedly. 
In the case of the 01-elimination, $\Phi_{01}$ 
can act on $\bold{u}\in\mathcal{U}$ at least once but not always twice 
since it might happen that $\Phi_{01}(\bold{u})\notin \mathcal{U}$
(e.g. $\bold{u}=011000\cdots$). 

Define $\mathcal{U}^{(n)}$ ($n=1,2,\ldots$) recursively by 
\begin{equation}
\mathcal{U}^{(1)}=\mathcal{U}, \qquad 
\mathcal{U}^{(n)}=\Phi_{01}^{-1}\left(\mathcal{U}^{(n-1)}\right)
\quad (n=2,3,\ldots).
\end{equation}
We also define $\mathcal{U}^+$ (the set of ``lattice words'') as 
\begin{equation}
\mathcal{U}^+=\left\{
\bold{u}\in\mathcal{U}\,\Bigg|\,
\sum_{j=0}^{k}\chi\left((\bold{u})_j=0\right)
\geq \sum_{j=0}^{k}\chi\left((\bold{u})_j=1\right)
\mbox{ for }k=1,2,\ldots
\right\}.\end{equation}
For $\bold{u}\in\mathcal{U}^{(n)}$, 
$\Phi_{01}$ can act at least $n$-times, 
and the following relation holds:
\begin{equation}
\mathcal{U}=\mathcal{U}^{(1)}\supset\mathcal{U}^{(2)}
\supset\cdots\supset\mathcal{U}^{(n)}
\supset\cdots\supset\mathcal{U}^{+}.
\end{equation}

Define the forward-shift operator $\Lambda$ on $\mathcal{U}$ as 
\begin{equation}
\left(\Lambda(\bold{u})\right)_n = 
\begin{cases}
0 & (n=0),\\
(\bold{u})_{n-1} & (n=1,2,\ldots).
\end{cases}
\end{equation}
It follows that
\begin{equation}
\Lambda\circ T=T\circ\Lambda, \quad
\mathrm{asc}\circ \Lambda = \mathrm{asc}, \quad 
\mathrm{des}\circ \Lambda = \mathrm{des},
\label{N10_Lambda=N10}
\end{equation}
and 
\begin{equation}
\Phi_{10}=\Lambda\circ\Phi_{01}=\Phi_{01}\circ\Lambda.
\label{Phi10=LambdaPhi01=Phi01Lambda}
\end{equation}
Note that the transformations $T$, $\Lambda$, $\Phi_{01}$, and
$\Phi_{10}$ on $\mathcal{U}$ can be restricted to $\mathcal{U}^+$. 

\begin{lemma}
\label{lemma:Lambda_T_Phi10=Phi10_T}
For all $\bold{u}\in\mathcal{U}^+$, 
$\left(\Lambda^k\circ T\circ\Phi_{01}^k\right)(\bold{u})
=\left(\Phi_{01}^k\circ T\right)(\bold{u})$ ($k=0,1,2,\ldots$).
\end{lemma}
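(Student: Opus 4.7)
The plan is to recast the lemma in the equivalent form $T\circ\Phi_{10}^k = \Phi_{01}^k\circ T$ on $\mathcal{U}^+$, reduce this to the case $k=1$ by induction, and then prove the base identity $T\circ\Phi_{10} = \Phi_{01}\circ T$ by a direct comparison of arc structures using Lemma \ref{lemma:10-arcs_01-arcs}.

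\textbf{Reduction to the base identity.} From (\ref{N10_Lambda=N10}) and (\ref{Phi10=LambdaPhi01=Phi01Lambda}), $\Lambda^k$ commutes with both $T$ and $\Phi_{01}^k$, and $\Phi_{10}^k = \Lambda^k\circ\Phi_{01}^k$. Hence
\[
\Lambda^k\circ T\circ\Phi_{01}^k \;=\; T\circ\Lambda^k\circ\Phi_{01}^k \;=\; T\circ\Phi_{10}^k,
\]
so the claim is equivalent to $T\circ\Phi_{10}^k = \Phi_{01}^k\circ T$. The case $k=0$ is trivial, while
\[
T\circ\Phi_{10}^{k+1} \;=\; (T\circ\Phi_{10})\circ\Phi_{10}^k \;=\; \Phi_{01}\circ T\circ\Phi_{10}^k \;=\; \Phi_{01}\circ\Phi_{01}^k\circ T \;=\; \Phi_{01}^{k+1}\circ T
\]
completes the induction, provided the base identity $T\circ\Phi_{10} = \Phi_{01}\circ T$ has been established.

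\textbf{Base identity via arcs.} By Lemma \ref{lemma:10-arcs_01-arcs} the $10$-arc lines of $\bold{u}$ coincide, as unordered pairs of positions, with the $01$-arc lines of $T(\bold{u})$; in particular the innermost arcs match, so the descent positions of $\bold{u}$ equal the ascent positions of $T(\bold{u})$. Consequently $\Phi_{10}$ acting on $\bold{u}$ and $\Phi_{01}$ acting on $T(\bold{u})$ delete exactly the same set of indices. At any surviving index $k$, $(T\bold{u})_k$ equals $u_k$ when $k$ lies on no $10$-arc of $\bold{u}$ and equals $1-u_k$ otherwise, while $T(\Phi_{10}(\bold{u}))_k$ equals $u_k$ or $1-u_k$ according to whether $k$ lies on a $10$-arc of $\Phi_{10}(\bold{u})$. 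Since erasing the innermost $10$-pairs turns the second-round-and-deeper $10$-arcs of $\bold{u}$ into precisely the $10$-arcs of $\Phi_{10}(\bold{u})$, and since $k$ lies on no innermost arc by construction, the two conditions coincide. This yields $T(\Phi_{10}(\bold{u})) = \Phi_{01}(T(\bold{u}))$ and, with the induction above, the lemma.

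The main obstacle is this last bookkeeping step: one must check that the nested $10$-arc hierarchy is stable under removal of its innermost layer, so that the ``second-round-and-deeper'' $10$-arcs of $\bold{u}$ really constitute the full set of $10$-arcs of $\Phi_{10}(\bold{u})$. This is a direct consequence of the recursive definition of the arcs (``adjacent $10$ pairs after ignoring previously used entries''), and it is the only place where a careful unpacking of the definitions is required; every other step is a routine composition of maps based on the relations (\ref{N10_Lambda=N10}) and (\ref{Phi10=LambdaPhi01=Phi01Lambda}).
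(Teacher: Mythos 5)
Your proof is correct and follows essentially the same route as the paper: both reduce $\Lambda^k\circ T\circ\Phi_{01}^k$ to $T\circ\Phi_{10}^k$ using \eqref{N10_Lambda=N10} and \eqref{Phi10=LambdaPhi01=Phi01Lambda}, and then iterate the intertwining relation $T\circ\Phi_{10}=\Phi_{01}\circ T$ of \eqref{TPhi=PhiT}. The only difference is that you spell out why \eqref{TPhi=PhiT} follows from Lemma \ref{lemma:10-arcs_01-arcs} (identical deleted index sets, plus the check that the arc hierarchy of $\Phi_{10}(\bold{u})$ is the old hierarchy with its innermost layer removed, so the flip conditions agree at surviving sites), whereas the paper asserts this as a direct consequence of that lemma; your extra verification is sound.
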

\begin{proof}
{}From \eqref{N10_Lambda=N10} and 
\eqref{Phi10=LambdaPhi01=Phi01Lambda}, we have
\begin{equation}
\Lambda^k\circ T\circ\Phi_{01}^k
= T\circ \left(\Lambda\circ\Phi_{01}\right)^k
= T\circ \Phi_{10}^k.
\label{Lambda^kTPhi01^k=T(LambdaPhi01)^k}
\end{equation}
This relation, together with the following direct consequence of 
Lemma \ref{lemma:10-arcs_01-arcs},
\begin{equation}
\label{TPhi=PhiT}
T\circ\Phi_{10}= \Phi_{01}\circ T,
\end{equation}
then yields the desired result.
\end{proof}

For $\bold{u}\in\mathcal{U}^+$, define $\ell$ as the 
minimal integer such that $\mathcal{N}\left(
\Phi_{01}^{\ell}(\bold{u})\right)= 0$. 
Then we define 
$\lambda_i(\bold{u})$ ($i=1,2,\ldots, \ell$) as 
the number of $i$th 01-arc lines associated with $\bold{u}$, i.e.,
\begin{equation}
\lambda_i(\bold{u}) = \left(\mathrm{asc}\circ\Phi_{01}^{i-1}\right)(\bold{u}).
\label{def:lambda_i}
\end{equation}
The integers $\lambda_1(\bold{u})$, 
$\ldots$, $\lambda_\ell(\bold{u})$ clearly
satisfy $\lambda_1(\bold{u})\geq
\cdots\geq\lambda_\ell(\bold{u})\geq 0$ 
and thus $\lambda(\bold{u})=\left(\lambda_1(\bold{u}),
\ldots\lambda_\ell(\bold{u})\right)$ is a
partition.
\begin{thm}[cf. \cite{TTS,Ariki,YYT}]
\label{thm:invariants0}
For $\bold{u}\in\mathcal{U}^+$, define a partition $\lambda(\bold{u})$ as above. 
Then $\lambda\left(\bold{u}\right)$ is invariant 
under the time-evolution $T$, i.e., 
$\lambda\left(T(\bold{u})\right)=\lambda(\bold{u})$. 
\end{thm}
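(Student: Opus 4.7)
The plan is to show that each part $\lambda_i(T(\bold{u}))$ equals $\lambda_i(\bold{u})$ by repeatedly applying the commutation relations already established, so that the time-evolution $T$ is effectively pushed past all the $\Phi_{01}$'s, after which the surviving $T$ can be absorbed by the $\mathrm{asc}$ functional. First I would record the auxiliary fact that
\begin{equation}
\mathrm{asc}\circ T = \mathrm{asc}.
\end{equation}
This is immediate from Lemma \ref{lemma:10-arcs_01-arcs} (the top-level 10-arcs of $\bold{u}$ are exactly the top-level 01-arcs of $T(\bold{u})$, hence $\mathrm{des}(\bold{u})=\mathrm{asc}(T(\bold{u}))$) combined with \eqref{N10=N01} ($\mathrm{des}=\mathrm{asc}$ on $\mathcal{U}$).

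Next, fix $i\geq 1$ and start from the definition
\begin{equation}
\lambda_i(T(\bold{u}))=\mathrm{asc}\bigl(\Phi_{01}^{i-1}(T(\bold{u}))\bigr).
\end{equation}
I would then invoke Lemma \ref{lemma:Lambda_T_Phi10=Phi10_T} with $k=i-1$ to rewrite $\Phi_{01}^{i-1}\circ T$ as $\Lambda^{i-1}\circ T\circ \Phi_{01}^{i-1}$, followed by $\mathrm{asc}\circ\Lambda=\mathrm{asc}$ from \eqref{N10_Lambda=N10} to delete the $\Lambda^{i-1}$ factor, followed by the preliminary identity $\mathrm{asc}\circ T=\mathrm{asc}$ to delete $T$. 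The chain reads
\begin{equation}
\mathrm{asc}\bigl(\Phi_{01}^{i-1}(T(\bold{u}))\bigr)
=\mathrm{asc}\bigl(T(\Phi_{01}^{i-1}(\bold{u}))\bigr)
=\mathrm{asc}\bigl(\Phi_{01}^{i-1}(\bold{u})\bigr)=\lambda_i(\bold{u}).
\end{equation}

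Once the equality of parts is established, it remains to check that the lengths of the partitions $\lambda(\bold{u})$ and $\lambda(T(\bold{u}))$ coincide; but this follows from the same calculation applied to $\mathcal{N}\circ\Phi_{01}^{\ell}$ in place of $\mathrm{asc}\circ\Phi_{01}^{i-1}$, since $\mathcal{N}$ is likewise $\Lambda$-invariant and $\mathcal{N}\circ T=\mathcal{N}$ (the number of balls is preserved by $T$). The only real subtlety, and the place where I expect to have to be careful, is to make sure that $\bold{u}\in\mathcal{U}^+$ ensures $\Phi_{01}^{i-1}(\bold{u})\in\mathcal{U}$ for every $i\leq \ell$ so that Lemma \ref{lemma:Lambda_T_Phi10=Phi10_T} applies at each step; this is precisely what the inclusion $\mathcal{U}^+\subset\mathcal{U}^{(n)}$ for all $n$ guarantees, and the fact that $T$ and $\Phi_{01}$ restrict to $\mathcal{U}^+$ (already noted after \eqref{Phi10=LambdaPhi01=Phi01Lambda}) ensures that both sides of each commutation identity live in the correct domain.
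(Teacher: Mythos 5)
Your proposal is correct and follows essentially the same route as the paper's own proof: establish $\mathrm{asc}\circ T=\mathrm{asc}$ from Lemma \ref{lemma:10-arcs_01-arcs} and \eqref{N10=N01}, then use Lemma \ref{lemma:Lambda_T_Phi10=Phi10_T} together with $\mathrm{asc}\circ\Lambda=\mathrm{asc}$ to conclude $\lambda_i(T(\bold{u}))=\lambda_i(\bold{u})$ for every $i$. The extra remarks on matching partition lengths and on the domain of applicability are sound but not needed beyond what the paper already records.
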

\begin{proof}
Define $\overline{\lambda}_i(\bold{u})$ ($i=1,2,\ldots,\ell$) as
\begin{equation}
\overline{\lambda}_i(\bold{u}) = \left(\mathrm{asc}\circ\Phi_{01}^{i-1}
\circ T\right)(\bold{u}).
\end{equation}
{}From Lemma \ref{lemma:10-arcs_01-arcs} and relation \eqref{N10=N01} we have 
\begin{equation}
\label{N10=N10_T}
\mathrm{asc}\circ T = \mathrm{des} = \mathrm{asc}.
\end{equation}
Using Lemma \ref{lemma:Lambda_T_Phi10=Phi10_T} and 
the relations \eqref{N10_Lambda=N10}, \eqref{N10=N10_T}, 
we obtain
\begin{equation}
\overline{\lambda}_i(\bold{u})
= \left(\mathrm{asc}\circ\Lambda^{i-1}\circ
   T\circ\Phi_{01}^{i-1}\right)(\bold{u})
= \lambda_i(\bold{u}).
\end{equation}
Thus we have $\overline{\lambda}_i(\bold{u}) = \lambda_i(\bold{u})$ 
for all $i$. 
\end{proof}

We define $\mu$ as the partition conjugate to $\lambda$. 
The partition $\mu$ is of course also invariant under time-evolution
and it coincides, in fact, with the invariants of the BBS 
that were introduced in \cite{TTS} and discussed in \cite{Ariki}.
The proofs of the invariance property in \cite{TTS,Ariki} are based on
the Dyck language.
It is easily seen that the conjugate partition $\mu=
(\mu_1,\mu_2,\ldots,\mu_{\lambda_1})$ gives the lengths of the solitons
that arise asymptotically from the state $\bold{u}$ under the BBS
evolution. 
We shall therefore refer to the partition $\mu$ as  the 
``asymptotic soliton contents'' of $\bold{u}\in\mathcal{U}$. 
\begin{thm}[asymptotic soliton contents \cite{TTS}] 
\label{thm:SolitonContents_Tinfinity}
For $\bold{u}\in\mathcal{U}_N$, let $\mu=
(\mu_1,\mu_2,\ldots,\mu_{\lambda_1})$ be as above, and
denote $\mathrm{Asc}\left(T^k(\bold{u})\right)$ 
by $\left\{a_j(k)\right\}_{j=1,\ldots,N}$ and 
$\mathrm{Des}\left(T^k(\bold{u})\right)$ by 
$\left\{d_j(k)\right\}_{j=1,\ldots,N}$.
Then there exists an integer $K$ such that
\begin{equation}
\begin{aligned}
 d_j(k)-a_j(k) &= \mu_j && (j=1,\ldots,N), \\
a_j(k)-d_{j+1}(k) &\geq \mu_{j+1} && (j=1,\ldots,N-1)
\end{aligned}
\label{d-a=mu,a-d>=mu}
\end{equation}
for all $k\geq K$, that is there exists an instant $t=K$
as of which all solitons are well-separated. 
\end{thm}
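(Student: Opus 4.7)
The plan is to prove the theorem in two stages. First, I would characterise the ``well-separated'' states---those satisfying \eqref{d-a=mu,a-d>=mu}---and show that they form a $T$-stable subset of $\mathcal{U}_N$ on which the theorem is manifest. Second, I would show that for every $\bold{u}\in\mathcal{U}_N$, some iterate $T^k(\bold{u})$ lies in this subset.

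For the first stage, the key observation is that if $\bold{u}$ satisfies \eqref{d-a=mu,a-d>=mu}, then all of its $10$- and $01$-arc lines are local: those associated to the $j$th block of $1$s lie entirely inside that block together with the $\mu_j$ zeros immediately to its right, this last fact being guaranteed by the spacing condition applied at index $j-1$ (with the convention that the gap to the right of the rightmost block is infinite). Locality gives $\lambda_i(\bold{u})=\#\{j:\mu_j\geq i\}$, so that $\mu$ is automatically the conjugate of $\lambda$ as required by Theorem \ref{thm:invariants0}. Moreover, the action of $T$ shifts the $j$th block rigidly by $\mu_j$, and one checks directly that the shifted state again satisfies \eqref{d-a=mu,a-d>=mu}.

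For the second stage---which constitutes the main difficulty---I would argue by induction on the soliton number $N$. The base case $N=1$ is immediate from the first stage. For the inductive step, it suffices to show that after finitely many iterations of $T$ the rightmost block of $T^k(\bold{u})$ has length $\mu_1$ and is preceded by a run of at least $\mu_1$ zeros; once such a configuration is reached, the first-stage argument shows that the rightmost block decouples under further time evolution, and the complementary sub-state evolves as an element of $\mathcal{U}_{N-1}$ with invariant partition $(\mu_2,\ldots,\mu_N)$ (obtained from $\lambda$ by deleting its tallest column), so that the induction hypothesis applies. To produce this configuration, I would analyse the local effect of a single step of $T$ on two neighbouring blocks violating \eqref{d-a=mu,a-d>=mu}: tracking the $10$-arcs through such an ``interaction'', one verifies that the block structure reorganises so that the longer of the two ends up further to the right.

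The main obstacle is the quantitative control needed to conclude that this sorting process terminates in a bounded number of steps. A natural way forward is to introduce a monotone statistic on $\mathcal{U}_N$---for example the word of block lengths read from right to left, compared lexicographically with the sorted target partition $\mu$---and verify that each step of $T$ strictly moves this statistic closer to $\mu$ whenever \eqref{d-a=mu,a-d>=mu} is violated. Since $\lambda$ (and therefore $\mu$) is held fixed under $T$ by Theorem \ref{thm:invariants0}, such a statistic is bounded and must stabilise, yielding the integer $K$ demanded by the theorem.
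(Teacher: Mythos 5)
Your route is genuinely different from the paper's, and it has a real gap at exactly the point you yourself flag as ``the main obstacle'': the termination of the sorting process. The statistic you propose --- the word of block lengths read right to left, compared lexicographically with $\mu$ --- is \emph{not} strictly improved by each application of $T$ on states violating \eqref{d-a=mu,a-d>=mu}. For instance, if a soliton of length $2$ sits far to the left of a soliton of length $1$, the state violates \eqref{d-a=mu,a-d>=mu} (here $d_1-a_1=1\neq\mu_1=2$), yet for many time steps the two blocks translate rigidly without interacting and the block-length word does not change at all; only the positions evolve. Any workable monotone quantity must therefore track positions rather than just lengths, and controlling how the positions evolve is precisely the content of the linearization theorem, which your argument never uses. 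Moreover, the ``local analysis of two neighbouring blocks'' is not really local: three or more solitons can overlap simultaneously, and during an interaction the block decomposition does not consist of two recognisable solitons with the longer one moving right, so the inductive step ``after finitely many iterations the rightmost block has length $\mu_1$ and is preceded by $\mu_1$ zeros'' is asserted but not proved --- and it is essentially the whole theorem. (Your Stage 1 and the decoupling of the rightmost block are salvageable; the fact you would need to make the decoupling precise is that the carrier load never exceeds $\mu_1$, so a gap of $\mu_1$ zeros empties the carrier and the left part then evolves autonomously.)

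The paper takes the opposite logical order: it first proves Theorem \ref{thm:Liniarization} by the elementary arc-line identities (Lemmas \ref{lemma:10-arcs_01-arcs}, \ref{lemma:Lambda_T_Phi10=Phi10_T} and \ref{lemma:rho10=sigma_rho01}), none of which require any asymptotic separation. Since the level-$n$ riggings evolve as $\overline{J}_n=\sigma^n(J_n)$, i.e.\ translate at speed $n$, after finitely many steps the riggings at distinct levels are separated by more than $\mu_{\lambda_i}$, and reconstructing the state by repeated $01$-insertion $\Psi_{01}$ from such separated riggings visibly produces a state satisfying \eqref{d-a=mu,a-d>=mu}. If you wish to keep your inductive scheme, the honest repair is to replace the lexicographic block-length statistic by the riggings themselves and invoke (or reprove) their linear time evolution --- at which point you have rederived the paper's argument.
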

By ``well-separated'' we mean that 
the solitons are ordered by their speeds, fastest on the right. Since
each soliton moves with a speed equal to its length, once this ordering
is established no further soliton interactions will take place. 

An elementary proof of Theorem \ref{thm:SolitonContents_Tinfinity} will 
be given in Section 2.3.

\subsection{Elimination ``with riggings''}
\label{subsec:elimination_with_riggings}
Before considering the relationship with rigged configurations, let us recall the 
notion of so-called \textit{``0-solitons''}, introduced in \cite{YYT}. 
As an example, consider the sequence
\begin{equation}
\begin{array}{rcl@{$\,$}l@{$\,$}l@{$\,$}l@{$\,$}l@{$\,$}l@{$\,$}l@{$\,$}l@{$\,$}l}
\bold{u} &=& 
0\,1\,1 & \arc & 0 & \arc & 0\,1\,1 & \arc & 1 & \arc & 0\,0\cdots,\\
&&      & d_4  &   & d_3  &         & d_2  &   & d_1 & 
\end{array}
\end{equation}
where 
$\{d_1,d_2,d_3,d_4\}=\{14,11,6,3\}$.
Applying $\phi_{14}$ to $\bold{u}$, 
we have $\phi_{14}(\bold{u})=0111001001110100\cdots$, and 
$\mathrm{des}\left(\phi_{14}(\bold{u})\right)=\mathrm{des}(\bold{u})=4$. 
On the other hand, applying $\phi_{11}$ to $\bold{u}$, 
we have $\phi_{11}(\bold{u})=0111001001111000\cdots$, and thus
$\mathrm{des}\left(\phi_{11}(\bold{u})\right)=3<\mathrm{des}(\bold{u})=4$. 
Furthermore, the 10 eliminations at $d_3=6$ and $d_4=3$ give 
$\mathrm{des}\left(\phi_{6}(\bold{u})\right)=3$ and 
$\mathrm{des}\left(\phi_{3}(\bold{u})\right)=4$, respectively. 
Let us put vertical lines in 
$\Phi_{10}(\bold{u})=0110011100\cdots$ 
at the positions that correspond to the 10 pairs at 
$d_2=11$ and $d_3=6$, where the corresponding 
10-eliminations lower the descent number:
\begin{equation}
0\,1\,1\,0\,|\,0\,1\,1\,|\,1\,0\,0\,\cdots.
\label{e.g.:0-soliton}
\end{equation}
The vertical lines in \eqref{e.g.:0-soliton} are examples of 
``0-solitons'' in the sense of \cite{YYT}. 
Although the left-most 0-soliton in \eqref{e.g.:0-soliton} 
lies between $(\Phi_{10}\left(\bold{u})\right)_3$ and 
$\left(\Phi_{10}(\bold{u})\right)_4$, we shall say that it is
located at position 3. Adhering to the same convention, the other
0-soliton is then located at position 6. 
For a sequence $\bold{u}\in\mathcal{U}_N$ with 
$\mathrm{Des}(\bold{u})=\left\{d_1>\cdots >d_N\right\}$ and 
$\mathrm{Asc}(\bold{u})=\left\{a_1>\cdots >a_N\right\}$, 
such 0-solitons may appear if 
$d_j=a_j+1$ (for some $j=1,\ldots,N$) or $d_j=a_{j-1}-1$ (for some $j=2,\ldots,N$).
Remark however that these are necessary, but not always sufficient 
conditions for the soliton number to change under $10$-elimination. 
For example, consider the following sequence with
$\{a_1,a_2\}=\{4,1\}$ and $\{d_1,d_2\}=\{5,3\}$: 
\begin{equation}
\begin{array}{ccccccccccc}
\bold{u}(4,1;5,3) &=& 0 & 0 & 1 & 1 & 0 & 1 & 0 & 0 & \cdots.\\
         & &   &a_2&&d_2&a_1&d_1&   &   & 
\end{array}
\end{equation}
Applying $\phi_5$ and $\phi_3$ successively, we have
\begin{equation}
\begin{aligned}
\bold{u}(4,1;5,3) &= 
\,0\;0\;1\arc\:\arc 0\,\cdots, 
\qquad \mathrm{des}\left(\bold{u}(4,1;5,3)\right)=2,\\
\phi_5\left(\bold{u}(4,1;5,3)\right) &= 
\,0\;0\;1\arc |\:0\,\cdots, \qquad
\left(\mathrm{des}\circ\phi_5\right)\left(\bold{u}(4,1;5,3)\right)=1,\\
\left(\phi_3\circ\phi_5\right)\left(\bold{u}(4,1;5,3)\right) &= 
\,0\;0\;1\,|\,0\,\cdots, \qquad
\left(\mathrm{des}\circ\phi_3\circ\phi_5\right)\left(\bold{u}(4,1;5,3)\right)=1,
\end{aligned}
\end{equation}
{}from which it is clear that although
$d_2=a_1-1=3$, the elimination $\phi_3$ does not give rise to a 0-soliton.

To record the positions of 0-solitons 
arising from $\bold{u}\in\mathcal{U}_N$ by the elimination process, we
define a series of increasing integer sequences 
$I_0=\emptyset\subseteq I_1\subseteq\cdots\subseteq I_N$. 
Setting $\bold{u}_0 = \bold{u}$ 
and $\left\{d_1,\ldots,d_N\right\}=\mathrm{Des}(\bold{u})$, 
we define $\bold{u}_1,\ldots,\bold{u}_N$ and 
$\emptyset=I_0\subseteq I_1\subseteq\cdots\subseteq I_N\subseteq \mathrm{Des}(\bold{u})$
by the following recursion relations ($j=1,2,\ldots,N$): 
\begin{align}
\bold{u}_{j}&=\phi_{d_{j}}(\bold{u}_{j-1}), \\
I_{j} & = 
\begin{cases}
I_{j-1}\cup\left\{d_j\right\} & 
\mbox{if \ } \mathrm{des}(\bold{u}_{j-1})>\mathrm{des}(\bold{u}_{j}),\\
I_{j-1} & 
\mbox{if \ } \mathrm{des}(\bold{u}_{j-1})=\mathrm{des}(\bold{u}_{j}),
\end{cases}
\label{recusionRel_Ij}
\end{align}
i.e., we add position data whenever a 10-pair gives rise to a
0-soliton.
We would like to emphasize that the $d_j$ used in the
recurrence \eqref{recusionRel_Ij} are
defined on the original binary sequence $\bold{u}_0=\bold{u}$, 
\textit{not} on $\bold{u}_j$ ($j\geq 1$).

We then prepare a piecewise-linear function 
$f_{\bm{c}}:\mathbb{Z}\to\mathbb{Z}$ associated with 
an integer sequence 
$\bm{c}= \{c_1>\dots>c_N\geq 0\}\in\mathbb{Z}^{N}$. 
For $n<0$, we set $f_{\bm{c}}(n) = n$. 
For $n\geq 0$, $f_{\bm{c}}(n)$ is obtained from $f_{\bm{c}}(n-1)$ as 
\begin{equation}
f_{\bm{c}}(n) = 
\begin{cases}
f_{\bm{c}}(n-1) & 
(n\in\left\{c_1,\ldots,c_N,c_1+1,\ldots,c_N+1\right\}),\\
f_{\bm{c}}(n-1)+1 & (\text{otherwise}).
\end{cases}
\label{def:f_c}
\end{equation}
An example of the action of this map (for $N=4$, $\bm{c}=\{11,8,6,2\}$) is given in Figure
\ref{fig:eg_fc(n)}. 
\begin{figure}[htbp]
\begin{equation*}
\begin{array}{cccccccccccccccccccc}
&&& & & & c_4 & & & & c_3 & & c_2 & & & c_1
\\
n &:& \cdots & -1 &
0&1&2&3&4&5&6&7&8&9&10&11&12&13&14&\cdots\\
f_{\bm{c}}(n) & : & \cdots & -1 &
0&1&1&1&2&3&3&3&3&3&4&4&4&5&6&\cdots
\end{array}
\end{equation*}
\caption{Example of the renumbering 
in \eqref{def:f_c} ($N=4$, $c_1=11$, $c_2=8$, 
$c_3=6$, $c_4=2$)}
\label{fig:eg_fc(n)}
\end{figure}

For $\bold{u}\in\mathcal{U}_N$, 
let $I_N=\left\{i_1>i_2>\cdots\right\}$ 
be the integer sequence defined above. 
Define a map $\rho_{10}:\mathcal{U}_N\to \mathbb{Z}_{\geq 0}^{|I_N|}$ as 
\begin{equation}
\rho_{10}\left(\bold{u}\right)
=f_{\mathrm{Des}(\bold{u})}\left(I_N\right), 
\end{equation}
where we use the notation
$f_{\bm{c}}\left(I_N\right)=
\left\{f_{\bm{c}}(i_1)\geq f_{\bm{c}}(i_2)\geq\cdots\right\}$. 

The map $\Phi_{10}$ might seem irreversible, but in fact, one can
reconstruct the original sequence $\bold{u}$ from 
$\Phi_{10}(\bold{u})$ and $\rho_{10}(\bold{u})$.
To this end, we introduce 
$\psi^{10}_{n}:\mathcal{U}\to\mathcal{U}$ (insertion of 
a $10$ pair between $(\bold{u})_n$ and $(\bold{u})_{n+1}$) as 
\begin{equation}
\left(\psi^{10}_{n}\left(\bold{u}\right)\right)_k := 
\begin{cases}
\left(\bold{u}\right)_k & (k\leq n),\\
1 & (k=n+1),\\
0 & (k=n+2),\\
\left(\bold{u}\right)_{k-2} & (k\geq n+3).
\end{cases}
\end{equation}
Given $\bold{u}\in\mathcal{U}_N$ 
and a non-increasing integer sequence 
$J=\left\{j_1\geq\cdots\geq j_{\ell}\geq 0\right\}$, 
we define another non-increasing sequence 
$I=\left\{i_1\geq\cdots\geq i_{N+\ell}\geq 0\right\}$
by reordering the concatenation of 
$\mathrm{Des}(\bold{u})$ and $J$.
We then define $\Psi_{10}(\bold{u},J)$ as
\begin{equation}
\Psi_{10}\left(\bold{u},J\right)
= \left(\psi^{10}_{i_1}\circ\cdots\circ\psi^{10}_{i_{N+\ell}}\right)
\left(\bold{u}\right).
\end{equation}
It is obvious from the definitions that $\Psi_{10}\left(
\Phi_{10}(\bold{u}), \rho_{10}(\bold{u})
\right) = \bold{u}$ for all $\bold{u}\in\mathcal{U}$. 
This means that $\left(\Phi_{10}(\bold{u}),\rho_{10}(\bold{u})\right)$ 
carries enough information to reconstruct the original data.
We denote by 
\begin{equation}
\label{dfn:Ji(10)}
J_i^{(10)} =\left\{J^{10}_{i,1}\geq J^{10}_{i,2}\geq\cdots\right\}
= \left(\rho_{10}\circ\Phi_{10}^{i-1}\right)(\bold{u})
\quad (i=1,2,\ldots),
\end{equation}
the non-increasing integer sequence 
that labels the positions of 0-solitons in $\Phi_{10}^{i}(\bold{u})$, 
and call it the ``$i$th $10$-rigging''.

The map $\rho_{01}$ can be defined in the same manner. 
For $\bold{u}\in\mathcal{U}_N$,
define a series of increasing integer sequences 
$I'_0=\emptyset\subseteq I'_1\subseteq\cdots\subseteq I'_N$
as follows; 
Setting $\bold{u}'_0 = \bold{u}$ 
and $\left\{a_1,\ldots,a_N\right\}=\mathrm{Asc}(\bold{u})$, 
we define $\bold{u}'_1,\ldots,\bold{u}'_N$ and 
$\emptyset=I'_0\subseteq I'_1\subseteq\cdots\subseteq I'_N\subseteq 
\mathrm{Asc}(\bold{u})$
by the following recursion relations
($j=1,2,\ldots,N$): 
\begin{align}
\bold{u}'_{j}&=\phi_{a_{j}}(\bold{u}'_{j-1}), 
\label{recursion:u'}\\
I'_{j} & = 
\begin{cases}
I'_{j-1}\cup\left\{a_j\right\} & 
\mbox{if \ } \mathrm{asc}(\bold{u}'_{j-1})>\mathrm{asc}(\bold{u}'_{j}),\\
I'_{j-1} & 
\mbox{if \ } \mathrm{asc}(\bold{u}'_{j-1})=\mathrm{asc}(\bold{u}'_{j}).
\end{cases}
\label{recusionRel_I'j}
\end{align}
Note that, as was the case for the $d_j$ in \eqref{recusionRel_Ij}, 
the $a_j$ in \eqref{recusionRel_I'j} 
are defined on the original binary sequence $\bold{u}'_0=\bold{u}$, 
\textit{not} on $\bold{u}'_j$ ($j\geq 1$).
The map $\rho_{01}:\mathcal{U}_N\to\mathbb{Z}_{\geq 0}^{|I'_N|}$ is defined by
\begin{equation}
\rho_{01}\left(\bold{u}\right)
=f_{\mathrm{Asc(\bold{u})}}\left(I'_N\right).
\end{equation}
The 01-insertion map $\Psi_{01}$ can be defined analogously, such that
$\Psi_{01}\left(\Phi_{01}(\bold{u}), \rho_{01}(\bold{u})
\right) = \bold{u}$ for all $\bold{u}\in\mathcal{U}^{(1)}$. 
We define the  ``$i$th $01$-rigging'' as 
\begin{equation}
\label{dfn:Ji(01)}
 J_i^{(01)} = \left(\rho_{01}\circ\Phi_{01}^{i-1}\right)(\bold{u}). 
\end{equation}
As is to be expected, the $i$th $01$-rigging $J_i^{(01)}$ is in fact related to 
$J_i^{(10)}$. To see this, 
we must first study the properties of $\rho_{01}$ 
and $\rho_{10}$. Clearly,
\begin{equation}
\rho_{01}\circ\Lambda = \sigma\circ\rho_{01},\quad
\rho_{10}\circ\Lambda = \sigma\circ\rho_{10},
\label{rho_Lambda=sigma_rho}
\end{equation}
where the map $\sigma$ defines a uniform upshift on integer sequences
$\left\{n_1,n_2,\ldots\right\}$:
\begin{equation}
\sigma:\left\{n_1,n_2,\ldots\right\} \mapsto
\left\{n_1+1,n_2+1,\ldots\right\}. 
\end{equation}
Furthermore, the following lemma plays a crucial role in our approach.
\begin{lemma}
\label{lemma:rho10=sigma_rho01}
$\rho_{10} = \sigma\circ\rho_{01}$ on $\mathcal{U}$.
\end{lemma}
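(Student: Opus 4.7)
My strategy is to factor the claim through the shift operator $\Lambda$ and use the already-established identities \eqref{Phi10=LambdaPhi01=Phi01Lambda} and \eqref{rho_Lambda=sigma_rho}. Since the latter gives $\sigma\circ\rho_{01} = \rho_{01}\circ\Lambda$, the claim is equivalent to $\rho_{10}(\bold{u}) = \rho_{01}(\Lambda(\bold{u}))$. This reformulation is natural because \eqref{Phi10=LambdaPhi01=Phi01Lambda} gives $\Phi_{10}(\bold{u}) = \Phi_{01}(\Lambda(\bold{u}))$: the two elimination processes produce the very same terminal sequence, so one expects their $0$-soliton data to coincide once the overall shift by $\Lambda$ is accounted for.

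To make this precise, first I would derive closed-form expressions for $f_{\mathrm{Des}(\bold{u})}$ and $f_{\mathrm{Asc}(\bold{u})}$ at the interlacing points. A direct count of excluded positions in $[0,d_j]$ (respectively $[0,a_j]$) gives $f_{\mathrm{Des}(\bold{u})}(d_j) = d_j - (2N-2j+1)$ and $f_{\mathrm{Asc}(\bold{u})}(a_j) = a_j - (2N-2j+1)$, and the analogue for $\Lambda(\bold{u})$ is obtained from these by a shift of $+1$. Next I would analyze when a single $\phi_{d_j}$ (resp.\ $\phi_{\tilde{a}_j}$, with $\tilde{a}_j=a_j+1$ the corresponding ascent of $\Lambda(\bold{u})$) strictly decreases the descent (resp.\ ascent) count. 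The interlacing \eqref{InterlacingProperty} forces $d_i\ge d_j+2$ for $i<j$, so positions $\le d_j+1$ of $\bold{u}_{j-1}$ coincide with those of $\bold{u}$; the drop at $d_j$ is then governed purely by the original block lengths $d_j-a_j$ and $a_{j-1}-d_j$, one of them having to equal $1$. A case split on which block is of length~$1$ then sets up a bijection between $I_N(\bold{u})$ and $I'_N(\Lambda(\bold{u}))$, and substituting into the explicit $f$-formulas shows matched indices produce equal values.

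\textbf{Main obstacle.} The most delicate step is tracking how the successive $\phi_{d_i}$'s for $i<j$ alter the content of $\bold{u}_{j-1}$ at position $d_j+2$, since the ``join'' produced by $\phi_{d_j}$ depends on what lies there --- so the naive ``original-block-lengths'' characterization of drops has to be verified rather than assumed. When two drop-causing events collide at the same $d_j$ (both $d_j-a_j=1$ and $a_{j-1}-d_j=1$ holding simultaneously), one must avoid double-counting in the bijection above. A secondary subtlety is the boundary case $a_N=0$: here $\phi_{a_N}$ pushes $\bold{u}$ out of $\mathcal{U}$, producing an entry in $\rho_{01}(\bold{u})$ whose image under $\sigma$ must be matched correctly against the conventional ``$0$-rigging'' on the $\rho_{10}$ side via the shift by $\Lambda$.
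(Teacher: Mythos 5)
Your overall architecture --- explicit formulas for $f_{\mathrm{Des}(\bold{u})}$ and $f_{\mathrm{Asc}(\bold{u})}$ at the interlacing points, plus a bijection between the recorded descents $I_N$ and the recorded ascents $I'_N$ with a uniform offset of $1$ --- is essentially the paper's, and the reformulation $\rho_{10}=\rho_{01}\circ\Lambda$ via \eqref{rho_Lambda=sigma_rho} and \eqref{Phi10=LambdaPhi01=Phi01Lambda} is correct but buys nothing: the ascents of $\Lambda(\bold{u})$ sit at $a_j+1$ while the descents of $\bold{u}$ sit at $d_j$, so the combinatorial matching problem is unchanged.

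The genuine gap is your membership criterion for $I_N$. You assert that the drop at $d_j$ ``is governed purely by the original block lengths $d_j-a_j$ and $a_{j-1}-d_j$, one of them having to equal $1$,'' and you build the bijection on a case split over which block has length $1$. That condition is only \emph{necessary}, as the paper itself stresses with the example $\bold{u}(4,1;5,3)=00110100\cdots$: there $a_1-d_2=1$, yet $\phi_3$ applied after $\phi_5$ does not lower the descent number, so $d_2\notin I_N$ and your bijection would over-count. The failure occurs for exactly the reason you list as the ``main obstacle'': when $d_{j-1}=d_j+2$, i.e.\ when $d_j$ lies inside a longer alternating run $\cdots 1010\cdots$, the entry of $\bold{u}_{j-1}$ at position $d_j+2$ is no longer $u_{d_j+2}$, so no criterion local to the two blocks adjacent to $d_j$ can decide the drop. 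Flagging this obstacle is not the same as resolving it. The missing idea is to group the descents into \emph{maximal} alternating runs $x(10)^n y$ with $x\in\{00,1\}$ and $y\in\{0,11\}$ (the paper's Cases I--IV): within one run the eliminations proceed right to left without outside interference, the number of recorded descents (resp.\ ascents) is $n$, $n$, $n-1$ or $n$ according to the boundary type, and the matching is $d_i\leftrightarrow a_i$ or $d_i\leftrightarrow a_{i-1}$ depending on the case --- it is then Lemma \ref{Lemma:fb-fa}, evaluated at $n=d_i$, that delivers the uniform offset of $1$. Without this run-level analysis the proof does not go through. (Your worry about $a_N=0$ is, by contrast, harmless: $f_{\bm{a}}(a_N)=a_N-1=-1$ there, and $\sigma$ restores it to $0=f_{\bm{d}}(d_N)$ when $d_N=1$.)
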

To prove Lemma \ref{lemma:rho10=sigma_rho01}, 
we must prepare one more lemma. 
\begin{lemma}\label{Lemma:fb-fa}
Suppose $\bm{a}=(a_1,\ldots,a_N)$ and
$\bm{d}=(d_1,\ldots,d_N)\in(\mathbb{Z}_{\geq 0})^N$
satisfy the 
interlacing property \eqref{InterlacingProperty}. 
Then
\begin{equation}
f_{\bm{d}}(n)-f_{\bm{a}}(n)=
\begin{cases}
2 & (a_k<n<d_k\, (k=1,\ldots,N)),\\
1 & (n=a_k \text{~or~} d_k\, (k=1,\ldots,N)),\\
0 & (n<a_N,\; d_k<n<a_{k-1}\, (k=2,\ldots,N),\; d_1<n).
\end{cases}
\end{equation} 
\end{lemma}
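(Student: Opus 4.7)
The plan is to reformulate $f_{\bm{c}}$ as a counting function and then perform a short case analysis. Unrolling the defining recurrence \eqref{def:f_c} from the initial value $f_{\bm{c}}(-1)=-1$, one obtains the closed form $f_{\bm{c}}(n)=n-|C_{\bm{c}}\cap[0,n]|$ for all $n\geq 0$, where $C_{\bm{c}}:=\bigcup_{k=1}^{N}\{c_k,c_k+1\}$. Writing $A:=C_{\bm{a}}$ and $D:=C_{\bm{d}}$, this reduces the lemma to showing that
\[
f_{\bm{d}}(n)-f_{\bm{a}}(n) = |A\cap[0,n]|-|D\cap[0,n]|
\]
takes the values $0,1,2$ as stated on the three ranges in the statement.

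To evaluate the right-hand side, I introduce the monotone auxiliary counts $\sigma_{\bm{a}}(n):=|\{k:a_k\leq n\}|$ and $\sigma_{\bm{d}}(n):=|\{k:d_k\leq n\}|$, so that $|A\cap[0,n]|=\sigma_{\bm{a}}(n)+\sigma_{\bm{a}}(n-1)$ and likewise for $D$. Setting $h:=\sigma_{\bm{a}}-\sigma_{\bm{d}}$, this gives $f_{\bm{d}}(n)-f_{\bm{a}}(n)=h(n)+h(n-1)$. The interlacing property \eqref{InterlacingProperty} orders the $2N$ integers as $a_N<d_N<a_{N-1}<d_{N-1}<\cdots<a_1<d_1$ and makes the jumps of $\sigma_{\bm{a}}$ and $\sigma_{\bm{d}}$ alternate: $\sigma_{\bm{a}}$ jumps at each $n=a_k$ and $\sigma_{\bm{d}}$ at each $n=d_k$. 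Tracking this alternation shows that $h(n)=1$ precisely on $\bigcup_{k=1}^{N}[a_k,d_k-1]$ and $h(n)=0$ everywhere else (including for $n<0$).

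The three cases of the lemma now fall out immediately. If $a_k<n<d_k$, then both $n$ and $n-1$ lie in $[a_k,d_k-1]$, yielding the sum $2$. If $n=a_k$, then $h(n)=1$, and, with the convention $d_{N+1}=-\infty$, the interlacing gives $d_{k+1}\leq a_k-1<a_k$, so $n-1$ lies in a gap between consecutive intervals and $h(n-1)=0$; symmetrically, if $n=d_k$ then $h(n)=0$ while $h(n-1)=h(d_k-1)=1$, since $d_k-1\in[a_k,d_k-1]$. In each of the remaining ranges $n<a_N$, $d_k<n<a_{k-1}$, or $d_1<n$, both $n$ and $n-1$ fall outside every interval $[a_{\ell},d_{\ell}-1]$, and the sum vanishes.

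The only mildly delicate point is the possibility of coincidences among the integers $a_k,a_k+1,d_k,d_k+1$, as happens for example when $d_k=a_k+1$ or $a_{k-1}=d_k+1$; in such cases $A$ and $D$ genuinely share elements. The passage to $h$, which counts indices rather than set elements, bypasses this issue entirely, so no separate treatment is needed.
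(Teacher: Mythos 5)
Your proof is correct. It is worth noting that the paper does not actually prove Lemma \ref{Lemma:fb-fa} at all: it merely remarks that the lemma ``is best explained on an example'' and points to Figure \ref{Example:Lemma:fb-fa}, so your argument supplies details the authors leave implicit. The route you take is a natural one: the closed form $f_{\bm{c}}(n)=n-|C_{\bm{c}}\cap[0,n]|$ follows immediately from unrolling \eqref{def:f_c}, and the reduction to the step function $h=\sigma_{\bm{a}}-\sigma_{\bm{d}}$ is clean. The identity $|A\cap[0,n]|=\sigma_{\bm{a}}(n)+\sigma_{\bm{a}}(n-1)$ does require that the pairs $\{a_k,a_k+1\}$ be pairwise disjoint, which holds because the interlacing \eqref{InterlacingProperty} forces $a_{k-1}\geq a_k+2$; you might make this one-line justification explicit, since it is the only place where disjointness (as opposed to mere set membership) is used, and it is a different issue from the $A$-versus-$D$ coincidences you address at the end. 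The remaining case analysis of $h(n)+h(n-1)$ is complete and matches the three ranges in the statement, including the boundary values $n=a_k$ and $n=d_k$. One cosmetic caution: the paper already uses $\sigma$ for the upshift map on integer sequences in \eqref{rho_Lambda=sigma_rho}, so if this proof were inserted into the text the counting functions $\sigma_{\bm{a}},\sigma_{\bm{d}}$ should be renamed to avoid a clash.
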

Lemma \ref{Lemma:fb-fa} is best explained on an example
(e.g. Figure \ref{Example:Lemma:fb-fa}).
\begin{figure}[htbp]
\[
\begin{array}{ccccccccccccccccccc}
n&: &0&1&2&3&4&5&6&7 & 8 &9& \mbox{\small$10$}
&\mbox{\small$11$} &\mbox{\small$12$} &\mbox{\small$13$}
&\mbox{\small$14$}& \mbox{\small$15$}& \cdots\\
 & & & & \mbox{\small$a_4$} & \mbox{\small$d_4$} & 
 & \mbox{\small$a_3$} & & & \mbox{\small$d_3$}  & \mbox{\small$a_2$}
 & \mbox{\small$d_2$} & \mbox{\small$a_1$} & & \mbox{\small$d_1$} & & &
\\
\bold{u}&: & 0 & 0 & 0 & 1 & 0 & 0 & 1 & 1 & 1 
& 0 & 1 & 0 & 1 & 1 & 0 & 0 & \cdots\\
f_{\bm{a}}(n) &:& 0 & 1 & 1 & 1 & 2 & 2 & 2 & 3 & 4 & 4 & 4 & 4 & 4 
& 5 & 6 & 7 & \cdots\\
f_{\bm{d}}(n) &:& 
0 & 1 & 2 & 2 & 2 & 3 & 4 & 5 & 5 & 5 & 5 & 5 & 6 & 6 & 6 & 7 & \cdots
\end{array}
\]
\caption{Example of Lemma \ref{Lemma:fb-fa}}
\label{Example:Lemma:fb-fa}
\end{figure}

\begin{proof}[Proof of Lemma \ref{lemma:rho10=sigma_rho01}]
 Given $\bold{u}\in\mathcal{U}$, we define 
$\bm{a}=\{a_1>a_2>\cdots\}=\mathrm{Asc}(\bold{u})$, 
$\bm{d}=\{d_1>d_2>\cdots\}=\mathrm{Des}(\bold{u})$. 
Successive 10-patterns in $\bold{u}$ can be categorized into $4$ types
($n\in\mathbb{N}$):
\begin{quote}
(I) \ $00(10)^n0$ \qquad
(II) \ $00(10)^n11$ \qquad
(III) \ $1(10)^n0$ \qquad
(IV) \ $1(10)^n11$.
\end{quote}

\noindent\underline{Case I: $00(10)^n0=0(01)^n00$}

We assume the position of the leftmost ``01'' is $a_k$. 
\begin{equation}
\begin{array}{c@{}c@{}c@{}c}
& a_k & & \\
0 & 0 & (10)^n & 0
\end{array}
\begin{array}{c} \\ = \end{array}
\begin{array}{cccccccccccc}
 & a_k & d_k & a_{k-1} & d_{k-1} & & \cdots & & & d_{k-n+1}& & \\
0 & 0 & 1 & 0 & 1 & 0 & \cdots & 1 & 0 & 1 & 0 & 0
\end{array}
\end{equation}
Applying the 10-eliminations
$\phi_{d_k}\circ\cdots\circ\phi_{d_{k-n+1}}$, 
one sees that 
$I_N\supseteq\left\{d_{k},d_{k-1},\ldots,d_{k-n+1}\right\}$.
Similarly we have $I'_N\supseteq
\left\{a_{k},a_{k-1},\ldots,a_{k-n+1}\right\}$. 
It follows from Lemma \ref{Lemma:fb-fa} that 
$f_{\bm{d}}(d_k)-f_{\bm{a}}(d_k)=1$, and 
\begin{equation}
\begin{aligned}
f_{\bm{a}}(d_k) &= f_{\bm{a}}(a_k)=f_{\bm{a}}(a_{k-1})=\cdots =
f_{\bm{a}}(a_{k-n+1}),\\
f_{\bm{d}}(d_k) &= f_{\bm{d}}(d_{k-1})=\cdots= f_{\bm{d}}(d_{k-n+1}).
\end{aligned}
\end{equation}
{}from \eqref{def:f_c}. Thus we have
\begin{equation}
\left\{f_{\bm{d}}(d_k),f_{\bm{d}}(d_{k-1}),\ldots,f_{\bm{d}}(d_{k-n+1})\right\}=\left\{f_{\bm{a}}(a_k),f_{\bm{a}}(a_{k-1}),\ldots,f_{\bm{a}}(a_{k-n+1})\right\}
+\{1,1,\ldots,1\}.
\end{equation}

\noindent\underline{Case II: $00(10)^n11 = 0(01)^{n+1}1$}

As above, we assume the position of the leftmost ``01'' is $a_k$. 
\begin{equation}
\begin{array}{c@{}c@{}c@{}c@{}c@{}c}
 & a_k & & & \\
0 & 0 & (10)^n & 1 & 1
\end{array}
\begin{array}{c} \\ = \end{array}
\begin{array}{ccccccccccccc}
& a_k & d_k & a_{k-1} & d_{k-1} &  & \cdots & & & d_{k-n+1}& 
a_{k-n}  & & \\
0 & 0 & 1 & 0 & 1 & 0 & \cdots & 1 & 0 & 1 & 0 & 1 & 1
\end{array}
\end{equation}
In this case, 
\begin{equation}
I_N\supseteq\left\{
d_k,d_{k-1},\ldots,d_{k-n+1}
\right\},\quad 
I'_N\supseteq\left\{
a_{k-1},a_{k-2},\ldots,a_{k-n}
\right\} \quad \left(I'_N\not\ni a_k\right)
\end{equation}
and 
\begin{equation}
\begin{aligned}
f_{\bm{a}}(d_k)
&=f_{\bm{a}}(a_k)=f_{\bm{a}}(a_{k-1})=\cdots = f_{\bm{a}}(a_{k-n}),\\
f_{\bm{d}}(d_k)&=f_{\bm{d}}(d_{k-1})=\cdots = f_{\bm{d}}(d_{k-n+1}).
\end{aligned}
\end{equation}
{}From Lemma \ref{Lemma:fb-fa}, we have
$f_{\bm{d}}(d_k)-f_{\bm{a}}(d_k)=1$. 
It follows that 
\begin{equation}
\left\{
f_{\bm{d}}(d_{k}),f_{\bm{d}}(d_{k-1}),\ldots,f_{\bm{d}}(d_{k-n+1})\right\}=\left\{f_{\bm{a}}(a_{k-1}),f_{\bm{a}}(a_{k-2}),\ldots,f_{\bm{a}}(a_{k-n})
\right\}+\{1,1,\ldots,1\}.
\end{equation}
The remaining two cases can be proved in similar way.
\end{proof}

Now we can explain the relation between $J^{10}_i$ and $J^{01}_i$.
\begin{thm}
\label{thm:J10=s^i(J01)}
$J^{10}_i=\sigma^i\left(J^{01}_i\right)$ 
($i=1,2,\ldots$).
\end{thm}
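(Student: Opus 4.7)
The plan is to reduce the statement to the two tools already at our disposal: Lemma \ref{lemma:rho10=sigma_rho01}, which relates $\rho_{10}$ and $\rho_{01}$ on a single input via a shift $\sigma$, and the commutation relations \eqref{Phi10=LambdaPhi01=Phi01Lambda} and \eqref{rho_Lambda=sigma_rho} between $\Lambda$, $\Phi_{01}$, and $\Phi_{10}$. Modulo well-definedness (we need $\bold{u}\in\mathcal{U}^{(i)}$ so that $\Phi_{01}^{i-1}(\bold{u})$ still lies in $\mathcal{U}$), the identity should fall out of a direct computation.

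First, I would start from the definition \eqref{dfn:Ji(10)} and peel off a single $\sigma$ by applying Lemma \ref{lemma:rho10=sigma_rho01} to the sequence $\Phi_{10}^{i-1}(\bold{u})$:
\begin{equation*}
J_i^{(10)} \;=\; \rho_{10}\circ\Phi_{10}^{i-1}(\bold{u}) \;=\; \sigma\circ\rho_{01}\circ\Phi_{10}^{i-1}(\bold{u}).
\end{equation*}
Since $\Lambda\circ\Phi_{01}=\Phi_{01}\circ\Lambda$ by \eqref{Phi10=LambdaPhi01=Phi01Lambda}, a straightforward induction on $i$ gives $\Phi_{10}^{i-1}=\Lambda^{i-1}\circ\Phi_{01}^{i-1}$. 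A second induction, using $\rho_{01}\circ\Lambda=\sigma\circ\rho_{01}$ from \eqref{rho_Lambda=sigma_rho}, yields $\rho_{01}\circ\Lambda^{i-1}=\sigma^{i-1}\circ\rho_{01}$. Combining these two reductions,
\begin{align*}
J_i^{(10)}
&= \sigma\circ\rho_{01}\circ\Lambda^{i-1}\circ\Phi_{01}^{i-1}(\bold{u}) \\
&= \sigma\circ\sigma^{i-1}\circ\rho_{01}\circ\Phi_{01}^{i-1}(\bold{u})
 = \sigma^i\bigl(J_i^{(01)}\bigr),
\end{align*}
where the last equality is the definition \eqref{dfn:Ji(01)} of the $i$th $01$-rigging.

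I do not expect a genuine obstacle: once Lemma \ref{lemma:rho10=sigma_rho01} is in hand, the theorem is essentially a bookkeeping identity built from commuting $\Lambda$ past $\rho_{01}$ and past $\Phi_{01}$. The only point deserving care is that every composition above really makes sense on the input $\bold{u}$; this is guaranteed precisely by asking $\bold{u}\in\mathcal{U}^{(i)}$, which is the natural domain on which $J_i^{(01)}$ itself is defined, and one observes in passing that the restrictions of $\Lambda$, $\Phi_{01}$ and $\Phi_{10}$ to $\mathcal{U}^+$ remain compatible with this iteration.
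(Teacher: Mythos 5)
Your proposal is correct and follows essentially the same route as the paper: the paper's own proof simply states that the result is a direct consequence of the definitions of the riggings, the relations $\Phi_{10}=\Lambda\circ\Phi_{01}=\Phi_{01}\circ\Lambda$ and $\rho_{01}\circ\Lambda=\sigma\circ\rho_{01}$, and Lemma \ref{lemma:rho10=sigma_rho01}, which is exactly the computation you carry out explicitly. Your version just writes out the bookkeeping the paper leaves implicit, including the domain remark, so there is nothing to change.
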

\begin{proof}
The desired relation is a direct consequence of 
the definitions \eqref{dfn:Ji(10)}, \eqref{dfn:Ji(01)}, 
the relations \eqref{Phi10=LambdaPhi01=Phi01Lambda}, 
\eqref{rho_Lambda=sigma_rho}, 
and Lemma \ref{lemma:rho10=sigma_rho01}. 
\end{proof}

\subsection{Rigged configurations and linearization}
In order to make this paper self-contained, we briefly review 
the definition of rigged configurations for the $\mathfrak{sl}_2$-case, 
following \cite{KOSTY,Reynolds,SchillingReview}. 
Consider a partition $\mu=\left(\mu_1,\ldots,\mu_L\right)$ and 
its conjugate $\lambda={}^t\mu=(\lambda_1, \ldots, \lambda_\ell)$, where $\ell=\mu_1$ (and $\lambda_1=L$). Define 
$m_j=\lambda_j-\lambda_{j+1}$ ($j=1,2,\ldots,\ell$), with $\lambda_{\ell+1}=0$, i.e., $m_j$ counts the number of parts of size $\mu_{1+\lambda_{j+1}}$ in the partition $\mu$. 
A rigging associated to a partition $\mu$ is a collection of (collections of)
integers $J=\{J_1,J_2,\ldots,J_\ell\}$, with $J_i=\{J_{i,1},\ldots,J_{i,m_i}\}$ 
$(i=1,\ldots,\ell)$, as shown in Figure \ref{fig:rigged_config}. 

\begin{figure}[htbp]
\begin{center}
\begin{tikzpicture}
\draw[thick] (0,0) rectangle (0.5,6);
  \node (lambda1) at (0.25,5.2) {\rotatebox{90}{\scriptsize $\lambda_1$}};
  \coordinate (B1) at (0.25,0);
  \coordinate (T1) at (0.25,6);
    \draw (0.4,1.1) node[anchor=west] {\scriptsize $J_{1,1}$};
    \draw (0.6,0.75) node[anchor=west] {\scriptsize $\vdots$};
    \draw (0.4,0.2) node[anchor=west] {\scriptsize $J_{1,m_1}$};
\draw[thick] (0.5,1.4) rectangle (1,6);
  \node (lambda2) at (0.75,5.2) {\rotatebox{90}{\scriptsize $\lambda_2$}};
  \coordinate (B2) at (0.75,1.4);
  \coordinate (T2) at (0.75,6);
    \draw (0.9,2.7) node[anchor=west] {\scriptsize $J_{2,1}$};
    \draw (1.1,2.3) node[anchor=west] {\scriptsize $\vdots$};
    \draw (0.9,1.6) node[anchor=west] {\scriptsize $J_{2,m_2}$};
\draw[thick] (1,3) rectangle (1.5,6);
  \node (lambda3) at (1.25,5.2) {\rotatebox{90}{\scriptsize $\lambda_3$}};
  \coordinate (B3) at (1.25,3);
  \coordinate (T3) at (1.25,6);
\draw[thick] (1.5,3.6) rectangle (2,6);
  \draw (1.75,5.2) node {\tiny $\cdots$};
\draw[thick] (2,4) rectangle (2.5,6);
  \draw (2.25,5.2) node {\tiny $\cdots$};
\draw[thick] (2.5,4.4) rectangle (3,6);
  \node (lambdal) at (2.75,5.2) {\rotatebox{90}{\scriptsize $\lambda_\ell$}};
  \coordinate (Bl) at (2.75,4.4);
  \coordinate (Tl) at (2.75,6);
    \draw (2.9,5.7) node[anchor=west] {\scriptsize $J_{\ell,1}$};
    \draw (3.1,5.3) node[anchor=west] {\scriptsize $\vdots$};
    \draw (2.9,4.6) node[anchor=west] {\scriptsize $J_{\ell,m_\ell}$};
\draw[->] (lambda1) -- (T1);
\draw[->] (lambda1) -- (B1);
\draw[->] (lambda2) -- (T2);
\draw[->] (lambda2) -- (B2);
\draw[->] (lambda3) -- (T3);
\draw[->] (lambda3) -- (B3);
\draw[->] (lambdal) -- (Tl);
\draw[->] (lambdal) -- (Bl);
\draw (-0.5,5.7) node {\scriptsize $\mu_1\rightarrow$};
\draw (-0.5,5.3) node {\scriptsize $\mu_2\rightarrow$};
\draw (-0.5,4.9) node {\scriptsize $\mu_3\rightarrow$};
\draw (-0.5,3.43) node {\scriptsize $\vdots$};
\draw (-0.5,3) node {\scriptsize $\vdots$};
\draw (-0.5,2.57) node {\scriptsize $\vdots$};
\draw (-0.5,0.2) node {\scriptsize $\mu_L\rightarrow$};
\end{tikzpicture}
\caption{Rigged configuration}
\label{fig:rigged_config}
\end{center}
\end{figure}

For $\bold{u}\in\mathcal{U}^+$, 
the rigged configuration 
$\left\{\mu,J\right\}$ associated with $\bold{u}$ 
is defined as follows. We assume that $u_M=1$ and 
$u_j=0$ for all $j>M$. For $j=1,2,\ldots$, denote by
$\check{\bold{u}}_j=u_1u_2\cdots u_j$ 
the finite binary sequence picked out from $\bold{u}=u_0u_1u_2\cdots$
by omitting the first entry $u_0=0$. 
Consider a growing sequence of partitions 
$\emptyset=\mu^{(0)}\subseteq
\cdots\subseteq\mu^{(k)}\subseteq\cdots\subseteq\mu^{(M)}$ 
and associated riggings 
$J^{(k)}=\left\{J^{(k)}_1,J^{(k)}_2,\ldots\right\}$, 
$J^{(k)}_i=\left\{J^{(k)}_{i,1},\ldots,
J^{(k)}_{i,m^{(k)}_i}\right\}$, where 
$k=1,\ldots,M$ and $i=1,\ldots,\mu^{(k)}_1$. 
We introduce the so-called $i$th \textit{vacancy number} 
$p^{(k)}_i$ for the $m_i$ rows with the same size
$\mu_{1+\lambda_{i+1}}=\cdots=\mu_{\lambda_i}$  in the Young diagram
associated to $\mu$, as
\begin{equation}
p^{(k)}_i= k-2
\left(\lambda^{(k)}_1+\cdots+\lambda^{(k)}_i\right).
\label{def:vacancy}
\end{equation}
A row, with rigging $J^{(k)}_{i,j}$, in a rigged configration $\{\mu^{(k)},J^{(k)}\}$ is 
called \textit{singular} if at that value of $j$ the associated rigging 
satisfies $J^{(k)}_{i,j}=p^{(k)}_i$. 
Assume that the subsequence $\check{\bold{u}}_{j-1}=u_1\cdots u_{j-1}$
corresponds to $\left\{\mu^{(j-1)},J^{(j-1)}\right\}$. 
If $u_j=0$, then $\left\{\mu^{(j)},J^{(j)}\right\}
=\left\{\mu^{(j-1)},J^{(j-1)}\right\}$. 
If on the other hand $u_j=1$, then we add a box to the longest singular row
in $\left\{\mu^{(j-1)},J^{(j-1)}\right\}$ and make that row singular
again by affixing the appropriate rigging to it. Note that, by
convention, the empty set (or an empty row in the Young diagram) is
always taken to be singular. 
The rigged configuration 
$\left\{\mu,J\right\}$ that corresponds to $\bold{u}$ 
is given by 
\begin{equation}
\left\{\mu,J\right\}=\left\{\mu^{(M)},J^{(M)}\right\}.
\label{def:RiggedConfiguration}
\end{equation}

\begin{example}
\label{RCeg:00001101100110100,Tinfinity}
$\bold{u}=u_0u_1u_2\cdots =
00001101100110100\cdots\in\mathcal{U}^+$: 
\[
\check{\bold{u}}_1=0
,\quad
\check{\bold{u}}_2=00
,\quad
\check{\bold{u}}_3=000
,\quad
\check{\bold{u}}_4=0001
,\quad
\check{\bold{u}}_5=00011
,\quad\cdots,\quad
\check{\bold{u}}_{14}=00011011001101
\]
The procedure described above works as follows:
\begin{align*}
\ytableausetup{boxsize=1em}
&\begin{CD}
\emptyset @>{u_1=0}>> \emptyset
@>{u_2=0}>> \emptyset @>{u_3=0}>> \emptyset
@>{u_4=1}>> 
\begin{ytableau}
\none[2] & {} & \none[2]
\end{ytableau}
@>{u_5=1}>> 
\begin{ytableau}
\none[1] & {} & {} & \none[1]
\end{ytableau}
@>{u_6=0}>> 
\begin{ytableau}
\none[2] & {} & {} & \none[1]
\end{ytableau}
\end{CD}
\\
&\begin{CD}
@>{u_7=1}>> 
\begin{ytableau}
\none[1] & {} & {} & \none[1]\\
\none[3] & {} & \none[3]\\
\end{ytableau}
@>{u_8=1}>> 
\begin{ytableau}
\none[0] & {} & {} & {} & \none[0]\\
\none[4] & {} & \none[3]\\
\end{ytableau}
@>{u_9=0}>> 
\begin{ytableau}
\none[1] & {} & {} & {} & \none[0]\\
\none[5] & {} & \none[3]\\
\end{ytableau}
@>{u_{10}=0}>> 
\begin{ytableau}
\none[2] & {} & {} & {} & \none[0]\\
\none[6] & {} & \none[3]\\
\end{ytableau}
\end{CD}
\\
&\begin{CD}
@>{u_{11}=1}>> 
\begin{ytableau}
\none[1] & {} & {} & {} & \none[0]\\
\none & {} & \none[5]\\
\none[\raisebox{0.5em}{$5$}] & {} & \none[3]
\end{ytableau}
@>{u_{12}=1}>> 
\begin{ytableau}
\none[0] & {} & {} & {} & \none[0]\\
\none[2] & {} & {} & \none[2]\\
\none[6] & {} & \none[3]
\end{ytableau}
@>{u_{13}=0}>> 
\begin{ytableau}
\none[1] & {} & {} & {} & \none[0]\\
\none[3] & {} & {} & \none[2]\\
\none[7] & {} & \none[3]
\end{ytableau}
@>{u_{14}=1}>> 
\begin{ytableau}
\none[0] & {} & {} & {} & \none[0]\\
\none[2] & {} & {} & \none[2]\\
\none & {} & \none[6]\\
\none[\raisebox{0.5em}{$6$}] & {} & \none[3]
\end{ytableau},
\end{CD}
\end{align*}
where the numbers on the left of the Young diagrams are the vacancy numbers.
The rigged configuration in this case is 
$\left\{\mu,J\right\}
=\left\{(3,2,1,1),\,\left\{
J_1=\{6,3\},\,J_2=\{2\},\,J_3=\{0\}\right\}\right\}$.
\end{example}

We recall two important results in \cite{KS}: 
\begin{thm}[Equivalent to Theorem 4.1 of \cite{KS}]
\label{thm:thm4.1_KS}
For $\bold{u}\in\mathcal{U}^+$, 
the rigging $J$ obtained in the above fashion is related to 
the $i$-th 10-rigging $J^{10}_i$ of \eqref{dfn:Ji(10)} -- i.e. to the 
positions of 0-solitons in $\Phi_{10}^{i}(\bold{u})$ -- 
as 
\begin{equation}
J^{10}_i = \sigma^i\left(J_i\right) \quad (i=1,2,\ldots).
\end{equation}
\end{thm}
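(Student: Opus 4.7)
The plan is to reduce the statement to the identity $J_i = J^{(01)}_i$ for all $i \geq 1$. Indeed, Theorem \ref{thm:J10=s^i(J01)} already gives us $J^{10}_i = \sigma^i(J^{(01)}_i)$, so the desired equation $J^{10}_i = \sigma^i(J_i)$ is equivalent to $J_i = J^{(01)}_i$. This reformulation is appealing because both $J_i$ and $J^{(01)}_i$ are constructed entirely from elementary operations on $\bold{u}$ itself, whereas the inserted $\sigma^i$ shift was the only feature that truly required passage through $10$-eliminations.

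To prove the equality $J_i = J^{(01)}_i$, I would proceed by induction on $\mathcal{N}(\bold{u})$. The base case $\mathcal{N}(\bold{u}) = 0$ is immediate, as both sides are empty. For the inductive step, let $M$ be the position of the rightmost $1$ in $\bold{u}$ and let $\bold{u}'$ be obtained from $\bold{u}$ by replacing $u_M$ by $0$; then $\bold{u}' \in \mathcal{U}^+$ with $\mathcal{N}(\bold{u}') = \mathcal{N}(\bold{u}) - 1$, and by hypothesis the primed analogue $J'_i = J'^{(01)}_i$ holds for every $i$. On the rigged configuration side, the algorithm in \eqref{def:RiggedConfiguration} specifies that the step $\bold{u}' \to \bold{u}$ appends a single box to the longest singular row of $\{\mu^{(M-1)}, J^{(M-1)}\}$ and updates its rigging to the new vacancy number, keeping it singular. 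On the 01-elimination side, one must track how the ascent sequences of the intermediate states $\bold{u}_j$ change when a $1$ is reintroduced at position $M$, and how this propagates through $\rho_{01}$ and its iterates $\Phi_{01}^{i-1}$. Lemma \ref{lemma:rho10=sigma_rho01}, Lemma \ref{Lemma:fb-fa} and the explicit piecewise-linear form of $f_{\bm{c}}$ in \eqref{def:f_c} quantify, level by level, the shift induced on the 0-soliton labels and match them against the vacancy-number update of the rigged configuration.

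The main obstacle will be to show, in elementary terms, that the ``longest singular row'' selected by the rigged configuration procedure is precisely the row whose length corresponds to the unique soliton of $\bold{u}'$ that the new $1$ extends under 01-elimination. More precisely, I expect to classify in which level $i$ the inserted $1$ first interacts with the arc structure --- using the four local patterns (I)--(IV) already invoked in the proof of Lemma \ref{lemma:rho10=sigma_rho01} --- and then to check that the resulting increment on $J^{(01)}_i$ exactly matches the rigging assigned to the newly extended row by the rigged configuration algorithm. Handling the case where several rows of the same length are singular simultaneously (so that there is a tie-breaking choice built into the rigged configuration construction) is the delicate point, and requires showing that the 01-elimination process always selects the same row that the rigged configuration algorithm does.
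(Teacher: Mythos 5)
Your reduction of the statement to the identity $J_i = J^{(01)}_i$ via Theorem \ref{thm:J10=s^i(J01)} is sound and non-circular (that theorem is established independently, from Lemma \ref{lemma:rho10=sigma_rho01}, \eqref{Phi10=LambdaPhi01=Phi01Lambda} and \eqref{rho_Lambda=sigma_rho}); indeed the paper itself records $J_i=J^{(01)}_i$ as \eqref{J=J01}, but only as a \emph{consequence} of the present theorem, which it does not prove at all: it imports it verbatim from Theorem 4.1 of \cite{KS}. So you are attempting something the paper deliberately outsources to the literature, and the question is whether your induction actually closes.

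As written, it does not. The base case and the set-up of the inductive step are fine: deleting the rightmost $1$ keeps $\bold{u}'$ in $\mathcal{U}^+$, and the passage $\bold{u}'\to\bold{u}$ is exactly one box-addition step of the algorithm leading to \eqref{def:RiggedConfiguration}. But the entire mathematical content of the theorem is concentrated in the step you defer: you must show that the ``longest singular row'' selected by the algorithm has length $i^*-1$ precisely when the appended $1$ survives $i^*-1$ applications of $\Phi_{01}$ and contributes a new $0$-soliton at the $i^*$-th elimination, and that the re-singularized rigging $p^{(M)}_{i^*}$ equals the corresponding new entry of $\rho_{01}\bigl(\Phi_{01}^{i^*-1}(\bold{u})\bigr)$ while all other entries of every $J^{(01)}_i$ are unchanged. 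The tools you invoke do not obviously supply this: Lemma \ref{Lemma:fb-fa} and the patterns (I)--(IV) compare the $10$- and $01$-riggings of a single fixed sequence, whereas here you must compare the riggings of $\bold{u}$ and $\bold{u}'$ at every level $i$, i.e.\ track how $\mathrm{Asc}\bigl(\Phi_{01}^{i-1}(\cdot)\bigr)$ and the $0$-soliton positions respond to appending a terminal $1$, and how singularity $J^{(k)}_{i,j}=p^{(k)}_i$ propagates through the intermediate stages $k<M$. Until that matching --- including the tie-breaking among equally long singular rows that you yourself flag as the delicate point --- is carried out, the argument is a restatement of the theorem in inductive form rather than a proof of it.
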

\begin{thm}[Equivalent to Theorem 4.2 of \cite{KS}]
\label{thm:thm4.2_KS}
For $\bold{u}\in\mathcal{U}^+$, 
the partition $\mu$ obtained above is 
conjugate to $\lambda=\left(\lambda_1,\lambda_2,\ldots\right)$, where $\lambda_i=\left(\mathrm{asc}\circ
\Phi_{10}^{i-1}\right)(\bold{u})$. 
\end{thm}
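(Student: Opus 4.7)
The plan is to determine the shape of $\mu$ by computing the multiplicities $m_i=\#\{k:\mu_k=i\}$ and establishing that $m_i=\lambda_i-\lambda_{i+1}$ for every $i\geq 1$ (with the convention $\lambda_{\ell+1}=0$). A telescoping sum then gives $({}^t\mu)_j=\sum_{i\geq j}m_i=\lambda_j$, which is exactly the statement that $\mu$ is conjugate to $\lambda$.

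By construction $J_i=\{J_{i,1},\ldots,J_{i,m_i}\}$ has cardinality $m_i$, so Theorem~\ref{thm:thm4.1_KS} (which I am free to use) yields $|J_i^{10}|=|\sigma^i(J_i)|=m_i$. On the other hand, by \eqref{dfn:Ji(10)}, $J_i^{10}$ indexes the 0-solitons arising when $\Phi_{10}$ acts on $\Phi_{10}^{i-1}(\bold{u})$, so $|J_i^{10}|$ is exactly the number of constituent 10-eliminations that strictly lower the descent count of $\Phi_{10}^{i-1}(\bold{u})$. The substance of the argument is thus to verify
\[
|J_i^{10}|=\mathrm{des}\bigl(\Phi_{10}^{i-1}(\bold{u})\bigr)-\mathrm{des}\bigl(\Phi_{10}^i(\bold{u})\bigr)=\lambda_i-\lambda_{i+1},
\]
the right-hand equality being an immediate consequence of \eqref{N10=N01}, \eqref{N10_Lambda=N10} and \eqref{Phi10=LambdaPhi01=Phi01Lambda}, which together identify $\mathrm{des}(\Phi_{10}^{k}(\bold{u}))$ with $\mathrm{asc}(\Phi_{01}^{k}(\bold{u}))=\lambda_{k+1}$.

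For the left-hand equality I would argue by a short local case analysis on $v_{a-1}v_av_{a+1}v_{a+2}$: applying $\phi_a$ at a descent position $a$ of a sequence $\bold{v}\in\mathcal{U}$ changes $\mathrm{des}$ by either $-1$ or $0$, since the descent at $a$ itself is always destroyed (and no descents can sit at $a-1$ or $a+1$ because $v_a=1$ and $v_{a+1}=0$), while a new descent at the junction between $v_{a-1}$ and $v_{a+2}$ appears exactly when $v_{a-1}=1$ and $v_{a+2}=0$. The total drop across the composite map $\Phi_{10}$ is therefore the number of elementary steps that do drop, i.e.\ $|J_i^{10}|$. A small but necessary bookkeeping point is that $\Phi_{10}$ consists of the $\phi_{d_j}$ acting on intermediate sequences: because they are performed in decreasing order of $d_j$, and because two descents of a binary sequence cannot be adjacent (so $d_{j-1}\geq d_j+2$), all entries up to index $d_j+1$ are untouched at the moment $\phi_{d_j}$ acts, the descent at $d_j$ is still present, and the local analysis above applies cleanly at every stage.

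The main obstacle I anticipate is precisely this descent-counting lemma: it is elementary but requires one to keep careful track of which entries are frozen at each stage of $\Phi_{10}$. Once it is in place, the theorem follows at once by combining it with Theorem~\ref{thm:thm4.1_KS} and telescoping.
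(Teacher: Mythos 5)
The paper itself gives no proof of this statement: Theorems \ref{thm:thm4.1_KS} and \ref{thm:thm4.2_KS} are both simply recalled from \cite{KS}, so there is no in-paper argument to compare yours against. Your proof is correct. The chain $m_i=|J_i|=|\sigma^i(J_i)|=|J^{10}_i|$ (the first equality by the definition of the grouping of riggings by row length, the rest by Theorem \ref{thm:thm4.1_KS}), followed by $|J^{10}_i|=\mathrm{des}(\Phi_{10}^{i-1}(\bold{u}))-\mathrm{des}(\Phi_{10}^{i}(\bold{u}))=\lambda_i-\lambda_{i+1}$ and telescoping, does establish ${}^t\mu=\lambda$. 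The descent-counting lemma you isolate is the genuine content: each $\phi_{d_j}$ destroys the descent at $d_j$, cannot touch a descent at $d_j\pm1$, and creates at most one new descent at the junction, so $\mathrm{des}$ drops by $0$ or $1$ at each stage and the total drop equals $|I_N|$; your observation that descents are never adjacent, so the earlier eliminations (performed at larger positions) leave everything up to index $d_j+1$ intact, is exactly what makes the local analysis valid on the intermediate sequences $\bold{u}_{j-1}$. (This lemma is also what makes the dichotomy in \eqref{recusionRel_Ij} exhaustive, a point the paper leaves implicit.) Two remarks, neither a gap: (i) deriving Theorem \ref{thm:thm4.2_KS} from Theorem \ref{thm:thm4.1_KS} is legitimate within this paper's logic, but it is not a proof independent of \cite{KS}, where the two statements are established together; a self-contained argument would have to compute $m_i$ directly from the box-adding algorithm or prove Theorem \ref{thm:thm4.1_KS} first. (ii) You should note that for $i>\ell$ one gets $m_i=0$ from the same identity (since $\Phi_{10}^{\ell}(\bold{u})$ has no descents), so $\mu$ has no part exceeding $\ell$ and the telescoping sum $({}^t\mu)_j=\sum_{i\geq j}m_i=\lambda_j$ is finite.
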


Hence, from Theorem \ref{thm:J10=s^i(J01)} and Theorem \ref{thm:thm4.1_KS} it is immediately clear that the above riggings are equal to the 01-rigging defined in \eqref{dfn:Ji(01)}:
\begin{equation}
J_i = J_i^{01} \quad (i=1,2,\ldots).
\label{J=J01}
\end{equation}
Hereafter we shall use the notation $J=\{J_1,J_2,\ldots\}
=\left\{\{J_{1,1},\ldots,J_{1,m_1},\ldots\},
\{J_{2,1},\ldots,J_{2,m_2}\},\ldots\right\}$ to indicate 01-riggings.

Starting from $\bold{u}\in\mathcal{U}^+$, we define 
$\ell$ as the minimum integer that satisfies 
$\mathcal{N}\left(\Phi_{01}^\ell(\bold{u})\right)=0$. 
We define
\begin{equation}
KKR(\bold{u}) :=
\left\{\left(\lambda_1,\cdots,\lambda_\ell\right),
\{J_1,\ldots,J_\ell\}\right\},
\label{def:KKRmap}
\end{equation}
for $\bold{u}\in\mathcal{U}^+$,
which can be represented graphically as in Figure
\ref{fig:rigged_config} since the partition $\lambda$ in \eqref{def:KKRmap} is 
conjugate to $\mu$ in \eqref{def:RiggedConfiguration}
(due to Theorem \ref{thm:thm4.2_KS}). As explained above, the number of elements in each sequence $J_i$ satisfies 
\begin{equation}
\left|J_i\right| = \lambda_{i}-\lambda_{i+1} \quad (i=1,\ldots,\ell-1),
\qquad \left|J_\ell\right| = \lambda_\ell,
\end{equation}
and the total number of riggings is therefore
$\sum_{i=1}^\ell\left|J_i\right|=\lambda_1$, which is of course 
the length of the partition $\mu$, 
as shown in Figure \ref{fig:rigged_config}. 

The following is an example of the relation \eqref{J=J01}.
\begin{example}
\label{eg:00001101100110100,Tinfinity}
$\bold{u}=00001101100110100\cdots\in\mathcal{U}^+$ 
(same as Example \ref{RCeg:00001101100110100,Tinfinity}): 
\begin{align*}
& \left.
\begin{array}{ccc@{\ }c@{\ }c@{\ }c@{\ }c@{\ }c@{\ }c@{\ }c@{\ }c@{\ }c@{\ }c@{\ }c@{\ }c@{\ }c}
\bold{u} & : & 0&0&0&\arczeroone&1&\arczeroone&1&0&\arczeroone&
1&\arczeroone&0&0&\cdots\\
f_{\mathrm{Asc}(\bold{u})}(n) & : & 
0&1&2&2\,\,2&3&3\,\,3&4&5&5\,\,5&6&6\,\,6&7&8&\cdots
\end{array}\right\}
\quad \lambda_1=\mathrm{asc}(\bold{u})=4,\quad
J_1=\rho_{01}(\bold{u}) =\left\{6,3\right\}, 
\\
& \left.
\begin{array}{ccc@{\ }c@{\ }c@{\ }c@{\ }c@{\ }c@{\ }c@{\ }c}
\Phi_{01}(\bold{u}) & : & 0&0&\arczeroone&1&\arczeroone&0&0&\cdots\\
f_{\mathrm{Asc}(\Phi_{01}(\bold{u}))}(n) & : & 
0&1&1\,\,1&2&2\,\,2&3&4&\cdots
\end{array}\right\}
\quad\lambda_2=\mathrm{asc}\left(\Phi_{01}(\bold{u})\right)=2, 
\quad J_2=\rho_{01}\left(\Phi_{01}(\bold{u})\right) =\left\{2\right\},
\\
& \left.
\begin{array}{ccc@{\ }c@{\ }c@{\ }c@{\ }c}
\Phi_{01}^2(\bold{u}) & : & 0&\arczeroone&0&0&\cdots\\
f_{\mathrm{Asc}(\Phi_{01}^2(\bold{u}))}(n) & : & 
0&0\,\,0&1&2&\cdots
\end{array}\right\}
\quad\lambda_3=\mathrm{asc}\left(\Phi_{01}^2(\bold{u})\right)=1, 
\quad J_3=\rho_{01}\left(\Phi_{01}^2(\bold{u})\right) =\left\{0\right\},
\end{align*}
\[
\ytableausetup{boxsize=1em}
KKR(\bold{u})=
\left\{\left(4,2,1\right),\left\{
\left\{6,3\right\},\left\{2\right\},\left\{0\right\}
\right\}\right\}
\; : \quad
\begin{ytableau}
{} & {} & {} & \none[0]\\
{} & {} & \none[2]\\
{} & \none[6]\\
{} & \none[3]
\end{ytableau},
\]
and the resulting rigged configuration clearly coincides with 
that of Example \ref{RCeg:00001101100110100,Tinfinity}.

The time-evolution of this particular state $\bold{u}\in\mathcal{U}^+$  is as follows:
\begin{align*}
\bold{u} &= 000011011001101000000000000\cdots,\\
T(\bold{u}) &= 000000100110010111000000000\cdots,\\
T^2(\bold{u}) &= 000000010001101000111000000\cdots,\\
T^3(\bold{u}) &= 000000001000010110000111000\cdots.
\end{align*}
The asymptotic lengths of the solitons, $(3,2,1,1)$ in order of
 decreasing speed, can be directly observed on the state $T^3(\bold{u})
 $ in which the solitons are well-separated, in the sense of Theorem
 \ref{thm:SolitonContents_Tinfinity}. Viewed as a partition, $(3,2,1,1)$
 corresponds exactly to the Young diagram in the representation of
 $KKR(\bold{u})$ given above.  
\end{example}

The following theorem was originally conjectured in \cite{KOTY} 
and proved in \cite{Takagi,KOSTY,Sakamoto1}. 
\begin{thm}[Linearization of the time-evolution $T$
\cite{KOTY,KOSTY,Takagi,Sakamoto1}]
\label{thm:Liniarization}
For $\bold{u}\in\mathcal{U}^+$, define $\overline{J}_n$ as 
\begin{equation}
\overline{J}_n = \left(\rho_{01}\circ\Phi_{01}^{n-1}\circ T\right)\left(\bold{u}\right).
\end{equation}
Then the relation $\overline{J}_n =\sigma^n\left(J_n\right)$ holds for
$n=1,2,\ldots$. 
\end{thm}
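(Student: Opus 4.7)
The plan is to reduce Theorem~\ref{thm:Liniarization} to the single identity
\[
(\star)\qquad \rho_{01}\circ T \;=\; \rho_{10}\qquad\text{on }\mathcal{U},
\]
and then chain $(\star)$ with the established tools. Indeed, iterating \eqref{TPhi=PhiT} gives $\Phi_{01}^{n-1}\circ T = T\circ\Phi_{10}^{n-1}$, so
\[
\overline{J}_n
\;=\; \rho_{01}\circ T\circ\Phi_{10}^{n-1}(\bold{u})
\;\stackrel{(\star)}{=}\; \rho_{10}\circ\Phi_{10}^{n-1}(\bold{u})
\;=\; J_n^{(10)},
\]
and by Theorem~\ref{thm:J10=s^i(J01)}, $J_n^{(10)} = \sigma^n(J_n^{(01)}) = \sigma^n(J_n)$, as required.

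To establish $(\star)$, I would invoke Lemma~\ref{lemma:10-arcs_01-arcs}, which identifies the 10-arc structure of $\bold{u}$ with the 01-arc structure of $T(\bold{u})$ at identical positions. Restricting to first-pass (adjacent) arcs immediately yields $\mathrm{Des}(\bold{u}) = \mathrm{Asc}(T(\bold{u}))$, so the 10-elimination of $\bold{u}$ and the 01-elimination of $T(\bold{u})$ excise at the same positions $d_j = a_j$. Denote by $\bold{u}_j$ the intermediates of the former, as in \eqref{recusionRel_Ij}, and by $\bold{v}_j$ those of the latter, defined analogously via \eqref{recusionRel_I'j}. A short induction on $j$ then proves $T(\bold{u}_j) = \bold{v}_j$: the inductive step reduces to the commutation $T\circ\phi_{d_j} = \phi_{d_j}\circ T$, which is legitimate because the interlacing property \eqref{InterlacingProperty} forces $d_{j-1}\geq d_j+2$, so the earlier excisions at $d_1,\ldots,d_{j-1}$ leave $(d_j,d_j+1)$ intact as an adjacent 10-pair of $\bold{u}_{j-1}$, and excising an innermost 10-arc commutes with $T$ since $T$ depends only on the non-crossing arc structure. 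Combined with $\mathrm{asc}\circ T = \mathrm{des}$ from \eqref{N10=N10_T}, this yields $\mathrm{asc}(\bold{v}_j) = \mathrm{des}(\bold{u}_j)$ at every step, so the descent-drop criterion of the 10-elimination matches the ascent-drop criterion of the 01-elimination, whence $I_N = I'_N$. Since $\mathrm{Des}(\bold{u}) = \mathrm{Asc}(T(\bold{u}))$ makes the $f$-renumbering identical on the two sides, $(\star)$ follows.

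The main obstacle will be the arc-structural verification that $T\circ\phi_{d_j} = \phi_{d_j}\circ T$ at each inductive step, i.e., that excising an innermost 10-arc truly commutes with the $T$-swap. The non-crossing nature of the arcs guarantees that no other arc of $\bold{u}_{j-1}$ has an endpoint at $d_j$ or $d_j+1$, so the excision simply removes the arc $(d_j,d_j+1)$ and shifts the remaining endpoints uniformly by $-2$, whereas $T$ acts by swapping $0$s and $1$s arc by arc; performing these two operations in either order yields the same result. Once this commutation is rigorously in place, Lemma~\ref{lemma:10-arcs_01-arcs} delivers the elimination histories of $\bold{u}$ and $T(\bold{u})$ in lockstep, closing the proof of $(\star)$ and hence of the theorem.
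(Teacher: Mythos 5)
Your proposal is correct and follows essentially the same route as the paper: the pivotal identity $(\star)$, $\rho_{01}\circ T=\rho_{10}$, is exactly the paper's consequence of Lemma~\ref{lemma:10-arcs_01-arcs} (for which you supply a more detailed arc-excision argument than the paper bothers to give), and your remaining operator manipulations --- commuting $T$ past the eliminations via \eqref{TPhi=PhiT} and invoking Theorem~\ref{thm:J10=s^i(J01)} --- are just a repackaging of the paper's use of Lemma~\ref{lemma:Lambda_T_Phi10=Phi10_T}, \eqref{rho_Lambda=sigma_rho} and Lemma~\ref{lemma:rho10=sigma_rho01}. No gaps.
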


\begin{example}
\label{eg:thm:00001101100110100,Tinfinity}
$\bold{u}=00001101100110100\cdots\in\mathcal{U}^+$ 
(same as Examples \ref{RCeg:00001101100110100,Tinfinity} and \ref{eg:00001101100110100,Tinfinity}): 
\[
\begin{CD}
\bold{u}=
0000110110011010000\cdots
@>{T}>> 
T(\bold{u})=
0000001001100101110\cdots\\
@V{KKR}VV @VV{KKR}V\\
\left\{\begin{array}{l}
\lambda_1 = 4,\; \lambda_2 =2,\; \lambda_3 =1,\\
J_1 = \left\{6,3\right\},\;
J_2 = \left\{2\right\},\;
J_3 = \left\{0\right\}
\end{array}\right\}
@>>{\mbox{\scriptsize $
\begin{array}{l}
\overline{\lambda}_n = \lambda_n,\\
\overline{J}_n = \sigma^n(J_n)
\end{array}$}
}>
\left\{\begin{array}{l}
\overline{\lambda}_1 = 4,\; \overline{\lambda}_2 =2,
\; \overline{\lambda}_3 =1,\\
\overline{J}_1 = \left\{7,4\right\},\;
\overline{J}_2 = \left\{4\right\},\;
\overline{J}_3 = \left\{3\right\}
\end{array}\right\}
\\
\begin{ytableau}
{} & {} & {} & \none[0]\\
{} & {} & \none[2]\\
{} & \none[6]\\
{} & \none[3]
\end{ytableau}
@.
\begin{ytableau}
{} & {} & {} & \none[3]\\
{} & {} & \none[4]\\
{} & \none[7]\\
{} & \none[4]
\end{ytableau}
\end{CD}
\]
\end{example}

Theorem \ref{thm:SolitonContents_Tinfinity} and 
Theorem \ref{thm:Liniarization} can be proved in the following elementary way.
\begin{proof}[Proof of Theorem \ref{thm:Liniarization}]
Because of Lemma \ref{lemma:10-arcs_01-arcs} we have
\begin{equation}
\rho_{10} = \rho_{01}\circ T,
\label{lemma:rho10=rho01_T}
\end{equation}
and using Lemma \ref{lemma:Lambda_T_Phi10=Phi10_T}, 
Lemma \ref{lemma:rho10=sigma_rho01} and the relations
\eqref{rho_Lambda=sigma_rho}, \eqref{lemma:rho10=rho01_T}, 
we obtain
\begin{equation}
\rho_{01}\circ\Phi_{01}^{n-1}\circ T
= \rho_{01}\circ\Lambda^{n-1}\circ T
\circ\Phi_{01}^{n-1} = \sigma^n \circ\rho_{01}\circ\Phi_{01}^{n-1}.
\end{equation}
Thus we have $\overline{J}_n(\bold{u})=\sigma^n\left(J_n(\bold{u})\right)$
for all $\bold{u}\in\mathcal{U}^+$.
\end{proof}

\begin{proof}[Proof of Theorem \ref{thm:SolitonContents_Tinfinity}]
{}From Theorem \ref{thm:Liniarization}, we know that 
there exists an integer $k$ such that the riggings at step $k$, 
$\left\{J_{i,j}(k)\right\}$, satisfy 
\begin{equation}
J_{i+1,m_{i+1}}-J_{i,1}>\mu_{\lambda_{i}},
\end{equation}
and hence (for $\ell$ defined as in Figure \ref{fig:rigged_config}) 
\begin{equation}
J_{\ell,1}\geq J_{\ell,2}\geq\cdots \geq J_{\ell,m_\ell}
> J_{\ell-1,1}\geq \cdots\cdots
\geq J_{2,m_2}
>J_{1,1}\geq\cdots \geq J_{1,m_1}.
\end{equation}
It is clear that repeated application of the 
01-insertion map $\Psi_{01}$ results in 
a state that satisfies the condition 
\eqref{d-a=mu,a-d>=mu}.
\end{proof}

\subsection{Equivalence with Takagi's approach}
In this subsection, we discuss the relationship between 
our approach and Takagi's method for constructing 
$\mathfrak{sl}_2$-rigged configurations \cite{Takagi}. 
First we explain Takagi's construction and show that 
it gives the same data as ours.

Consider a sequence $\bold{u}(a_1,\ldots,a_N;d_1,\ldots,d_N)$ 
where $a_1$, $\ldots$, $a_N$ and $d_1$, $\ldots$, $d_N$ satisfy the interlacing condition 
$0<a_N<d_N<\cdots<a_1<d_1$ (Figure \ref{fig:u(a_1,...,a_N;d_1,...,d_N)}). 
For such a sequence, form the $2\times N$ matrix
\begin{equation}
M = 
\begin{pmatrix}
a_N & a_{N-1} & \cdots & a_2 & a_1\\
d_N & d_{N-1} & \cdots & d_2 & d_1
\end{pmatrix}.
\label{TakagiMatrix}
\end{equation}
The matrix $M'$ is then defined by
\begin{align}
M' &= M-\omega
\begin{pmatrix}
1 & 3 & \cdots & 2N-3 & 2N-1\\
2 & 4 & \cdots & 2N-4 & 2N
\end{pmatrix}
\nonumber\\
&= \begin{pmatrix}
a'_N & a'_{N-1} & \cdots & a'_2 & a'_1\\
d'_N & d'_{N-1} & \cdots & d'_2 & d'_1
\end{pmatrix}
\end{align}
where $a'_j:=a_j-\omega(2N+1-2j)$, $d'_j:=d_j-\omega(2N+2-2j)$ and where $\omega$ is the smallest
positive integer such that 
there exists either a column or an antidiagonal in $M'$ 
that only contains equal entries.
Manifestly, such an equality occurs in the $(N+1-j)$th column if
$a'_j=d'_j$, that is if $d_j-a_j=\omega$, or in the $(N+1-j)$th antidiagonal
if $a'_{j-1}=d'_{j}$, that is if $a_{j-1}-d_j=\omega$. 
Note that $d_j-a_j$ is the length of the $j$th
block of 1s, counted from the right, and $a_{j}-d_{j+1}$ is the length of the $j$th
(finite) block of 0s, as shown in Figure \ref{fig:u(a_1,...,a_N;d_1,...,d_N)}. 
Thus $\omega$ is nothing but the minimum length of all the blocks 
of 1s and 0s:
\begin{equation}
\omega = \min\left\{
d_1-a_1, d_2-a_2, \ldots, d_N-a_N, 
a_1-d_2, a_2-d_3, \ldots, a_{N-1}-d_N
\right\}.
\end{equation}
Then, working from left to right, each pair of coinciding elements in $M'$, 
say both with value $r$, is deleted. 
This value $r$ is the rigging and is given by $r=d'_j$.

After deletion, one obtains a new matrix $M$ and the process is repeated
until all entries in $M$ have been deleted. 
The recorded data  are assembled into the rigged Young diagram
shown in Figure \ref{fig:Takagi} in which at stage $i$ 
in the above process 
$\omega=\omega_i$, and $m_i$ pairs were found with corresponding riggings 
$J_{i,1},J_{i,2},\ldots,J_{i,m_i}$. As shown, at the $i$th stage, exactly $m_i$ 
parts of length $\omega_1+\cdots+\omega_i$ are added to the Young
diagram, from bottom to top.

\begin{figure}[htbp]
\begin{center}
\begin{tikzpicture}
\draw[thick] (0,0) rectangle (2.5,1.5);
  \draw (0,0.5) -- (2.5,0.5);
  \draw (0,1) -- (2.5,1);
    \node (k11) at (1.25,1.2) {\scriptsize $\omega_1$};
      \coordinate (L11) at (0,1.2);
      \coordinate (R11) at (2.5,1.2);
      \draw[->] (k11) -- (L11);
      \draw[->] (k11) -- (R11);
    \draw (1.25,0.8) node {\scriptsize $\vdots$};
    \node (k12) at (1.25,0.2) {\scriptsize $\omega_1$};
      \coordinate (L12) at (0,0.2);
      \coordinate (R12) at (2.5,0.2);
      \draw[->] (k12) -- (L12);
      \draw[->] (k12) -- (R12);
    \draw (2.5,1.2) node[anchor=west] {\scriptsize $J_{1,1}$};
    \draw (2.7,0.8) node[anchor=west] {\scriptsize $\vdots$};
    \draw (2.5,0.2) node[anchor=west] {\scriptsize $J_{1,m_1}$};
\draw[thick] (2.5,1.5) rectangle (4,3);
  \draw (2.5,2) -- (4,2);
  \draw (2.5,2.5) -- (4,2.5);
    \node (k21) at (3.25,2.7) {\scriptsize $\omega_2$};
      \coordinate (L21) at (2.5,2.7);
      \coordinate (R21) at (4,2.7);
      \draw[->] (k21) -- (L21);
      \draw[->] (k21) -- (R21);
    \draw (3.25,2.3) node {\scriptsize $\vdots$};
    \node (k22) at (3.25,1.7) {\scriptsize $\omega_2$};
      \coordinate (L22) at (2.5,1.7);
      \coordinate (R22) at (4,1.7);
      \draw[->] (k22) -- (L22);
      \draw[->] (k22) -- (R22);
    \draw (4,2.7) node[anchor=west] {\scriptsize $J_{2,1}$};
    \draw (4.2,2.3) node[anchor=west] {\scriptsize $\vdots$};
    \draw (4,1.7) node[anchor=west] {\scriptsize $J_{2,m_2}$};
    \draw (4.4,3.5) node {\rotatebox{90}{\scriptsize $\ddots$}};
\draw[thick] (5,4) rectangle (7,5.5);
  \draw (5,4.5) -- (7,4.5);
  \draw (5,5) -- (7,5);
    \node (kp1) at (6,5.2) {\scriptsize $\omega_p$};
      \coordinate (Lp1) at (5,5.2);
      \coordinate (Rp1) at (7,5.2);
      \draw[->] (kp1) -- (Lp1);
      \draw[->] (kp1) -- (Rp1);
    \draw (6,4.8) node {\scriptsize $\vdots$};
    \node (kp2) at (6,4.2) {\scriptsize $\omega_p$};
      \coordinate (Lp2) at (5,4.2);
      \coordinate (Rp2) at (7,4.2);
      \draw[->] (kp2) -- (Lp2);
      \draw[->] (kp2) -- (Rp2);
    \draw (7,5.2) node[anchor=west] {\scriptsize $J_{p,1}$};
    \draw (7.2,4.8) node[anchor=west] {\scriptsize $\vdots$};
    \draw (7,4.2) node[anchor=west] {\scriptsize $J_{p,m_p}$};
\draw (0,1.5) -- (0,5.5);
\draw (0,5.5) -- (5,5.5);
\end{tikzpicture}
\caption{Takagi's approach}
\label{fig:Takagi}
\end{center}
\end{figure}

\begin{example}
\label{eg:Takagi'sMmatrix:00001101100110100}
$\bold{u}=00001101100110100\cdots\in\mathcal{U}^+$ 
(same as Example \ref{eg:00001101100110100,Tinfinity}): 
In this case, 
the matrix giving the position of the last entries in each block is
\[
M_0=\begin{pmatrix}
3 & 6 & 10 & 13\\
5 & 8 & 12 & 14
\end{pmatrix}.
\]
Then we have (taking $\omega=1$)
\[
M_0'=M_0- 1\cdot 
\begin{pmatrix}
1 & 3 & 5 & 7\\
2 & 4 & 6 & 8
\end{pmatrix}
=\begin{pmatrix}
2 & \cancel{3} & 5 & \cancel{6}\\
\cancel{3} & 4 & \cancel{6} & 6
\end{pmatrix}
\to \begin{pmatrix}
2 & 5\\
4 & 6
\end{pmatrix} =: M_1, 
\]
and $m_1 = 2$, $\left\{J_{1,1},J_{1,2}\right\}=\left\{6,3\right\}$.
Now (again taking $\omega=1$)
\[
M'_1 = M_1-1\cdot 
\begin{pmatrix}
1 & 3\\ 2 & 4  
\end{pmatrix}
=\begin{pmatrix}
1 & \cancel{2}\\
\cancel{2} & 2
\end{pmatrix}
\to \begin{pmatrix}
1\\
2
\end{pmatrix} =: M_2,
\]
and $m_2=1$, $J_{2,1}=2$. 
Finally (also for $\omega=1$)
\[
M_2' = M_2 -1\cdot 
\begin{pmatrix}
1\\ 2
\end{pmatrix}
=\begin{pmatrix}
\cancel{0}\\ \cancel{0}
\end{pmatrix}
\to \emptyset
\]
and $m_3=1$, $J_{3,1}=0$.
The resulting rigged configuration coincides with 
that of Example \ref{eg:00001101100110100,Tinfinity}.
\end{example}

\begin{prop}
For any sequence $\bold{u}\in\mathcal{U}^+$, 
the rigged configuration resulting  from Takagi's algorithm, 
$\left\{Y(\bold{u}),J(\bold{u})\right\}$, 
where $Y(\bold{u})$ is a Young diagram 
and $J(\bold{u})$ is a set of riggings, 
coincides with the rigged Young diagram obtained 
via our approach, as introduced in subsection (2.2).
\end{prop}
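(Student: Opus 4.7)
The plan is to compare the two algorithms stage by stage, by induction on the number of stages $p$ in Takagi's procedure. At each stage $i$, Takagi subtracts $\omega_i$ and deletes $m_i$ coinciding pairs; I claim that this corresponds exactly to $\omega_i$ iterations of $\Phi_{01}$ in the approach of subsection 2.2, during which precisely $m_i$ blocks of length $\omega_i$ vanish and produce the $m_i$ riggings at the corresponding level of the Young diagram.

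First I would identify Takagi's parameter $\omega_1$ with the minimum length among the $1$-blocks and finite $0$-blocks of $\bold{u}$, as the paper already observes just below \eqref{TakagiMatrix}: a column coincidence $a'_j=d'_j$ amounts to $d_j-a_j=\omega_1$, the length of the $j$th $1$-block, while an antidiagonal coincidence $a'_{j-1}=d'_j$ amounts to $a_{j-1}-d_j=\omega_1$, the length of a finite $0$-block. Then, from the definition of $\phi_{a_j}$, a single application of $\Phi_{01}$ shortens every finite block by exactly one, since each ascent elimination removes the last $0$ of the adjoining finite $0$-block and the first $1$ of the adjoining $1$-block. Hence no block vanishes during the first $\omega_1-1$ iterations of $\Phi_{01}$, and exactly the $m_1$ blocks of length $\omega_1$ disappear at the $\omega_1$th iteration, matching Takagi's count $m_1$.

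For the numerical values of the riggings, I would show by a direct count that, for any descent $d_j$,
\begin{equation*}
f_{\mathrm{Asc}(\bold{u})}(d_j)\;=\;d_j-(2N-2j+2),
\end{equation*}
using the fact that $a_k$ and $a_k+1$ both lie in $[0,d_j]$ precisely when $k\geq j$, by the interlacing property. Because the block structure is preserved during the first $\omega_1-1$ iterations, iterating this shift $\omega_1$ times yields $d_j-\omega_1(2N-2j+2)=d'_j$, which is exactly Takagi's rigging $r$. Combined with the identification $J_1=J_1^{(01)}=\rho_{01}(\bold{u})=f_{\mathrm{Asc}(\bold{u})}(I'_N)$ from subsection 2.2 --- and the observation that the elements of $I'_N$ record exactly the ascents associated with a vanishing block --- this identifies the rigging-sets produced at the first nontrivial level in both approaches. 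The induction then closes by verifying that Takagi's residual matrix $M_1$ is precisely the $(\mathrm{Asc},\mathrm{Des})$-matrix of $\Phi_{01}^{\omega_1}(\bold{u})\in\mathcal{U}^+$, so that stages $2,\ldots,p$ of Takagi reproduce $KKR(\Phi_{01}^{\omega_1}(\bold{u}))$.

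The hard part will be the bookkeeping around \emph{overlapping} coincidences in $M'$: when an adjacent $1$-block and $0$-block both have minimal length $\omega_1$, the associated column and antidiagonal equalities share an entry, as in Example \ref{eg:Takagi'sMmatrix:00001101100110100} where the entry $a'_1=6$ satisfies both $a'_1=d'_1$ and $a'_1=d'_2$. One must check that Takagi's greedy left-to-right deletion rule selects exactly the same set of ``vanishing events'' as the successive drops in $\mathrm{asc}$ recorded by \eqref{recusionRel_I'j} --- so that, for example, two adjacent simultaneous vanishings are always counted as a single pair deletion in both algorithms. A short case-analysis along the lines of the four-case split in the proof of Lemma \ref{lemma:rho10=sigma_rho01} should suffice to handle every configuration.
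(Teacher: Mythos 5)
Your proposal follows essentially the same route as the paper's own proof: identifying each subtraction of the staircase matrix with one application of $\Phi_{01}$ (so that every $1$-block and every finite $0$-block shrinks by one per step), locating the first non-empty rigging at the $\omega_1$th iteration where the minimal blocks vanish, and closing by induction after checking that Takagi's residual matrix is exactly the $(\mathrm{Asc},\mathrm{Des})$-matrix of $\Phi_{01}^{\omega_1}(\bold{u})$. The explicit value check $f_{\mathrm{Asc}(\bold{u})}(d_j)=d_j-(2N-2j+2)$ and the flagged bookkeeping for overlapping column/antidiagonal coincidences are points the paper treats only implicitly (it asserts the commutativity of the pairwise deletion with $\Phi_{01}$ without a case analysis), so your version is, if anything, slightly more detailed at the same steps.
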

\begin{proof}
Consider a sequence $\bold{u}\in\mathcal{U}_N^+$ with 
$\mathrm{Des}(\bold{u})=\left\{d_1>\cdots >d_N\right\}$ and  
$\mathrm{Asc}(\bold{u})=\left\{a_1>\cdots >a_N\right\}$.
The Takagi matrix $M$ associated with $\bold{u}$ is 
\eqref{TakagiMatrix}.  Define a map $\mathcal{T}$ as 
\begin{equation}
\mathcal{T}(M)=M-\begin{pmatrix}
1 & 3 & \cdots & 2N-1\\
2 & 4 & \cdots & 2N
\end{pmatrix}.
\end{equation}
Define $\omega$ as above and 
\begin{equation}
M^{(j)}=\mathcal{T}^j(M)
=\begin{pmatrix}
d^{(j)}_N & \cdots & d^{(j)}_1\\
a^{(j)}_N & \cdots & a^{(j)}_1
\end{pmatrix}
\quad
(j=1,2,\ldots,\omega).
\end{equation}
For $j=1,2,\ldots,\omega-1$, 
denote by $\bold{u}^{(j)}$
the sequence corresponding to $M^{(j)}$. 
Note that $\mathrm{des}\left(\bold{u}^{(j)}\right)
=\mathrm{asc}\left(\bold{u}^{(j)}\right)=N$, 
$\bold{u}^{(j)} = \Phi_{01}^j\left(\bold{u}\right)$ 
for $j=0,1,\ldots,\omega-1$, 
and the following diagram commutes:
\begin{equation}
\begin{CD}
M @>{\mathcal{T}}>> M^{(1)} @>{\mathcal{T}}>> M^{(2)} 
@>{\mathcal{T}}>> \cdots @>{\mathcal{T}}>> M^{(\omega-1)}\\
@VVV   @VVV         @VVV         @.          @VVV\\
\bold{u} @>>{\Phi_{01}}> \bold{u}^{(1)} @>>{\Phi_{01}}> \bold{u}^{(2)}
@>>{\Phi_{01}}> \cdots @>>{\Phi_{01}}> \bold{u}^{(\omega-1)}
\lefteqn{\; .}
\end{CD}
\end{equation}
Thus we have
\begin{align}
\lambda_j &= \mathrm{asc}\left(\bold{u}^{(j-1)}\right)=N
\quad (j=1,\ldots,\omega), \\
J_j &= \rho_{01}\left(\bold{u}^{(j-1)}\right) = \emptyset
\quad (j=1,\ldots,\omega-1),
\end{align}
and the first non-empty rigging appears for $\bold{u}^{(\omega-1)}$, 
which occurs at 
$d^{(\omega-1)}_j-a^{(\omega-1)}_j=1$ that is $d_j-a_j=\omega$, 
or at $a^{(\omega-1)}_{j-1}-d^{(\omega-1)}_j=1$ that is 
$a_{j-1}-d_j=\omega$. 

Up to now, we have shown that 
the first $\omega$ columns of Figure \ref{fig:rigged_config}, with
 their corresponding riggings, coincide exactly with those of Figure 
 \ref{fig:Takagi}. 
As the matrix $M^{(\omega)}$ contains
equal values in either the same column or the same antidiagonal, 
it does not correspond to any binary sequence. 
Denote by $\widetilde{M}$ the matrix obtained by
deleting all pairs of coinciding elements from $M^{(\omega)}$
and by $\widetilde{\bold{u}}$ the corresponding binary sequence. 
Clearly,  
$\widetilde{\bold{u}}=\Phi_{01}\left(\bold{u}^{(\omega-1)}\right)$ 
and the following diagram commutes:
\begin{equation}
\begin{tikzcd}[column sep=4em]
M^{(\omega-1)} \arrow[r,"\mathcal{T}"]\arrow[d]
& M^{(\omega)} 
\arrow[r,"\oalign{{\scriptsize pairwise}\crcr{\scriptsize deletion}}"] 
& \widetilde{M}\arrow[d]\\
\bold{u}^{(\omega-1)} \arrow[rr,"\Phi_{01}"'] & & \widetilde{\bold{u}}
\lefteqn{\;.}
\end{tikzcd}
\end{equation}
This process can be repeated until all entries in $M$ have been deleted. 
\end{proof}

Combining the results in this section, 
we conclude that the three approaches, the $\mathfrak{sl}_2$-rigged configuration, 
01-elimination with rigging and Takagi's $M$-matrix, are all equivalent. 
We remark that no explicit proof for this equivalence was presented
in \cite{Takagi}. 

\section{Carrier with finite capacity}
\label{sec:finite}
The carrier description of the box-ball system was introduced in
\cite{TM}. In the case where the site capacity is $L$ and the carrier
capacity is $M$, the time-evolution rule is given by
\begin{equation}
\begin{aligned}
u_j^{t+1} &= u_j^t+\min\left\{v_j^t,L-u_j^t\right\}
-\min\left\{u_j^t,M-v_j^t\right\},\\
v_{j+1}^t &= 
v_j^t-\min\left\{v_j^t,L-u_j^t\right\}
+\min\left\{u_j^t,M-v_j^t\right\},
\end{aligned}
\label{BBSC_rule}
\end{equation}
where $u_j^t$ denotes the number of balls in the $j$th box at time $t$, 
and $v_j^t$ the number of balls in the carrier just before the $j$th box 
at time $t$. The rule \eqref{BBSC_rule} can be represented schematically as in 
Figure \ref{fig:BBSC}.
The graphical representation in Figure \ref{fig:Graph_Repr_TM} 
means the following \cite{TM}; 
Assume the carrier carries $v_j^t$ balls before it passes the $j$th box
which contains $u_j^t$ balls. At this stage 
the carrier has $M-v_j^t$ vacant spaces and the $j$th box has 
$L-u_j^t$. When the carrier passes the box, the carrier puts
as many balls as possible into the box, and simultaneously, 
obtains as many balls from the box as possible, i.e., it offloads $\min\left\{v_j^t,L-u_j^t\right\}$ balls into the box 
and receives $\min\left\{u_j^t,M-v_j^t\right\}$ balls from the
box. Obviously, the number of balls is conserved and we have
$v_{j+1}^t -v_j^t = -(u_j^{t+1}-u_j^t)$. One time step in the evolution
corresponds to the carrier moving through all boxes, from left to right,
which gives rise to the equations \eqref{BBSC_rule}.

\begin{figure}[htbp]
\begin{center}
\begin{tikzpicture}
\path (0,1) node[anchor=east] (W) {$v_j^t$};
\path (2,1) node (E)[anchor=west] {$v_{j+1}^t$};
\path (1,2) node (N)[anchor=south] {$u_j^t$};
\path (1,0) node (S)[anchor=north] {$u_j^{t+1}$};
\draw[->,thick](W)--(E);
\draw[->,thick](N)--(S);
\end{tikzpicture}
\caption{Graphical representation of \eqref{BBSC_rule}}
\label{fig:BBSC}
\end{center}
\end{figure}
The rule \eqref{BBSC_rule} 
can be obtained from the ultra-discrete limit
of a modified discrete KdV equation \cite{TM,KNW2009}. 
It is equivalent to 
a combinatorial $R$-matrix of $A_1^{(1)}$-type \cite{NY}
as shown in \cite{HHIKTT}.

Hereafter we fix the value of the box capacity at $L=1$ (i.e., $u_j^t=0$ or $1$) and 
consider the transformation $T_M:\mathcal{U}\to\mathcal{U}$; 
$\left\{u_0,u_1,u_2,\ldots\right\}\mapsto
\left\{\overline{u}_0,\overline{u}_1,\overline{u}_2,\ldots\right\}$, 
obtained from the recursion relations
\begin{equation}
\begin{aligned}
\overline{u}_j &= u_j+\min\left\{v_j,1-u_j\right\}
-\min\left\{u_j,M-v_j\right\},\\
v_{j+1} & 
= v_j-\min\left\{v_j,1-u_j\right\}
+\min\left\{u_j,M-v_j\right\}, 
\end{aligned}
\label{def:T_M}
\end{equation}
for the boundary condition $v_0=0$. This process is represented
graphically in Figure \ref{fig:Graph_Repr_TM} and an example for
$T_{M=2}$ is shown in Figure \ref{fig:e.g._M=2}.

Like the Takahashi-Satsuma BBS, 
it is possible to describe rule \eqref{def:T_M} in terms of 
10-arc lines.
\begin{prop}
\label{prop:arc-description_TM}
The time-evolution rule $T_M$ can be described as follows: 
\begin{itemize}
\item[i)] For $\bold{u}\in\mathcal{U}$, define 
$\bold{v}=(v_0,v_1,v_2,\ldots)$ by the initial condition $v_0=0$ and the
recursion relation \eqref{def:T_M}.
\item[ii)] For an integer $n$, if $u_n=1$ and $v_n=M$, then 
underline the ``$1$'' at the $n$th site. 
\item[iii)] Apply the 10-arc line procedure 
of Section \ref{subsec:TimeEvolution_TakahashiSatsuma}
while disregarding all the $\underline{1}$'s.
\item[iv)] Remove all the underlines.
\end{itemize}
\end{prop}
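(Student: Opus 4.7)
The plan is to reinterpret the carrier dynamics \eqref{def:T_M} as a bracket-matching process using a stack of capacity $M$. Reading $\bold{u}$ from left to right, the value $v_n$ records the current stack size. A direct case analysis of the two $\min$ expressions in \eqref{def:T_M} yields four cases: if $u_n=1$ and $v_n<M$, the carrier loads the ball so $\overline{u}_n=0$ (push); if $u_n=1$ and $v_n=M$, the carrier is already full and the ball stays put, $\overline{u}_n=1$ (\emph{this is precisely the underlined case}); if $u_n=0$ and $v_n>0$, the carrier drops a ball, $\overline{u}_n=1$ (pop); and if $u_n=0$ with $v_n=0$, nothing happens. In particular, the underlined positions are exactly those 1's that the stack cannot accept, and the sites where $\overline{u}_n\neq u_n$ are the push/pop positions.

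Next, I would show that the pairing induced by the push/pop process---each push at position $i$ is matched with the pop at the unique position $j>i$ that removes that specific token from the stack---coincides with the 10-arc matching of Section \ref{subsec:TimeEvolution_TakahashiSatsuma} applied to the subsequence of $\bold{u}$ obtained by deleting all underlined 1's. Both procedures are standard encodings of nested matching of a binary word: the iterated arc-line construction peels off innermost 10 pairs round by round, while the greedy left-to-right stack scan produces the same pairs by keeping unmatched 1's on a stack and popping upon encountering 0's. A short induction on the length of the sequence, or equivalently on the number of arc-rounds required, confirms that the two constructions produce the same set of matched pairs.

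Finally, I must verify that deleting the underlined 1's from $\bold{u}$ does not alter the bounded-capacity pairing on the remaining sites. An underlined 1 at position $n$ arises precisely when the stack already holds $M$ tokens; those $M$ earlier tokens will consume the next $M$ pops themselves, so the blocked 1 cannot participate in any subsequent match. Consequently, removing the blocked 1's reduces the bounded-capacity process on $\bold{u}$ to an unbounded process on the filtered subsequence, whose push/pop pairing is precisely the 10-arc matching by the previous paragraph. Swapping each matched 10 pair in step iii), and reinstating the underlined 1's in step iv), reproduces the update $\overline{u}_n$ site by site, which gives the claim.

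The main obstacle I expect is to pin down cleanly the equivalence in the second paragraph between the iterated arc-line pairing and the stack-based greedy pairing (a classical but often-glossed-over combinatorial fact), together with the verification in the third paragraph that removing each blocked 1 truly leaves the matching of the remaining sites unchanged. Once these two combinatorial points are in hand, the identification of the arc procedure with $T_M$ reduces to the case-by-case comparison from the first paragraph.
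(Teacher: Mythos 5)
Your argument is correct, and it is worth noting that the paper states Proposition \ref{prop:arc-description_TM} without any proof at all, passing directly to the worked example of Figures \ref{fig:e.g._M=2} and \ref{fig:time-evolution_M=2}; your proposal therefore supplies a justification rather than paralleling one. The case analysis of \eqref{def:T_M} in your first paragraph is exactly right: $u_n=1$ with $v_n<M$ gives $\overline{u}_n=0$ and $v_{n+1}=v_n+1$ (push); $u_n=1$ with $v_n=M$ gives $\overline{u}_n=1$ and $v_{n+1}=M$ (the underlined case, a no-op for the carrier); $u_n=0$ with $v_n>0$ gives $\overline{u}_n=1$ and $v_{n+1}=v_n-1$ (pop); and $u_n=0$ with $v_n=0$ is idle. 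The identification of the iterated arc-line matching with the LIFO parenthesis matching is the standard nested-matching fact and your induction disposes of it. The one place I would tighten is your third paragraph: the remark that a blocked $1$ ``cannot participate in any subsequent match'' is not really the issue (by definition it never enters the carrier), and the clean statement you actually need is that the carrier height $v_n$ of the bounded process on $\bold{u}$ equals, site by site, the stack height of the \emph{unbounded} process run on the word obtained by deleting the underlined $1$'s --- immediate by induction, since an underlined $1$ leaves $v$ unchanged while every other symbol acts identically in both processes. It follows that the unbounded stack on the filtered word never exceeds height $M$, so the two processes perform the same pushes and pops and hence produce the same LIFO pairing; swapping symbols at the matched pairs while leaving the underlined $1$'s and the idle $0$'s fixed then reproduces $\overline{u}_n$ at every site, which is the claim.
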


\begin{figure}[htbp]
\begin{center}
\begin{tikzpicture}
\path (0.3,1) node [anchor=east] (V0) {$v_0=0$};
\path (2,1) node (V1) {$v_1$};
\path (4,1) node (V2) {$v_2$};
\path (6,1) node (V3) {$v_3$};
\path (8,1) node (V4) {};
\path (8.5,1) node {$\cdots$};
\draw[->,thick](V0)--(V1);
\draw[->,thick](V1)--(V2);
\draw[->,thick](V2)--(V3);
\draw[->,thick](V3)--(V4);
\path (1,2) node (U00) {$u_0$};
\path (1,0) node (U01) {$\overline{u}_0$};
\draw[->,thick](U00)--(U01);
\path (3,2) node (U10) {$u_1$};
\path (3,0) node (U11) {$\overline{u}_1$};
\draw[->,thick](U10)--(U11);
\path (5,2) node (U20) {$u_2$};
\path (5,0) node (U21) {$\overline{u}_2$};
\draw[->,thick](U20)--(U21);
\path (7,2) node (U30) {$u_3$};
\path (7,0) node (U31) {$\overline{u}_3$};
\draw[->,thick](U30)--(U31);
\path (8.5,0) node {$\cdots$};
\path (8.5,2) node {$\cdots$};
\end{tikzpicture}
\caption{Graphical representation of $T_M$}
\label{fig:Graph_Repr_TM}
\end{center}
\end{figure}

\begin{figure}[htbp]
\begin{center}
\begin{tikzpicture}
\path (0,0.5) node (v0) {$0$};
\path (1,0.5) node (v1) {$0$};
\path (0.5,1) node (u0) {$0$};
\path (0.5,0) node (U0) {$0$};
\draw[->,thick](u0)--(U0);
\draw[->,thick](v0)--(v1);
\path (2,0.5) node (v2) {$1$};
\path (1.5,1) node (u1) {$1$};
\path (1.5,0) node (U1) {$0$};
\draw[->,thick](u1)--(U1);
\draw[->,thick](v1)--(v2);
\path (3,0.5) node (v3) {$2$};
\path (2.5,1) node (u2) {$1$};
\path (2.5,0) node (U2) {$0$};
\draw[->,thick](u2)--(U2);
\draw[->,thick](v2)--(v3);
\path (4,0.5) node (v4) {$2$};
\path (3.5,1) node (u3) {$1$};
\path (3.5,0) node (U3) {$1$};
\draw[->,thick](u3)--(U3);
\draw[->,thick](v3)--(v4);
\path (5,0.5) node (v5) {$1$};
\path (4.5,1) node (u4) {$0$};
\path (4.5,0) node (U4) {$1$};
\draw[->,thick](u4)--(U4);
\draw[->,thick](v4)--(v5);
\path (6,0.5) node (v6) {$2$};
\path (5.5,1) node (u5) {$1$};
\path (5.5,0) node (U5) {$0$};
\draw[->,thick](u5)--(U5);
\draw[->,thick](v5)--(v6);
\path (7,0.5) node (v7) {$2$};
\path (6.5,1) node (u6) {$1$};
\path (6.5,0) node (U6) {$1$};
\draw[->,thick](u6)--(U6);
\draw[->,thick](v6)--(v7);
\path (8,0.5) node (v8) {$2$};
\path (7.5,1) node (u7) {$1$};
\path (7.5,0) node (U7) {$1$};
\draw[->,thick](u7)--(U7);
\draw[->,thick](v7)--(v8);
\path (9,0.5) node (v9) {$1$};
\path (8.5,1) node (u8) {$0$};
\path (8.5,0) node (U8) {$1$};
\draw[->,thick](u8)--(U8);
\draw[->,thick](v8)--(v9);
\path (10,0.5) node (v10) {$0$};
\path (9.5,1) node (u9) {$0$};
\path (9.5,0) node (U9) {$1$};
\draw[->,thick](u9)--(U9);
\draw[->,thick](v9)--(v10);
\path (11,0.5) node (v11) {$0$};
\path (10.5,1) node (u10) {$0$};
\path (10.5,0) node (U10) {$0$};
\draw[->,thick](u10)--(U10);
\draw[->,thick](v10)--(v11);
\end{tikzpicture}
\caption{Example with $M=2$}
\label{fig:e.g._M=2}
\end{center}
\end{figure}

Let us consider the sequence $\bold{u} = 01110111000\cdots$
under the condition $M=2$ as an example. 
Using the second equation in \eqref{def:T_M}, 
one can obtain the sequence
$\bold{v}=\left\{v_0=0,v_1,v_2,\ldots\right\}$ as 
$\bold{v}=001221222100\cdots$, 
and hence $\bold{u}$ is underlined as 
$\bold{u} = 011\underline{1}01\underline{1}
\underline{1}000\cdots$. Then $T_2(\bold{u})$ is 
obtained as in Figure \ref{fig:time-evolution_M=2}, 
which coincides with the result in Figure \ref{fig:e.g._M=2}.

\begin{figure}[htbp]
\begin{center}
\begin{tikzpicture}
\path (1.15,2) node {$\bold{u}$ \ \ $=$};
\path (2,2) node {$0$};
\path (2.5,2) node {$1$};
\path (3,2) node {$1$};
\path (3.5,2) node {$\underline{1}$};
\path (4,2) node {$0$};
\path (4.5,2) node {$1$};
\path (5,2) node {$\underline{1}$};
\path (5.5,2) node {$\underline{1}$};
\path (6,2) node {$0$};
\path (6.5,2) node {$0$};
\path (7,2) node {$0$};
\path (7.5,2) node {$\cdots$};
\draw (3,2.2) to [out=60,in=120] (4,2.2);
\draw (4.5,2.2) to [out=60,in=120] (6,2.2);
\draw (2.5,2.2) to [out=55,in=125] (6.5,2.2);
\path (0,0) node {$\to$};
\path (1,0) node {$T_2(\bold{u})=$};
\path (2,0) node {$0$};
\path (2.5,0) node {$0$};
\path (3,0) node {$0$};
\path (3.5,0) node {$1$};
\path (4,0) node {$1$};
\path (4.5,0) node {$0$};
\path (5,0) node {$1$};
\path (5.5,0) node {$1$};
\path (6,0) node {$1$};
\path (6.5,0) node {$1$};
\path (7,0) node {$0$};
\path (7.5,0) node {$\cdots$};
\draw (3,0.2) to [out=60,in=120] (4,0.2);
\draw (4.5,0.2) to [out=60,in=120] (6,0.2);
\draw (2.5,0.2) to [out=55,in=125] (6.5,0.2);
\end{tikzpicture}
\caption{Example of time-evolution $T_{M=2}$}
\label{fig:time-evolution_M=2}
\end{center}
\end{figure}

Then it follows from Proposition \ref{prop:arc-description_TM} that 
\begin{equation}
\label{TmPhi10=Phi01Tm+1}
T_{M-1}\circ \Phi_{10} = \Phi_{01}\circ T_{M}\quad (M=1,2,\ldots), 
\end{equation}
which is the finite capacity version of \eqref{TPhi=PhiT}.
The relation \eqref{TmPhi10=Phi01Tm+1} is best explained by an example. 
Starting from $\bold{u}=0111100111000$, one has 
$(T_2\circ\Phi_{10})(\bold{u})=(\Phi_{10}\circ T_3)
=000110111$ as shown in Figure \ref{fig:TmPhi10=Phi01Tm+1}.
Note that the operation $\Phi_{10}$ still eliminates
1st arcs in $\bold{u}$, though the 1st arcs do \textit{not} always 
connect adjacent pairs of 1 and 0 (See Figure
\ref{fig:TmPhi10=Phi01Tm+1}). 
Note also that $T_1$ is merely a forward shift $\Lambda$, since 
in the case $M=1$ we have
$\overline{u}_j=v_j=u_{j-1}$. 

\begin{figure}[htbp]
\begin{center}
\begin{tikzpicture}
\path (0,0) node[anchor=north] {$0$};
\path (0.3,0) node[anchor=north] {$0$};
\path (0.6,0) node[anchor=north] {$0$};
\path (0.9,0) node[anchor=north] {$0$};
\path (1.2,0) node[anchor=north] {$1$};
\path (1.5,0) node[anchor=north] {$1$};
\path (1.8,0) node[anchor=north] {$1$};
\path (2.1,0) node[anchor=north] {$0$};
\path (2.4,0) node[anchor=north] {$0$};
\path (2.7,0) node[anchor=north] {$1$};
\path (3.0,0) node[anchor=north] {$1$};
\path (3.3,0) node[anchor=north] {$1$};
\path (3.6,0) node[anchor=north] {$1$};
\draw (0.9,0) to [out=60,in=120] (1.5,0);
\draw (2.4,0) to [out=60,in=120] (3.0,0);
\draw (0.6,0) to [out=60,in=120] (1.8,0);
\draw (2.1,0) to [out=60,in=120] (3.3,0);
\draw (0.3,0) to [out=55,in=125] (3.6,0);
\draw[->] (4,-0.25)--(5,-0.25);
\path (4.5,0) node {\small $\Phi_{01}$};
\path (5.4,0) node[anchor=north] {$0$};
\path (5.7,0) node[anchor=north] {$0$};
\path (6.0,0) node[anchor=north] {$0$};
\path (6.3,0) node[anchor=north] {$1$};
\path (6.6,0) node[anchor=north] {$1$};
\path (6.9,0) node[anchor=north] {$0$};
\path (7.2,0) node[anchor=north] {$1$};
\path (7.5,0) node[anchor=north] {$1$};
\path (7.8,0) node[anchor=north] {$1$};
\draw (6.0,0) to [out=60,in=120] (6.6,0);
\draw (6.9,0) to [out=60,in=120] (7.5,0);
\draw (5.7,0) to [out=60,in=120] (7.8,0);
\draw[->] (1.8,2)--(1.8,1);
\path (1.5,1.5) node {\small $T_3$};
\draw[->] (6.6,2)--(6.6,1);
\path (6.9,1.5) node {\small $T_2$};
\path (0,2.7) node[anchor=north] {$0$};
\path (0.3,2.7) node[anchor=north] {$1$};
\path (0.6,2.7) node[anchor=north] {$1$};
\path (0.9,2.7) node[anchor=north] {$1$};
\path (1.2,2.7) node[anchor=north] {$\underline{1}$};
\path (1.5,2.7) node[anchor=north] {$0$};
\path (1.8,2.7) node[anchor=north] {$0$};
\path (2.1,2.7) node[anchor=north] {$1$};
\path (2.4,2.7) node[anchor=north] {$1$};
\path (2.7,2.7) node[anchor=north] {$\underline{1}$};
\path (3.0,2.7) node[anchor=north] {$0$};
\path (3.3,2.7) node[anchor=north] {$0$};
\path (3.6,2.7) node[anchor=north] {$0$};
\draw (0.9,2.7) to [out=60,in=120] (1.5,2.7);
\draw (2.4,2.7) to [out=60,in=120] (3.0,2.7);
\draw (0.6,2.7) to [out=60,in=120] (1.8,2.7);
\draw (2.1,2.7) to [out=60,in=120] (3.3,2.7);
\draw (0.3,2.7) to [out=55,in=125] (3.6,2.7);
\draw[->] (4,2.45)--(5,2.45);
\path (4.5,2.7) node {\small $\Phi_{10}$};
\path (5.4,2.7) node[anchor=north] {$0$};
\path (5.7,2.7) node[anchor=north] {$1$};
\path (6.0,2.7) node[anchor=north] {$1$};
\path (6.3,2.7) node[anchor=north] {$\underline{1}$};
\path (6.6,2.7) node[anchor=north] {$0$};
\path (6.9,2.7) node[anchor=north] {$1$};
\path (7.2,2.7) node[anchor=north] {$\underline{1}$};
\path (7.5,2.7) node[anchor=north] {$0$};
\path (7.8,2.7) node[anchor=north] {$0$};
\draw (6.0,2.7) to [out=60,in=120] (6.6,2.7);
\draw (6.9,2.7) to [out=60,in=120] (7.5,2.7);
\draw (5.7,2.7) to [out=60,in=120] (7.8,2.7);
\end{tikzpicture}
\caption{Example of the relation \eqref{TmPhi10=Phi01Tm+1}}
\label{fig:TmPhi10=Phi01Tm+1}
\end{center}
\end{figure}

We can now consider invariants with respect to the $T_M$ evolution. 
\begin{thm}
\label{thm:1st_CQ}
For $\bold{u}\in\mathcal{U}^+$, define $\ell$ as the 
minimal integer that satisfies
$\mathcal{N}\left(\Phi_{01}^{\ell}\right)(\bold{u})=0$. 
For $j=1,2,\ldots,\min\{\ell,M\}$, 
define $\lambda_j(\bold{u})$ as $\lambda_j(\bold{u})
:=\left(\mathrm{asc}\circ \Phi_{01}^{j-1}\right)(\bold{u})$. 
Then we have 
$(\lambda_j\circ T_M)(\bold{u})=\lambda_j(\bold{u})$.
\end{thm}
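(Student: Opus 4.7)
The plan is to mimic the proof of Theorem \ref{thm:invariants0} from Section 2, using \eqref{TmPhi10=Phi01Tm+1} as the replacement for the commutation relation \eqref{TPhi=PhiT}. First, I would iterate \eqref{TmPhi10=Phi01Tm+1} to push $T_M$ past successive applications of $\Phi_{01}$, proving by a simple induction on $k$ that
\begin{equation*}
\Phi_{01}^k \circ T_M = T_{M-k}\circ \Phi_{10}^k\qquad (k=0,1,\ldots, M),
\end{equation*}
where the base case $k=0$ is trivial and the inductive step is $\Phi_{01}^{k+1}\circ T_M = \Phi_{01}\circ T_{M-k}\circ\Phi_{10}^k = T_{M-k-1}\circ\Phi_{10}^{k+1}$. (At the final value $k=M$ one invokes $T_0=\mathrm{id}$, which is immediate from the defining recursion \eqref{def:T_M}.)

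Next, using \eqref{Phi10=LambdaPhi01=Phi01Lambda} one has $\Lambda\circ\Phi_{01}=\Phi_{01}\circ\Lambda$, hence $\Phi_{10}^k = \Lambda^k\circ\Phi_{01}^k$. I would then observe that $T_m\circ\Lambda^k = \Lambda^k\circ T_m$ on $\mathcal{U}$: a sequence in $\Lambda^k(\mathcal{U})$ begins with $k$ zeros, and since the boundary condition is $v_0=0$, the carrier remains empty through those boxes, so the carrier enters box $k$ in exactly the same state ($v=0$) as it would enter box $0$ of the unshifted sequence. Combining these two facts gives
\begin{equation*}
\Phi_{01}^k\circ T_M = \Lambda^k\circ T_{M-k}\circ\Phi_{01}^k \qquad(k=0,1,\ldots,M),
\end{equation*}
which is the finite-carrier analogue of \eqref{Lambda^kTPhi01^k=T(LambdaPhi01)^k}.

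The last ingredient needed is that $\mathrm{asc}\circ T_m = \mathrm{asc}$ for all $m\geq 0$, the finite-carrier version of \eqref{N10=N10_T}. I would derive this from ball conservation of $T_m$ together with \eqref{TmPhi10=Phi01Tm+1}: applying $\mathcal{N}$ to both sides of $\Phi_{01}\circ T_m = T_{m-1}\circ\Phi_{10}$, using $\mathcal{N}\circ\Phi_{01}=\mathcal{N}-\mathrm{asc}$, $\mathcal{N}\circ\Phi_{10}=\mathcal{N}-\mathrm{des}$, $\mathcal{N}\circ T_m=\mathcal{N}$ and \eqref{N10=N01}, one immediately obtains $\mathrm{asc}\circ T_m = \mathrm{asc}$.

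Putting everything together, for $1\leq j\leq \min\{\ell,M\}$ we set $k=j-1\leq M-1$ and compute
\begin{equation*}
\lambda_j(T_M(\mathbf{u})) = \bigl(\mathrm{asc}\circ\Phi_{01}^{j-1}\circ T_M\bigr)(\mathbf{u})
= \bigl(\mathrm{asc}\circ\Lambda^{j-1}\circ T_{M-j+1}\circ\Phi_{01}^{j-1}\bigr)(\mathbf{u})
= \bigl(\mathrm{asc}\circ\Phi_{01}^{j-1}\bigr)(\mathbf{u})
=\lambda_j(\mathbf{u}),
\end{equation*}
using $\mathrm{asc}\circ\Lambda=\mathrm{asc}$ and $\mathrm{asc}\circ T_{M-j+1}=\mathrm{asc}$. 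The only nontrivial piece of the argument is the commutation $T_m\circ\Lambda^k=\Lambda^k\circ T_m$; although this is visually obvious from the carrier picture, some care is required to justify it from the formula \eqref{def:T_M} via the initial condition $v_0=0$. Everything else is symbolic manipulation of already-established identities, which reflects the fact that the invariance of $\lambda_j$ under $T_M$ is structurally identical to the invariance established for the Takahashi--Satsuma BBS.
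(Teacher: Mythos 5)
Your proposal is correct and follows essentially the same route as the paper: the paper's proof consists precisely of the identity $\Lambda^k\circ T_{M-k}\circ\Phi_{01}^k=\Phi_{01}^k\circ T_M$ (its \eqref{Lambda_TM_Phi10=Phi10_TM}), invoked "in the same manner" as in Theorem \ref{thm:invariants0}, which is exactly what you derive and apply. You merely spell out the details the paper leaves implicit — the induction on \eqref{TmPhi10=Phi01Tm+1}, the commutation $T_m\circ\Lambda=\Lambda\circ T_m$, and the finite-carrier analogue $\mathrm{asc}\circ T_m=\mathrm{asc}$ — all of which are justified correctly.
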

Theorem \ref{thm:1st_CQ} is a finite-carrier version of 
Theorem \ref{thm:invariants0}, and 
can be proved in the same manner by using 
\eqref{TmPhi10=Phi01Tm+1} instead of \eqref{TPhi=PhiT}.
\begin{proof}
The desired relation follows from
\begin{equation}
\Lambda^k\circ T_{M-k}\circ\Phi_{01}^k
=\Phi_{01}^k\circ T_M
\quad (k=0,1,2,\ldots,M), 
\label{Lambda_TM_Phi10=Phi10_TM}
\end{equation}
which corresponds to Lemma \ref{lemma:Lambda_T_Phi10=Phi10_T}.
\end{proof}

In the finite-carrier case, besides the invariants that are covered
by Theorem \ref{thm:1st_CQ}, we can consider yet another set of
invariants whenever $\lambda_M:=\left(\mathrm{asc}\circ
\Phi_{01}^{M-1}\right)(\bold{u})\neq 0$.
\begin{thm}
\label{thm:2nd_CQ} 
Fix an integer $M>1$. 
For $\bold{u}\in\mathcal{U}^{(M)}$, 
define 
$\bm{a}= \{a_1>\cdots>a_{\lambda_M}\}$ and 
$\bm{d}=\{d_1>\cdots>d_{\lambda_M}\}$ as 
\begin{equation}
\bm{a} = \mathrm{Asc}\left(\Phi_{01}^{M-1}(\bold{u})\right), \quad
\bm{d} = \mathrm{Des}\left(\Phi_{01}^{M-1}(\bold{u})\right), 
\end{equation}
i.e. $\Phi_{01}^{M-1}(\bold{u}) = \bold{u}(\bm{a};\bm{d})$. 
Denote by $\nu_j$ the length of the $j$th block (counted from the right, 
see Figure \ref{fig:def_nu}) of consecutive 1s in $\bold{u}(\bm{a};\bm{d})$,
\begin{equation}
\nu_j = d_j-a_j \quad (j=1,\ldots,\lambda_M). 
\label{nu=d-a}
\end{equation}
Then the composition $\nu=(\nu_j)_{j=1,2,\ldots,\lambda_M}$ is 
invariant under the action of $T_M$. 
\end{thm}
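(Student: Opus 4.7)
The plan is to reduce the statement to a single application of the commutation relation \eqref{Lambda_TM_Phi10=Phi10_TM}, which here plays the role that Lemma \ref{lemma:Lambda_T_Phi10=Phi10_T} played in the proof of Theorem \ref{thm:invariants0}. Specialising \eqref{Lambda_TM_Phi10=Phi10_TM} to $k = M-1$ and using the identification $T_1 = \Lambda$ noted just after \eqref{TmPhi10=Phi01Tm+1} should yield
\[
\Phi_{01}^{M-1}\bigl(T_M(\bold{u})\bigr)
= \Lambda^{M-1}\bigl(T_1(\Phi_{01}^{M-1}(\bold{u}))\bigr)
= \Lambda^{M}\bigl(\Phi_{01}^{M-1}(\bold{u})\bigr),
\]
so that $\Phi_{01}^{M-1}(T_M(\bold{u}))$ is nothing but the pure $M$-step forward shift of $\Phi_{01}^{M-1}(\bold{u})$.

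Before reading off the conclusion I would briefly verify that $T_M(\bold{u})$ again belongs to $\mathcal{U}^{(M)}$, so that $\Phi_{01}^{M-1}(T_M(\bold{u}))$, and hence $\bm{a}$, $\bm{d}$ and $\nu(T_M(\bold{u}))$, are well defined. This is a routine consequence of \eqref{Lambda_TM_Phi10=Phi10_TM} applied for each $k \leq M-1$: one gets $\Phi_{01}^{k}(T_M(\bold{u})) = \Lambda^{k}\bigl(T_{M-k}(\Phi_{01}^{k}(\bold{u}))\bigr)$, which lies in $\mathcal{U}$ because $\Phi_{01}^{k}(\bold{u})\in\mathcal{U}$ by the hypothesis $\bold{u}\in\mathcal{U}^{(M)}$.

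To finish, I would observe that $\Lambda^{M}$ merely pads the sequence on the left with $M$ zeros and shifts every original entry $M$ places to the right, so that $\mathrm{Asc}$ and $\mathrm{Des}$ are translated uniformly by $M$. Consequently every block length $d_j - a_j$ is preserved, and the composition $\nu$ extracted from $\Lambda^{M}\bigl(\Phi_{01}^{M-1}(\bold{u})\bigr)$ agrees term by term with that extracted from $\Phi_{01}^{M-1}(\bold{u})$, which is the desired invariance $\nu(T_M(\bold{u})) = \nu(\bold{u})$. I do not expect any genuine obstacle here: the key point is that $\Phi_{01}^{M-1}$ strips away exactly enough of the state to convert the nonlinear evolution $T_M$ into the linear shift $\Lambda^M$, a fact already packaged inside \eqref{Lambda_TM_Phi10=Phi10_TM} via the degeneration $T_1 = \Lambda$.
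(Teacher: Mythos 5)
Your proposal is correct and follows essentially the same route as the paper: the paper's proof consists precisely of setting $k=M-1$ in \eqref{Lambda_TM_Phi10=Phi10_TM} and using $T_1=\Lambda$ to obtain $\Phi_{01}^{M-1}\circ T_M=\Lambda^{M}\circ\Phi_{01}^{M-1}$, whence the block lengths are shift-invariant. Your extra verification that $T_M(\bold{u})$ remains in $\mathcal{U}^{(M)}$ is a sensible point of care that the paper leaves implicit.
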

\begin{figure}[htbp]
\[
\begin{array}{l}
0\hspace{7mm}a_n\hspace{9mm}d_n\hspace{9mm}a_{n-1}\hspace{5mm}d_{n-1}
\hspace{14.5mm}a_2\hspace{10mm}d_2
\hspace{10mm}a_1\hspace{10mm}d_1
\\
0\;\cdots\; 0\; \underbrace{1\;\cdots\; 1}_{\nu_n}\; 
0\;\cdots\; 0\; \underbrace{1\;\cdots\; 1}_{\nu_{n-1}}\; 
\cdots\cdots\cdots\; 0\;\underbrace{1\;\cdots\; 1}_{\nu_2}\; 
0\;\cdots\; 0\;\underbrace{1\;\cdots\; 1}_{\nu_1}\; 0\;\cdots
\end{array}
\]
\caption{Definition of the composition $\nu$}
\label{fig:def_nu}
\end{figure}

\begin{proof}
Setting $k=M-1$ in \eqref{Lambda_TM_Phi10=Phi10_TM}, we have
\begin{equation}
\Phi_{01}^{M-1}\circ T_M 
= \Lambda^{M-1}\circ T_1\circ \Phi_{01}^{M-1}
= \Lambda^{M}\circ \Phi_{01}^{M-1}.
\end{equation}
This means that $\Phi_{01}^{M-1}(\bold{u})$ 
and $\left(\Phi_{01}^{M-1}\circ T_M \right)(\bold{u})$ coincide up to
a forward shift by $M$ steps (which is the maximum speed in the
system).
\end{proof}

Note that 
the existence of a $\nu_j>1$ implies that 
the corresponding soliton has a length greater than 
the maximum speed $M$ allowed by the time evolution $T_M$.

The data
$\bm{a}= \{a_1>\cdots>a_{\lambda_M}\}$ and
$\bm{d}=\{d_1>\cdots>d_{\lambda_M}\}$ in Theorem \ref{thm:2nd_CQ} 
satisfy the interlacing condition \eqref{InterlacingProperty}. 
Thus $\nu_j$ satisfies $0<\nu_j< a_{j-1}-a_j$ ($j=2,3,\ldots$).
For $\bold{u}\in\mathcal{U}^{(M)}$, we can define a modified version of 
\eqref{def:KKRmap}: 
\begin{equation}
KKR^{(M)}: \bold{u}\mapsto
\left\{
(\lambda_1,\ldots,\lambda_M),
\{J_1,\ldots,J_{M-1}\}, 
(\nu_1,\ldots,\nu_{\lambda_M}),
\{a_1,\ldots,a_{\lambda_M}\}
\right\}.
\label{modifiedRC}
\end{equation}
The data on the right-hand side of \eqref{modifiedRC} 
satisfy the conditions
\begin{equation}
\begin{aligned}
& \lambda_1\geq\cdots\geq\lambda_{M-1}\geq 0,\quad
J_{i,1}\geq\cdots\geq J_{i,m_i}\geq 0
\;\; (i=1,\ldots,M-1),\\
& 0<\nu_j< a_{j-1}-a_j \;\; (j=2,\ldots,\lambda_M), 
\end{aligned}
\end{equation}
with
$m_j=\lambda_j-\lambda_{j+1}$ ($j=1,\ldots,M-1$).
This can be represented graphically as in Figure \ref{fig:m-rigged_config}.
Gluing two diagrams in Figure \ref{fig:m-rigged_config}, 
with deleting the $M$th column of $\lambda$ (shaded in 
Figure \ref{fig:m-rigged_config}), 
gives the ``asymptotic soliton contents'' for the case with finite capacity 
(Figure \ref{fig:m-soliton_contents}).

\begin{figure}[t!]
\begin{center}
\begin{tikzpicture}
\draw[thick] (0,0) rectangle (0.5,6);
  \node (lambda1) at (0.25,5) {\rotatebox{90}{\scriptsize $\lambda_1$}};
  \coordinate (B1) at (0.25,0);
  \coordinate (T1) at (0.25,6);
    \draw (0.4,0.6) node[anchor=west] {\scriptsize $J_1$};
\draw[thick] (0.5,1.4) rectangle (1,6);
  \node (lambda2) at (0.75,5) {\rotatebox{90}{\scriptsize $\lambda_2$}};
  \coordinate (B2) at (0.75,1.4);
  \coordinate (T2) at (0.75,6);
    \draw (0.9,2) node[anchor=west] {\scriptsize $J_2$};
\draw[thick] (1,2.8) rectangle (1.5,6);
  \node at (1.25,5) {\scriptsize $\cdots$};
\draw[thick] (1.5,3.4) rectangle (2,6);
  \node (lambdaM-1) at (1.75,5) {\rotatebox{90}{\scriptsize $\lambda_{M-1}$}};
    \draw (1.9,3.6) node[anchor=west] {\scriptsize $J_{M-1}$};
  \coordinate (BM-1) at (1.75,3.4);
  \coordinate (TM-1) at (1.75,6);
\draw[thick,fill=gray!20] (2,4) rectangle (2.5,6);
  \node (lambdaM) at (2.25,5) {\rotatebox{90}{\scriptsize $\lambda_M$}};
  \coordinate (BM) at (2.25,4);
  \coordinate (TM) at (2.25,6);
\draw[->] (lambda1) -- (T1);
\draw[->] (lambda1) -- (B1);
\draw[->] (lambda2) -- (T2);
\draw[->] (lambda2) -- (B2);
\draw[->] (lambdaM-1) -- (TM-1);
\draw[->] (lambdaM-1) -- (BM-1);
\draw[->] (lambdaM) -- (TM);
\draw[->] (lambdaM) -- (BM);
\draw[thick] (4.5,5.5) rectangle (6.2,6);
  \node (nu1) at (5.2,5.75) {\scriptsize $\nu_1$};
  \node[anchor=east] at (4.5,5.75) {\scriptsize $a_1$};
  \coordinate (L1) at (4.5,5.75);
  \coordinate (R1) at (6.2,5.75);
\draw[thick] (4.5,5) rectangle (7,5.5);
  \node (nu2) at (5.2,5.25) {\scriptsize $\nu_2$};
  \node[anchor=east] at (4.5,5.25) {\scriptsize $a_2$};
  \coordinate (L2) at (4.5,5.25);
  \coordinate (R2) at (7,5.25);
\draw[thick] (4.5,4.5) rectangle (5.7,5);
 \node at (5.1,4.75) {\scriptsize $\vdots$};
 \node[anchor=east] at (4.4,4.75) {\scriptsize $\vdots$};
\draw[thick] (4.5,4) rectangle (6.6,4.5);
  \node (nulambdaM) at (5.2,4.25) {\scriptsize $\nu_{\lambda_M}$};
  \node[anchor=east] at (4.6,4.2) {\scriptsize $a_{\lambda_M}$};
  \coordinate (LlambdaM) at (4.5,4.25);
  \coordinate (RlambdaM) at (6.6,4.25);
\draw[->] (nu1) -- (L1);
\draw[->] (nu1) -- (R1);
\draw[->] (nu2) -- (L2);
\draw[->] (nu2) -- (R2);
\draw[->] (nulambdaM) -- (LlambdaM);
\draw[->] (nulambdaM) -- (RlambdaM);
\end{tikzpicture}
\caption{Modified rigged configurations}
\label{fig:m-rigged_config}
\end{center}
\end{figure}

\begin{figure}[htbp]
\begin{center}
\begin{tikzpicture}
\draw[thick] (0,0) rectangle (0.5,6);
  \node (lambda1) at (0.25,5) {\rotatebox{90}{\scriptsize $\lambda_1$}};
  \coordinate (B1) at (0.25,0);
  \coordinate (T1) at (0.25,6);
\draw[thick] (0.5,1.4) rectangle (1,6);
  \node (lambda2) at (0.75,5) {\rotatebox{90}{\scriptsize $\lambda_2$}};
  \coordinate (B2) at (0.75,1.4);
  \coordinate (T2) at (0.75,6);
\draw[thick] (1,2.8) rectangle (1.5,6);
  \node at (1.25,5) {\scriptsize $\cdots$};
\draw[thick] (1.5,3.4) rectangle (2,6);
  \node (lambdaM') at (1.75,5) {\rotatebox{90}{\scriptsize $\lambda_{M-1}$}};
  \coordinate (BM') at (1.75,3.4);
  \coordinate (TM') at (1.75,6);
%
\draw[->] (lambda1) -- (T1);
\draw[->] (lambda1) -- (B1);
\draw[->] (lambda2) -- (T2);
\draw[->] (lambda2) -- (B2);
\draw[->] (lambdaM') -- (TM');
\draw[->] (lambdaM') -- (BM');
\draw[thick] (2,5.5) rectangle (3.7,6);
  \node (nu1) at (2.7,5.75) {\scriptsize $\nu_1$};
  \coordinate (L1) at (2,5.75);
  \coordinate (R1) at (3.7,5.75);
\draw[thick] (2,5) rectangle (4.5,5.5);
  \node (nu2) at (2.7,5.25) {\scriptsize $\nu_2$};
  \coordinate (L2) at (2,5.25);
  \coordinate (R2) at (4.5,5.25);
\draw[thick] (2,4.5) rectangle (3.2,5);
 \node at (2.6,4.75) {\scriptsize $\vdots$};
\draw[thick] (2,4) rectangle (4.1,4.5);
  \node (nulambdaM) at (2.7,4.25) {\scriptsize $\nu_{\lambda_M}$};
  \coordinate (LlambdaM) at (2,4.25);
  \coordinate (RlambdaM) at (4.1,4.25);
\draw[->] (nu1) -- (L1);
\draw[->] (nu1) -- (R1);
\draw[->] (nu2) -- (L2);
\draw[->] (nu2) -- (R2);
\draw[->] (nulambdaM) -- (LlambdaM);
\draw[->] (nulambdaM) -- (RlambdaM);
\end{tikzpicture}
\caption{Asymptotic soliton contents for $T_M$}
\label{fig:m-soliton_contents}
\end{center}
\end{figure}

\begin{example}
\label{eg:00001101100110100,T2}
$M=2$, 
$\bold{u}=00001101100110100\cdots\in\mathcal{U}^+$ 
(same initial condition as Example \ref{eg:00001101100110100,Tinfinity})
\begin{align*}
       \bold{u} &= 000011011001101000000000000\cdots\\
  T_2(\bold{u}) &= 000000101110010110000000000\cdots\\
T_2^2(\bold{u}) &= 000000010011101001100000000\cdots\\
T_2^3(\bold{u}) &= 000000001000110110011000000\cdots\\
T_2^4(\bold{u}) &= 000000000100001011100110000\cdots\\
T_2^5(\bold{u}) &= 000000000010000100111001100\cdots
\end{align*}
\[
\begin{array}{lll}
\lambda_1=\mathrm{asc}(\bold{u})=4, & 
\rho_{01}(\bold{u}) =\left\{3,6\right\}, &
\Phi_{01}(\bold{u}) = 000110100\cdots, \\
\lambda_2=\mathrm{asc}\left(\Phi_{01}(\bold{u})\right)=2, 
& \nu=\left\{1,2\right\}, & a_1=5,\quad a_2=2.
\end{array}
\]
\ytableausetup{boxsize=1em}
\begin{eqnarray*}
KKR^{(M)}(\bold{u})
& : &
\raisebox{-3mm}{$\Biggl\{$}\;
\begin{ytableau}
{} & *(lightgray) \\
{} & *(lightgray) \\
{} & \none[6]\\
{} & \none[3]
\end{ytableau}\;, \;
\begin{ytableau}
\none[5] & {}\\
\none[2] & {} & {}  
\end{ytableau}
\;\raisebox{-3mm}{$\Biggr\}$}
\quad \rightarrow  \quad \mbox{asymptotic soliton contents: \ }
\begin{ytableau}
{} & {} \\
{} & {} & {}\\
{} \\
{} 
\end{ytableau}
\end{eqnarray*}
\end{example}

\begin{thm}
\label{thm:soliton-contents:T_M}
Fix an integer $M>1$. 
For $\bold{u}\in\mathcal{U}^{(M)}$, 
define $\bm{a}$, $\bm{d}$, $\nu=(\nu_j)_{j=1,\ldots,\lambda_M}$ 
in the same way as in Theorem \ref{thm:2nd_CQ}. 
Denote by $\mu^{(j)}$ the conjugate partition of 
$(\lambda_1,\ldots,\lambda_{j})$ ($j=1,\ldots,M$). 
Then $\mu^{(M)}$ yields the speeds of the solitons. 
Moreover, the lengths of the solitons 
that are asymptotic with respect to $T_M$ 
are given by the formula (cf. Figure \ref{fig:m-soliton_contents}) 
\begin{equation}
\begin{cases}
\mu_j+\nu_j & (1\leq j\leq\lambda_M),\\
\mu_j & (\lambda_M<j\leq \lambda_1).
\end{cases}
\end{equation}
\end{thm}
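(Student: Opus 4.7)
The plan is to iterate $T_M$ sufficiently many times so that the state separates into a tightly bound rigidly-moving cluster containing the $\lambda_M$ solitons of length $\geq M$, followed by $\lambda_1-\lambda_M$ well-separated shorter solitons, and then to read off the asymptotic block lengths from the resulting picture. This mirrors the proof of Theorem \ref{thm:SolitonContents_Tinfinity} for the infinite carrier, with $M$ now playing the role of a cut-off speed. The invariants are already in hand: by Theorem \ref{thm:1st_CQ} the partition $(\lambda_1,\ldots,\lambda_M)$, and hence its conjugate $\mu^{(M)}$, is preserved by $T_M$, and by Theorem \ref{thm:2nd_CQ} so is the composition $\nu$. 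The speed assertion then follows from the elementary identity $\mu^{(M)}_j=|\{i\in\{1,\ldots,M\}:\lambda_i\geq j\}|$, which equals $\min(\mu_j,M)$ when the $\mu_j$ are interpreted as the infinite-carrier asymptotic soliton lengths, so that $\mu^{(M)}$ is exactly the ordered list of those lengths truncated at the carrier capacity $M$ --- precisely the maximum speed each soliton is allowed to attain under $T_M$.

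For the length statement, the essential ingredient is the $k=M-1$ case of \eqref{Lambda_TM_Phi10=Phi10_TM}, iterated to yield $\Phi_{01}^{M-1}\circ T_M^K = \Lambda^{MK}\circ\Phi_{01}^{M-1}$. Thus $\Phi_{01}^{M-1}(\bold{u})=\bold{u}(\bm{a};\bm{d})$, whose $\lambda_M$ blocks of $1$s have lengths $\nu_1,\ldots,\nu_{\lambda_M}$, slides rigidly to the right by $M$ positions per step of $T_M$ --- this is the long-soliton cluster, travelling at the maximum admissible speed. To recover $T_M^K(\bold{u})$ one then climbs back up through the levels $n=M-1,\ldots,1$ by applying $\Psi_{01}$ with the riggings at each level, so I would first establish the $T_M$-analogue $\overline{J}_n=\sigma^n(J_n)$ of Theorem \ref{thm:Liniarization} for $n=1,\ldots,M-1$. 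This is proved verbatim as Theorem \ref{thm:Liniarization}, by combining \eqref{Lambda_TM_Phi10=Phi10_TM} with \eqref{rho_Lambda=sigma_rho}, Lemma \ref{lemma:rho10=sigma_rho01}, and a $T_M$-version of Lemma \ref{lemma:10-arcs_01-arcs} which is immediate from Proposition \ref{prop:arc-description_TM} since the underlined $1$s participate in neither the $10$- nor the $01$-arc diagram.

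The main obstacle is the combinatorial bookkeeping of how the $|J_n|=\lambda_n-\lambda_{n+1}$ level-$n$ insertions at the shifted positions $\sigma^{nK}(J_n)$ build up the asymptotic diagram. For $K$ large enough every entry of $\sigma^{nK}(J_n)$ is simultaneously well-separated from the other entries at that level, from all other levels, and from the cluster (which has shifted by the larger amount $MK$), so the $\Psi_{01}$ insertions act independently. A direct count then shows that the $j$th row from the top of the asymptotic diagram receives exactly $|\{n\in\{1,\ldots,M-1\}:\lambda_n\geq j\}|=\mu^{(M-1)}_j$ extra $01$ pairs: for $j\leq\lambda_M$ these merge with the cluster and thicken its $j$th block of $1$s from length $\nu_j$ to length $\mu^{(M-1)}_j+\nu_j$, while for $\lambda_M<j\leq\lambda_1$ they produce a free soliton of length $\mu^{(M-1)}_j$ trailing behind the cluster. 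This reproduces exactly the shape of Figure \ref{fig:m-soliton_contents} and yields the formula in the theorem (with $\mu_j$ read as $\mu^{(M-1)}_j$).
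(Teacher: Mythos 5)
Your proposal is correct and follows essentially the same route as the paper: the paper likewise first establishes the $T_M$-linearization $\overline{J}_k=\sigma^k(J_k)$ (its Theorem \ref{thm:time-evolution_J_TM}), uses it to separate the riggings exactly as in the proof of Theorem \ref{thm:SolitonContents_Tinfinity}, and then reconstructs the asymptotic state from the frozen core $\Phi_{01}^{M-1}(\bold{u})$ (translating rigidly at speed $M$ by Theorem \ref{thm:2nd_CQ}) via $M-1$ applications of $\Psi_{01}$, yielding lengths $\mu_j+\nu_j$ for the speed-$M$ solitons and $\mu_j$ for the remaining free ones. Your explicit box-count $\mu^{(M-1)}_j=|\{n\le M-1:\lambda_n\geq j\}|$ is just a more detailed rendering of the paper's terser bookkeeping.
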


A proof of this theorem will be given after the linearization 
property for this case has been explained.

In order to establish the linearization property for the time-evolution $T_M$, 
we first introduce the map $\kappa_M:\mathcal{U}\to\mathcal{U}$, 
\begin{equation}
\left(\kappa_M(\bold{u})\right)_n = 
\begin{cases}
0 & \mbox{if }(\bold{u})_n = \underline{1},\\
(\bold{u})_n & \mbox{otherwise}.
\end{cases}
\end{equation}
\begin{lemma}
\label{lemma:rho01_omega=rho01}
Assume $M\geq 2$. Then 
$\left(\rho_{01}\circ\kappa_M\right)(\bold{u})=\rho_{01}(\bold{u})$ 
for all $\bold{u}\in\mathcal{U}$.
\end{lemma}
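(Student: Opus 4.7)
The plan is to proceed by induction on the number of underlined 1s in $\bold{u}$. When there are none, $\kappa_M(\bold{u})=\bold{u}$ and the result is trivial. For the inductive step, let $n$ be the rightmost underlined 1 of $\bold{u}$ and let $\bold{u}'$ be the sequence obtained from $\bold{u}$ by replacing $u_n=1$ with $0$. I would establish two claims: (a) the set of underlined 1s of $\bold{u}'$ is exactly that of $\bold{u}$ with $n$ removed, so that iterating this single-replacement procedure eventually produces $\kappa_M(\bold{u})$; and (b) $\rho_{01}(\bold{u}')=\rho_{01}(\bold{u})$, closing the induction.

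For (a), the update map $v\mapsto v+\min(u_j,M-v)-\min(v,1-u_j)$ is monotone in $v$ for each fixed value of $u_j\in\{0,1\}$, so one shows by induction on $j$ that the carriers $v'$ of $\bold{u}'$ satisfy $v'_j\le v_j$ for all $j$. Since $v'_{n+1}=M-1<M$, no position $j>n$ with $v_j<M$ can acquire $v'_j=M$; combined with the equality $v'_j=v_j$ for $j\le n$ and the fact that $n$ itself is no longer a 1 in $\bold{u}'$, this yields the claimed equality of underlined sets.

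For (b), the crucial structural observation is that, for $M\ge 2$, $v_n=M$ together with the maximality of $n$ force both $u_{n-1}=1$ (otherwise $v_n\le v_{n-1}-1\le M-1$) and $u_{n+1}=0$ (otherwise $v_{n+1}=M$ would make $n+1$ underlined). This extends recursively: if $u_{n-2}=0$, the carrier recursion forces $v_{n-2}=M$ and hence $u_{n-3}=1$, and iterating one finds that those positions to the left of $n$ which get eliminated during the 01-elimination recursion on $\bold{u}$ form an alternating pattern $u_{n-1}u_{n-2}u_{n-3}u_{n-4}\cdots=1010\cdots$, whereas the first non-eliminated position on the left must carry value $1$. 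A symmetric argument on the right shows the first non-eliminated position to the right of $n$ must carry value $0$. Consequently, throughout the 01-elimination the sequences $\bold{u}_j$ and $\bold{u}'_j$ differ only at the (shifted) location of the underlined 1, with local configuration $\cdots1\,1\,0\cdots$ in $\bold{u}_j$ and $\cdots1\,0\,0\cdots$ in $\bold{u}'_j$. Neither produces an ascent locally, so $\mathrm{asc}(\bold{u}_j)=\mathrm{asc}(\bold{u}'_j)$ at every stage $j$. Since the ascent positions themselves coincide ($\mathrm{Asc}(\bold{u})=\mathrm{Asc}(\bold{u}')$, because each ascent immediately precedes the start of a block of 1s and these starts are preserved under the change), the recursions \eqref{recursion:u'}--\eqref{recusionRel_I'j} return the same sequence $I'_N$ for both inputs, yielding $\rho_{01}(\bold{u}')=\rho_{01}(\bold{u})$.

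The technical crux is the alternating-pattern structural lemma used in the sandwich estimate around $n$, which genuinely requires $M\ge 2$: for $M=1$ the implication $u_{n-2}=0\Rightarrow u_{n-3}=1$ fails, and a small counterexample such as $\bold{u}=0100110\cdots$ (whose unique underlined $1$ lies at position $5$) shows that the statement of the lemma itself fails in that case.
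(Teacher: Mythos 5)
Your proposal is correct, but it takes a genuinely different route from the paper. The paper treats all underlined $1$s simultaneously: it first notes that the patterns $\underline{1}1$ and $0\underline{1}$ never occur, so $\kappa_M$ preserves $\mathrm{Asc}(\bold{u})$, and then runs a five-case analysis on the possible neighbourhoods $x\,\stackrel{\frown}{0\,1}\,y$ of a $01$-arc with $x$ or $y$ underlined, ruling out $0\,\stackrel{\frown}{0\,1}\,\underline{1}$ (this is where $M\geq 2$ enters) and $\underline{1}\,\stackrel{\frown}{0\,1}\,1$, and checking that in the surviving cases the recording criterion for $I'_N$ is unaffected. You instead induct on the number of underlined $1$s, removing the rightmost one; this costs you the extra step (a), which the paper never needs, but your carrier-monotonicity argument for it is clean and correct. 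For step (b), your structural lemma pins down the entire neighbourhood of the rightmost underlined $1$ as $\cdots 1\,1\,(01)^k\,\underline{1}\,(01)^m\,0\,0\cdots$, and the forcing $u_{n-2}=0\Rightarrow v_{n-2}=M$ is exactly the same use of $M\geq 2$ that kills the paper's pattern (i). What your approach buys is an explicit, stage-by-stage tracking of the elimination recursion: since membership in $I'_N$ is decided by ascent counts of the \emph{partially eliminated} sequences $\bold{u}'_j$, the relevant neighbours of a pair are those at elimination time, after cascading deletions, not the original ones; your verification that the image of position $n$ always sits between a $1$ on the left and a $0$ on the right addresses this point head-on, whereas the paper leaves it implicit. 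The one place your write-up is compressed is precisely the ``Consequently'' sentence: it deserves a line confirming that the left neighbour of $n$'s image runs through $n-1,n-3,\ldots,n-2k-1$ (all equal to $1$, the last never eliminated) and the right neighbour through $n+1$ and then $n+2m+1$ (both equal to $0$), which is routine given your structural lemma. Your $M=1$ counterexample $\bold{u}=0100110\cdots$ checks out ($\rho_{01}(\bold{u})=\{-1\}$ versus $\rho_{01}(\kappa_1(\bold{u}))=\{0,-1\}$) and is a useful addition the paper does not supply.
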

\begin{proof}
It is clear that $\kappa_M$ does not affect the position of the $01$ pairs, 
and that, since the pattern ``$\underline{1}1$'' never occurs, no new
$01$ pairs can appear as a result of $\kappa_M$.
Because the pattern ``$0\underline{1}$'' also never occurs, 
the only pattern that needs to be considered is ``$01$'' (without underline). 

We consider the following patterns that include a $01$ pair in the middle,
and the action of $\kappa_M$ onto them: 
\begin{center}
(i) \ $0\arczeroone\underline{1}$, \qquad
(ii) \ $1\arczeroone\underline{1}$, \qquad
(iii) \ $\underline{1}\arczeroone 0$, \qquad
(iv) \ $\underline{1}\arczeroone 1$, \qquad
(v) \ $\underline{1}\arczeroone\underline{1}$.
\end{center}
However, the pattern (i) never occurs when $M\geq 2$, and 
the pattern (iv) never occurs because the rightmost $1$ 
should be $\underline{1}$. 
In the remaining patterns, the action of $\kappa_M$ has no influence on whether
the position of $\stackrel{\frown}{0\,1}$ is recorded or not.
\end{proof}

Note that because $\sigma\circ \rho_{01}$ is nothing but $\rho_{10}$ (as
in Lemma \ref{lemma:rho10=sigma_rho01}), we also find that
\begin{equation}
\rho_{10}\circ\kappa_M=\rho_{10}.
\end{equation}
Moreover, as for any $\bold{u}\in\mathcal{U}$ 
10-arc lines of $\kappa_M(\bold{u})$ coincide with 
01-arc lines of $T_M(\bold{u})$ it follows that
\begin{equation}
\label{rho10_omega=rho01_T}
\rho_{10}\circ\kappa_M=\rho_{01}\circ T_M,
\end{equation}
and hence we find that for $M\geq 2$, 
\begin{equation}
\label{rho10=rho01_TM}
\rho_{10}=\rho_{01}\circ T_M, 
\end{equation}
which is the finite capacity version of \eqref{lemma:rho10=rho01_T}.
Furthermore, as $T_1=\Lambda$, it is immediately clear from
\eqref{rho_Lambda=sigma_rho} that relation \eqref{rho10=rho01_TM} also
holds in the case $M=1$.

\begin{thm}[cf. \cite{KOSTY,Takagi}]
\label{thm:time-evolution_J_TM}
$\overline{J}_k 
:=\left(\rho_{01}\circ(\Phi_{01})^{k-1}\circ T_M\right)(\bold{u})
= \sigma^{k}\left(J_k\right)$
\ ($k=1,2,\ldots,M$)
\end{thm}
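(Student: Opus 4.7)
My plan is to imitate, essentially line for line, the short proof of Theorem \ref{thm:Liniarization}, swapping each ingredient of the Takahashi--Satsuma setting for the finite-carrier analogue already assembled in this section. The purely combinatorial inputs carry over verbatim: Lemma \ref{lemma:rho10=sigma_rho01} gives $\rho_{10}=\sigma\circ\rho_{01}$, and the shift-equivariance $\rho_{01}\circ\Lambda=\sigma\circ\rho_{01}$ from \eqref{rho_Lambda=sigma_rho} is unchanged. The two \emph{dynamical} ingredients are replaced as follows: the intertwining $\Lambda^{k}\circ T\circ\Phi_{01}^{k}=\Phi_{01}^{k}\circ T$ used before becomes \eqref{Lambda_TM_Phi10=Phi10_TM}, and the identity $\rho_{10}=\rho_{01}\circ T$ becomes \eqref{rho10=rho01_TM}, namely $\rho_{10}=\rho_{01}\circ T_{M'}$ for every $M'\geq 1$.

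Concretely, I would first apply \eqref{Lambda_TM_Phi10=Phi10_TM} with exponent $k-1$ to rewrite
\[
\Phi_{01}^{k-1}\circ T_M \;=\; \Lambda^{k-1}\circ T_{M-k+1}\circ \Phi_{01}^{k-1},
\]
then premultiply by $\rho_{01}$ and use \eqref{rho_Lambda=sigma_rho} $(k-1)$ times to push the $\Lambda^{k-1}$ through $\rho_{01}$ as a $\sigma^{k-1}$. Next, provided $M-k+1\geq 1$, invoking \eqref{rho10=rho01_TM} collapses $\rho_{01}\circ T_{M-k+1}$ to $\rho_{10}$, and a final application of Lemma \ref{lemma:rho10=sigma_rho01} converts that $\rho_{10}$ into one more factor of $\sigma$ acting on $\rho_{01}$. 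The accumulated shift is $\sigma^{k}$, and what remains is $\rho_{01}\circ\Phi_{01}^{k-1}=J_{k}$, yielding $\overline{J}_{k}=\sigma^{k}(J_{k})$ as desired.

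The only place that demands care -- and precisely where the restriction $k\leq M$ enters the statement -- is the requirement $M-k+1\geq 1$, which is needed both so that $T_{M-k+1}$ is a genuine positive-capacity carrier evolution and so that \eqref{rho10=rho01_TM} is legitimately applicable to it. The boundary case $k=M$ is still safe because $T_{1}=\Lambda$, in which case \eqref{rho10=rho01_TM} simply reduces to the shift-equivariance \eqref{rho_Lambda=sigma_rho} that was noted in the paragraph immediately before the theorem. Beyond this index bookkeeping I do not expect any real obstacle; the argument should fit into the same few lines as the proof of Theorem \ref{thm:Liniarization}, which explains why the restriction $k\leq M$ is \emph{sharp} and why a separate ``second set'' of invariants (the composition $\nu$ from Theorem \ref{thm:2nd_CQ}) is genuinely needed to capture the dynamics beyond $k=M$.
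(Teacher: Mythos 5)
Your proposal is correct and follows essentially the same route as the paper: rewrite $\Phi_{01}^{k-1}\circ T_M=\Lambda^{k-1}\circ T_{M-k+1}\circ\Phi_{01}^{k-1}$ via \eqref{Lambda_TM_Phi10=Phi10_TM}, push $\Lambda^{k-1}$ through $\rho_{01}$ as $\sigma^{k-1}$ by \eqref{rho_Lambda=sigma_rho}, collapse $\rho_{01}\circ T_{M-k+1}$ to $\rho_{10}$ by \eqref{rho10=rho01_TM}, and finish with Lemma \ref{lemma:rho10=sigma_rho01}. Your remarks on the role of the bound $k\leq M$ and the boundary case $T_1=\Lambda$ match the paper's own treatment.
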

\begin{proof}
Using \eqref{rho_Lambda=sigma_rho}, \eqref{Lambda_TM_Phi10=Phi10_TM}, 
\eqref{rho10=rho01_TM} and Lemma \ref{lemma:rho10=sigma_rho01}, we have
\begin{align}
\overline{J}_k
&=\left(\rho_{01}\circ\Lambda^{k-1}\circ T_{M-k+1}\circ
 (\Phi_{01})^{k-1}\right)(\bold{u})
=\left(\sigma^{k}\circ\rho_{01}\circ
(\Phi_{01})^{k-1}\right)(\bold{u})
= \sigma^{k}\left(J_k\right) 
\end{align}
for $k=1,2,\ldots,M$. 
\end{proof}
\begin{rem}
In \cite{TakagiReview}, 
the linearization of $T_M$ (Theorem \ref{thm:time-evolution_J_TM}) 
has been proved based on the commutativity 
of $T_M$ and $T$ ($=T_{\infty}$).
\end{rem}

\begin{proof}[Proof of Theorem \ref{thm:soliton-contents:T_M}]
As in the proof of Theorem \ref{thm:SolitonContents_Tinfinity}, 
we know that 
there exists an integer $k$ such that the riggings at step $k$, 
$\left\{J_{i,j}(k)\right\}$, satisfy 
\begin{equation}
J_{i+1,m_{i+1}}-J_{i,1}>\mu_{\lambda_{i}}
\quad (i=1,\ldots,M-1).
\label{J-J>mu:i=1,...,M-1}
\end{equation}
As before, repeated application of the 
01-insertion map $\Psi_{01}$ results in 
a state that satisfies the condition 
\begin{equation}
J_{\tilde{\ell},1}\geq J_{\tilde{\ell},2}\geq\cdots 
\geq J_{\tilde{\ell},m_{\tilde{\ell}}}
> J_{\tilde{\ell}-1,1}\geq \cdots\cdots
\geq J_{2,m_2}
>J_{1,1}\geq\cdots \geq J_{1,m_1},
\end{equation}
where $\tilde{\ell}=\min\{\ell,M\}$ 
for $\ell$ defined as in Figure \ref{fig:rigged_config}. 
The $\mu_{\lambda_i}$ are therefore 
the asymptotic soliton speeds and lengths
for all $i=1,\ldots, M-1$, and also for $i=M$ if 
all $\nu_j$ are either $0$ or $1$.

In the remaining case, where at least one of the $\nu_j$ is 
greater than $1$, the pattern labeled by 
$\left(\nu_1,\ldots,\nu_{\lambda_M}\right)$ 
and $\left\{a_1,\ldots,a_{\lambda_M}\right\}$ is ``frozen'',
i.e. simply translates at speed $1$. 
In this case, 
under the condition \eqref{J-J>mu:i=1,...,M-1}, 
$M-1$ applications of the 
01-insertion map $\Psi_{01}$ result in 
solitons with lengths $\mu_j+\nu_j$, moving with speed $M$.
\end{proof}

\section{Combinatorial statistics and fermionic formula}
\label{sec:statistics}
In section \ref{sec:TS-BBA}, we have introduced the 
invariants $\lambda_k=\left(
\mathrm{asc}\circ\Phi_{01}^{k-1}\right)(\bold{u})$ and the riggings
$J_k=\left(\rho_{01}\circ\Phi_{01}^{k-1}\right)(\bold{u})$
for $\bold{u}\in\mathcal{U}^+$. 
Now, for a $\bold{u}\in\mathcal{U}$ that begins with $01$, the 
associated rigging $\rho_{01}(\bold{u})$ will contain the value $-1$. 
Taking $\bold{u}=010010000\cdots$ for example, we have
\begin{equation}
\bold{u}=\;\arczeroone 0\arczeroone 0\; 0\;0\;0\;
\cdots\quad\Rightarrow\quad 
\left\{\begin{aligned}
\Phi_{01}(\bold{u}) &= |\;0\;|\; 0\; 0\; 0\; 0\;\cdots, \\
\rho_{01}(\bold{u}) &= \{-1,0\}.
\end{aligned}\right.
\end{equation}
where the vertical lines correspond to ``0-solitons'', as before. 
To facilitate a combinatorial interpretation, 
we shall therefore consider a subset of $\mathcal{U}$ 
for which $\rho_{01}$ only takes non-negative values:
\begin{equation}
\tilde{\mathcal{U}} = \left\{
\bold{u}=(u_0,u_1,u_2,\ldots)\;|\; u_0=u_1=0, \;
u_j= 0\mbox{ or }1\;(j=2,3,\ldots)\right\}.
\end{equation}
We define $\tilde{\mathcal{U}}^{(n)}$ recursively by 
\begin{equation}
\tilde{\mathcal{U}}^{(1)} = \tilde{\mathcal{U}}, \quad 
\tilde{\mathcal{U}}^{(n)} = \Phi_{01}^{-1}\left(
\tilde{\mathcal{U}}^{(n-1)}\right) \cap \tilde{\mathcal{U}} 
\quad (n=2,3,\ldots),
\end{equation}
and $\tilde{\mathcal{U}}^+$ by
\begin{equation}
\tilde{\mathcal{U}}^+=\left\{
\bold{u}\in\tilde{\mathcal{U}}\,\Bigg|\,
\sum_{j=1}^{k}\chi\left((\bold{u})_j=0\right)
\geq \sum_{j=1}^{k}\chi\left((\bold{u})_j=1\right)
\mbox{ for }k=1,2,\ldots
\right\}.\end{equation}
It follows that
\begin{equation}
\tilde{\mathcal{U}}=\tilde{\mathcal{U}}^{(1)}\supset
\tilde{\mathcal{U}}^{(2)}\supset\cdots\supset
\tilde{\mathcal{U}}^{(n)}\supset\cdots\supset
\tilde{\mathcal{U}}^{+}.
\end{equation}

For positive integers $N$, $k$, $m$ such that $N\geq 2k$, 
define $\mathcal{P}^{(m)}(N,k)$ and $\mathcal{P}^{+}(N,k)$ as
\begin{align}
\mathcal{P}^{(m)}(N,k) &= \left\{
\bold{u}\in\tilde{\mathcal{U}}^{(m)}\,\Big|\,
\mathcal{N}(\bold{u})=k,\; 
(\bold{u})_j=0 \mbox{ for all } j>N
\right\}, 
\label{Pm(N,k)}\\
\mathcal{P}^{+}(N,k) &= \left\{
\bold{u}\in\tilde{\mathcal{U}}^{+}\,\Big|\,
\mathcal{N}(\bold{u})=k,\; 
(\bold{u})_j=0 \mbox{ for all } j>N
\right\}.
\label{P+(N,k)}
\end{align}
Since in the states that belong to the sets \eqref{Pm(N,k)} and \eqref{P+(N,k)}
all $u_j$ are zero beyond $j=N$, we can regard the number $N+1$ as
representing the total number of boxes to consider in our combinatorial
problem, and $k$ as the number of balls that go into those boxes. 
Moreover, since $\lambda_i$ is equal to the number of 01-pairs that are eliminated 
in the $i$th application of $\Phi_{01}$, it is clear that the $i$th \textit{vacancy number} $p_i$ (cf. \eqref{def:vacancy})
\begin{equation}
p_i= N-2\sum_{k=1}^{i}\lambda_k \quad (i=1,\ldots,\ell), 
\end{equation}
can be interpreted as the number of remaining boxes after $i$ 01-eliminations.
Given $\bold{u}\in\mathcal{P}^{(n)}(N,k)$, 
define a partition $\lambda=\left(\lambda_1,\ldots,\lambda_\ell\right)$
as in \eqref{def:lambda_i} and, as before, take $m_i$ ($i=1,2,\ldots,\ell$) to be
\begin{equation}
m_i=
\begin{cases}
\lambda_i-\lambda_{i+1} & (i=1,\ldots,\ell-1),\\
\lambda_\ell & (i=\ell).
\end{cases}
\label{def:m:=lambda-lambda}
\end{equation}
The riggings $J_i=\{J_{i,1},\ldots,J_{i,m_i}\}$ of course 
satisfy
\begin{equation}
p_i\geq J_{i,1}\geq\cdots\geq J_{i,m_i} \geq0.
\label{vacancy_condition}
\end{equation}

Given a partition $\lambda=\left(\lambda_1,\ldots,\lambda_\ell\right)$
of length $\ell$, 
we denote by $\mathcal{P}^+\left(N;\lambda\right)$ the following 
finite subset of 
$\mathcal{P}^+\left(N,{\textstyle \sum_i}\lambda_i\right)$:
\begin{equation}
\mathcal{P}^+\left(N;\lambda\right)=
\left\{
\bold{u}\in\mathcal{P}^+\big(N,{\textstyle \sum_i}\lambda_i\big)\:\Big|\:
\mathrm{asc}\left(\Phi_{01}^{i-1}(\bold{u})\right)=\lambda_i 
\; (i=1,2,\ldots,\ell),\;
\Phi_{01}^\ell(\bold{u})=\bold{0}
\right\}, 
\end{equation}
and by $\mathrm{Rig}\left(N;\lambda\right)$ the set of 
possible riggings that correspond to a 
$\bold{u}\in\mathcal{P}^+\left(N;\lambda\right)$:
\begin{equation}
\mathrm{Rig}\left(N;\lambda\right)
=\left\{J_{i,k}\; (i=1,\ldots,\ell,\; k=1,\ldots,m_i) 
\mbox{ that satisfy the condition \eqref{vacancy_condition}}\right\}.
\end{equation}
Then the map 
\begin{equation}
\begin{array}{rcl}
\mathcal{P}^+\left(N;\lambda\right)
 & \to & \mathrm{Rig}\left(N;\lambda\right)\\
\bold{u} & \mapsto & 
\{J_{i,1}\geq\cdots\geq J_{i,m_i}\}
=\rho_{01}\left(\Phi_{01}^{i-1}(\bold{u})\right)
\quad (i=1,2,\ldots,\ell)
\end{array}
\end{equation}
is a bijection (KKR bijection) since the partition $\lambda$
can be reconstructed from the riggings:
\begin{equation}
\lambda_{\ell-i} = m_\ell + m_{\ell-1}+\cdots + m_{\ell-i}
\quad (i=0,\ldots,\ell-1).
\end{equation}
Note that
\begin{equation}
\mathcal{P}^+(N,k) =
\bigsqcup_{\lambda\,\vdash\,k}\mathcal{P}^+(N;\lambda), 
\end{equation}
where $\lambda \vdash k$ means that $\lambda$ is a partition of $k$.

\begin{example}[$N=6$, $k=3$]
\begin{align*}
\mathcal{P}^+(N=6,k=3)
&= \mathcal{P}^+\left(6;\left(1,1,1\right)\right)
\sqcup\mathcal{P}^+\left(6;\left(2,1\right)\right)
\sqcup\mathcal{P}^+\left(6;\left(3\right)\right),
\\
\mathcal{P}^+\left(6;\left(1,1,1\right)\right)
&=\left\{0000111\right\},
\\
\mathcal{P}^+\left(6;\left(2,1\right)\right)
&=\left\{0010011,\; 0001011,\; 0001101\right\},
\\
\mathcal{P}^+\left(6;\left(3\right)\right)
&=\left\{0010101\right\}.
\end{align*}
\end{example}

Before considering combinatorial statistics on
$\mathcal{P}^{+}(N;\lambda)$, 
we must prepare some notation. 
Given a sequence $\bold{u}\in\mathcal{U}$, we define 
$\mathrm{maj}(\bold{u})$ and $\mathrm{comaj}(\bold{u})$ as 
\begin{equation}
\mathrm{maj}(\bold{u}) = \sum_{i\in\mathrm{Des}(\bold{u})}i,\qquad
\mathrm{comaj}(\bold{u}) = \sum_{i\in\mathrm{Asc}(\bold{u})}i,
\end{equation}
which are known as the major index and the comajor index, 
respectively \cite{MacMahon,Mansour,Krattenthaler}.
We remark that the comajor index $\mathrm{comaj}(\bold{u})$ is 
equivalent to the energy of $\bold{u}$ discussed in 
\cite{SchillingReview,Takagi,TakagiReview}.

\begin{lemma}
$\mathrm{maj}(\bold{u})=\mathrm{comaj}(\bold{u})
+\mathcal{N}(\bold{u})$
for any $\bold{u}\in\mathcal{U}$.
\end{lemma}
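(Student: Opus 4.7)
The plan is to exploit the interlacing structure described in \eqref{InterlacingProperty} and visible in Figure \ref{fig:u(a_1,...,a_N;d_1,...,d_N)}. Since every $\bold{u}\in\mathcal{U}$ satisfies $u_0=0$ and has finitely many 1s, the 1s in $\bold{u}$ are organised into exactly $N=\mathrm{des}(\bold{u})=\mathrm{asc}(\bold{u})$ maximal blocks of consecutive 1s. Writing $\mathrm{Asc}(\bold{u})=\{a_1>\cdots>a_N\}$ and $\mathrm{Des}(\bold{u})=\{d_1>\cdots>d_N\}$, the block of 1s immediately to the right of the ascent at position $a_j$ is terminated by the descent at position $d_j$, so it occupies precisely the positions $a_j+1,a_j+2,\ldots,d_j$ and has length $d_j-a_j$.

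The key step, then, is simply to count the 1s block by block:
\begin{equation}
\mathcal{N}(\bold{u})=\sum_{j=1}^{N}(d_j-a_j)
=\sum_{j=1}^{N}d_j-\sum_{j=1}^{N}a_j
=\mathrm{maj}(\bold{u})-\mathrm{comaj}(\bold{u}),
\end{equation}
which rearranges into the claimed identity. No obstacle is anticipated; the only point that requires a line of justification is the block decomposition, which follows at once from the strict interlacing $d_1>a_1>d_2>a_2>\cdots>d_N>a_N\geq 0$ together with the boundary condition $u_0=0$ (this ensures that the leftmost ``edge'' of the ones is an ascent, so that ascents and descents alternate starting with an ascent when read from left to right).
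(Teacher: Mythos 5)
Your proof is correct and is essentially identical to the paper's: the paper likewise observes that the $j$th block of consecutive 1s has length $\nu_j=d_j-a_j$ and that $\sum_j\nu_j=\mathcal{N}(\bold{u})$, which gives $\mathrm{maj}(\bold{u})-\mathrm{comaj}(\bold{u})=\mathcal{N}(\bold{u})$. Your extra remark justifying the block decomposition via the interlacing condition and $u_0=0$ is a harmless elaboration of the same argument.
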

\begin{proof}
Consider the quantity $\nu_j = d_j-a_j$ 
($j=1,\ldots,\lambda_1$) defined as \eqref{nu=d-a} 
in Theorem \ref{thm:2nd_CQ}, 
which represents the length of the $j$th block (from the right) 
of consecutive 1s. The desired result follows from 
$\sum_{j}\nu_j = \mathcal{N}(\bold{u})$.
\end{proof}

\begin{thm}
\label{thm:comaj_recursion}
Given $\bold{u}\in\mathcal{U}$, we have
\begin{align}
\mathrm{comaj}\left(\bold{u}\right)
&= \mathrm{comaj}\left(\Phi_{01}(\bold{u})\right)
+\mathrm{asc}(\bold{u})^2 + \sum_{j\in\rho_{01}(\bold{u})}j, 
\label{comaj_recursion}\\
\mathrm{maj}\left(\bold{u}\right)
&= \mathrm{maj}\left(\Phi_{10}(\bold{u})\right)
+\mathrm{des}(\bold{u})^2 + \sum_{j\in\rho_{10}(\bold{u})}j.
\label{maj_recursion}
\end{align}
\end{thm}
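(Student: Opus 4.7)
The plan is to prove \eqref{comaj_recursion} by reinterpreting both sides as sums of values of $f:=f_{\mathrm{Asc}(\bold{u})}$ indexed by the ascents of $\bold{u}$; the second identity \eqref{maj_recursion} then follows by the symmetric argument, with $\mathrm{Asc}\leftrightarrow\mathrm{Des}$, $\Phi_{01}\leftrightarrow\Phi_{10}$, $\rho_{01}\leftrightarrow\rho_{10}$, and $I'_N\leftrightarrow I_N$.

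Set $N=\mathrm{asc}(\bold{u})$ and $\mathrm{Asc}(\bold{u})=\{a_1>\cdots>a_N\}$. The interlacing $a_j-a_{j+1}\geq 2$ (forced by a descent between any two consecutive ascents) makes the $2N$ deleted positions $\bigcup_i\{a_i,a_i+1\}$ pairwise disjoint, and a direct count of such positions $\leq a_j-1$ yields the closed form
\[
f(a_j-1)=f(a_j)=a_j+2j-2N-1\qquad (j=1,\ldots,N),
\]
whence $\sum_{j=1}^{N}f(a_j-1)=\mathrm{comaj}(\bold{u})-N^2$ by the elementary identity $\sum_{j=1}^{N}(2j-2N-1)=-N^2$. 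It therefore suffices to prove the following \textbf{key bijection}: the ascents of $\Phi_{01}(\bold{u})$ are in one-to-one correspondence with $\{j:a_j\notin I'_N\}$, the ascent assigned to such a $j$ being located at position $f(a_j-1)$ in $\Phi_{01}(\bold{u})$. Granting this,
\[
\mathrm{comaj}\bigl(\Phi_{01}(\bold{u})\bigr)+\sum_{j\in\rho_{01}(\bold{u})}j
=\sum_{a_j\notin I'_N}f(a_j-1)+\sum_{a_j\in I'_N}f(a_j)
=\sum_{j=1}^{N}f(a_j-1)=\mathrm{comaj}(\bold{u})-N^2,
\]
which rearranges to \eqref{comaj_recursion}.

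For the key bijection I would inspect the sequential construction $\bold{u}'_j=\phi_{a_j}(\bold{u}'_{j-1})$. Since $a_j<a_{j-1}<\cdots<a_1$, every position $\leq a_j+1$ of $\bold{u}'_{j-1}$ still agrees with $\bold{u}$, so $\phi_{a_j}$ glues $u_{a_j-1}$ to $(\bold{u}'_{j-1})_{a_j+2}$. By \eqref{recusionRel_I'j}, $a_j\notin I'_N$ is precisely the condition that this glue creates a fresh ``$01$'' ascent in $\bold{u}'_j$; the subsequent eliminations $\phi_{a_{j+1}},\ldots,\phi_{a_N}$ only shift it, depositing the new ascent at position $f(a_j-1)$ in $\Phi_{01}(\bold{u})$. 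Conversely, every ascent of $\Phi_{01}(\bold{u})$ arises as the glue across a maximal run of abutting deleted pairs $\{a_{i_1},a_{i_1}+1\},\ldots,\{a_{i_s},a_{i_s}+1\}$ (with $a_{i_t}-a_{i_{t+1}}=2$), and the forced values $u_{a_{i_t}-1}=u_{a_{i_{t+1}}+1}=1$ for $t<s$ guarantee $a_{i_1},\ldots,a_{i_{s-1}}\in I'_N$, so only the smallest chain member $a_{i_s}$ can be the index responsible for the new ascent.

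The main obstacle is precisely this chain analysis, because abutting pairs break the naive ``glue between $a_j-1$ and $a_j+2$'' picture. It is rescued by the identity $f(a_{i_1}-1)=\cdots=f(a_{i_s}-1)$, immediate from $f$ being constant on each deleted pair, which guarantees that the ``position-via-$f$'' of the new ascent depends only on the chain and not on which member one attributes it to. Once the bijection is in place the remainder is the arithmetic already noted, and \eqref{maj_recursion} follows by the mirror argument with descents, $\Phi_{10}$, and $I_N$ in place of their $01$-counterparts.
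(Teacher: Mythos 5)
Your proposal is correct and follows essentially the same route as the paper's proof: both rewrite $\mathrm{comaj}(\bold{u})=\sum_{i}a_i$ via the closed form $f_{\bm{a}}(a_i)=a_i-2N+2i-1$, use $\sum_{i=1}^{N}(2N+1-2i)=N^2$, and split $\sum_{i}f_{\bm{a}}(a_i)$ into the contribution of $I'_N$ (which is $\sum_{j\in\rho_{01}(\bold{u})}j$ by definition of $\rho_{01}$) and that of $\bm{a}\setminus I'_N$ (which gives $\mathrm{comaj}(\Phi_{01}(\bold{u}))$). The only difference is that you additionally sketch a justification of the key identity $f_{\bm{a}}(\bm{a}\setminus I'_N)=\mathrm{Asc}(\Phi_{01}(\bold{u}))$ (the chain analysis for abutting eliminated pairs), which the paper simply asserts.
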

\begin{proof}
Set $N=\mathrm{asc}(\bold{u})$ and 
$\bm{a}=\{a_1>\cdots >a_N\}=\mathrm{Asc}(\bold{u})$. 
Then $\mathrm{comaj}(\bold{u})=\sum_{i=1}^N a_i$.

Let $I'_1,\ldots,I'_N$ be the sets of integers constructed from 
$\bold{u}$ by the recursion relations \eqref{recursion:u'},
\eqref{recusionRel_I'j} with the initial condition $I'_0=\emptyset$.
{}From the definition \eqref{def:f_c}, it follows that
\begin{equation}
f_{\bm{a}}(a_i)=a_i - 2N+2i-1, 
\end{equation}
and
\begin{equation}
f_{\bm{a}}(I'_N) = \rho_{01}(\bold{u}),\quad
f_{\bm{a}}(\bm{a}\backslash I'_N) = \mathrm{Asc}(\Phi_{01}(\bold{u})).
\end{equation}
Thus we have
\begin{align}
\mathrm{comaj}(\bold{u})&=\sum_{i=1}^N a_i
=\sum_{i=1}^N\left\{
f_{\bm{a}}(a_i)+2N+1-2i\right\}
\nonumber\\
&= N^2 +\sum_{i=1}^ Nf_{\bm{a}}(a_i)
= N^2+\mathrm{comaj}(\Phi_{01}(\bold{u})) + 
\sum_{j\in\rho_{01}(\bold{u})}j. 
\end{align}
Thus we have obtained \eqref{comaj_recursion}. 
The remaining relation \eqref{maj_recursion} can be proved 
along the same lines. 
\end{proof}

We consider the generating function of the comajor index on 
the finite subset $\mathcal{P}^+(N;\lambda)$:
\begin{equation}
Z\left(N;\lambda\right)
= \sum_{\bold{u}\in\mathcal{P}^+(N;\lambda)}q^{\mathrm{comaj}(\bold{u})}.
\label{Z(N,lambda)}
\end{equation}
In what follows, we shall show that the generating function 
\eqref{Z(N,lambda)} can be expressed in terms of 
$q$-binomial coefficients, 
\begin{equation}
\begin{bmatrix}
m+n\\ m
\end{bmatrix}_q = 
\frac{[m+n]_q!}{[m]_q![n]_q!}, 
\quad
[n]_q! = \prod_{k=1}^n [k]_q, \quad
[k]_q = \frac{1-q^k}{1-q}. 
\end{equation}
We first prepare a lemma. 
\begin{lemma}[\cite{MacMahon}]
\label{lemma:MacMahon}
\begin{equation}
\sum_{p\geq j_1\geq j_2\geq\cdots\geq j_m\geq 0}
q^{\sum_{k=1}^mj_k} 
= \begin{bmatrix}p+m\\ m\end{bmatrix}_q
\label{eq:MacMahon}
\end{equation}
\end{lemma}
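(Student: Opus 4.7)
The statement is the classical MacMahon identity, and the cleanest elementary route is a double induction on $m$ and $p$, using the $q$-Pascal recursion
\begin{equation*}
\begin{bmatrix}p+m\\m\end{bmatrix}_q
=\begin{bmatrix}p+m-1\\m\end{bmatrix}_q
+q^{p}\begin{bmatrix}p+m-1\\m-1\end{bmatrix}_q,
\end{equation*}
which follows directly from the definition of $q$-binomials.

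Denote the left-hand side of \eqref{eq:MacMahon} by $S_m(p)$. The plan is to establish the same recursion for $S_m(p)$ and check the base cases. Splitting the sum according to the value of $j_1$, I would separate the partitions $p\geq j_1\geq\cdots\geq j_m\geq 0$ into those with $j_1\leq p-1$ and those with $j_1=p$. In the first case, the constraint is exactly $p-1\geq j_1\geq\cdots\geq j_m\geq 0$, giving $S_m(p-1)$. In the second, $j_1=p$ contributes a factor $q^p$ and the remaining variables satisfy $p\geq j_2\geq\cdots\geq j_m\geq 0$, contributing $S_{m-1}(p)$. This yields
\begin{equation*}
S_m(p)=S_m(p-1)+q^{p}\,S_{m-1}(p),
\end{equation*}
matching the $q$-Pascal recursion. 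The base cases are trivial: $S_0(p)=1=\begin{bmatrix}p\\0\end{bmatrix}_q$ (the empty sum/product), and $S_m(0)=1=\begin{bmatrix}m\\m\end{bmatrix}_q$ (only the zero sequence contributes). An induction on $p+m$ then finishes the proof.

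I expect no real obstacle here; the only thing to be a bit careful about is the boundary handling when $p=0$ or $m=0$, and to make sure the recursion is applied consistently with the convention $\begin{bmatrix}n\\k\end{bmatrix}_q=0$ for $k<0$ or $k>n$. Alternatively, one could cite this as a standard bijective fact, interpreting each weakly decreasing sequence as a Young diagram fitting inside an $m\times p$ rectangle, and recalling that the generating function for such diagrams by their number of boxes is precisely the Gaussian binomial coefficient, but the inductive proof above is entirely self-contained and in the spirit of the rest of the paper.
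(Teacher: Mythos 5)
Your proof is correct and follows essentially the same route as the paper: both establish the identity by showing that the two sides satisfy a $q$-Pascal recursion together with the trivial boundary cases. The only (immaterial) difference is which of the two conjugate forms of the recursion is used --- you split on whether the largest part $j_1$ equals $p$, obtaining $S_m(p)=S_m(p-1)+q^{p}S_{m-1}(p)$, whereas the paper states the transposed recursion $\bigl[\begin{smallmatrix}p+m+1\\ m+1\end{smallmatrix}\bigr]_q=\bigl[\begin{smallmatrix}p+m\\ m\end{smallmatrix}\bigr]_q+q^{m+1}\bigl[\begin{smallmatrix}p+m\\ m+1\end{smallmatrix}\bigr]_q$, which corresponds to splitting on whether the smallest part is zero.
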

This can be proved by showing the both sides of 
\eqref{eq:MacMahon} satisfy the same recursion
\begin{equation}
\begin{bmatrix}p+m+1\\ m+1\end{bmatrix}_q
=\begin{bmatrix}p+m\\ m\end{bmatrix}_q+
q^{m+1}\begin{bmatrix}p+m\\ m+1\end{bmatrix}_q, 
\end{equation}
and the boundary condition
\begin{equation}
\begin{bmatrix}n\\ 0\end{bmatrix}_q
=\begin{bmatrix}n\\ n\end{bmatrix}_q=1, 
\quad 
\begin{bmatrix}n\\ 1\end{bmatrix}_q = \frac{1-q^n}{1-q}.
\end{equation}
\begin{thm}[Fermionic formula of $A_1^{(1)}$-type
\cite{HKOTT,OkadoRevrew,Schilling:X=M}]
\label{thm:Z(N,lambda)}
$\displaystyle Z(N,\lambda) = \prod_{i=1}^\ell q^{\lambda_i^2}
\begin{bmatrix}p_i+m_i\\ m_i\end{bmatrix}_q$, 
where $m_i$ and $p_i$ are as defined in formulas 
\eqref{def:vacancy} and \eqref{def:m:=lambda-lambda}, 
 for the partition $\lambda=(\lambda_1,\ldots,\lambda_\ell)$.
\end{thm}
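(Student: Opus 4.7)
The plan is to combine the iterated form of the comajor-index recursion (Theorem on $\mathrm{comaj}$) with the KKR bijection sending $\bold{u}\in\mathcal{P}^+(N;\lambda)$ to its 01-riggings, and then apply MacMahon's Lemma to each row of riggings independently.

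First I would unwind the recursion \eqref{comaj_recursion} along the chain $\bold{u},\Phi_{01}(\bold{u}),\Phi_{01}^2(\bold{u}),\ldots,\Phi_{01}^\ell(\bold{u})=\bold{0}$. Since $\mathrm{asc}(\Phi_{01}^{i-1}(\bold{u}))=\lambda_i$ and $\rho_{01}(\Phi_{01}^{i-1}(\bold{u}))=J_i=\{J_{i,1}\geq\cdots\geq J_{i,m_i}\}$ by the definitions of $\lambda_i$ and $J_i$, telescoping \eqref{comaj_recursion} yields
\begin{equation}
\mathrm{comaj}(\bold{u})=\sum_{i=1}^{\ell}\lambda_i^2+\sum_{i=1}^{\ell}\sum_{k=1}^{m_i}J_{i,k}.
\end{equation}
Here I am using that $\mathrm{comaj}(\bold{0})=0$ and that the number of entries in $J_i$ is exactly $m_i=\lambda_i-\lambda_{i+1}$ (resp.\ $\lambda_\ell$ when $i=\ell$), which is the number of $01$-pairs whose elimination actually lowers the ascent count at the $i$-th step.

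Next I would substitute this decomposition into the definition \eqref{Z(N,lambda)}. Because the KKR map $\bold{u}\mapsto(J_1,\ldots,J_\ell)$ is a bijection between $\mathcal{P}^+(N;\lambda)$ and $\mathrm{Rig}(N;\lambda)$ (as recalled in the excerpt), and the constraints on $\mathrm{Rig}(N;\lambda)$ factorize over $i$ as $p_i\geq J_{i,1}\geq\cdots\geq J_{i,m_i}\geq 0$ with vacancy numbers $p_i=N-2(\lambda_1+\cdots+\lambda_i)$, the generating function factorizes:
\begin{equation}
Z(N;\lambda)=q^{\sum_i\lambda_i^2}\prod_{i=1}^{\ell}\Bigl(\sum_{p_i\geq J_{i,1}\geq\cdots\geq J_{i,m_i}\geq 0}q^{J_{i,1}+\cdots+J_{i,m_i}}\Bigr).
\end{equation}
Applying Lemma \ref{lemma:MacMahon} to each inner sum gives exactly $\bigl[\begin{smallmatrix}p_i+m_i\\ m_i\end{smallmatrix}\bigr]_q$, whence the claimed formula.

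The only subtle point, and the one I would be most careful with, is verifying that the riggings at level $i$ really do range \emph{independently} over $\{(J_{i,1},\ldots,J_{i,m_i}):p_i\geq J_{i,1}\geq\cdots\geq J_{i,m_i}\geq 0\}$, with no cross-level constraints and with the vacancy number matching $p_i=N-2(\lambda_1+\cdots+\lambda_i)$. This is precisely the content of the KKR bijection together with the interpretation of $p_i$ as ``number of boxes remaining after $i$ successive $01$-eliminations'' noted just before \eqref{vacancy_condition}; once this is in place, the rest is a direct combination of the comaj recursion and MacMahon's identity.
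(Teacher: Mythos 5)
Your proposal is correct and follows essentially the same route as the paper: telescoping the comajor recursion \eqref{comaj_recursion} down the chain $\bold{u},\Phi_{01}(\bold{u}),\ldots,\Phi_{01}^\ell(\bold{u})=\bold{0}$ to get $\mathrm{comaj}(\bold{u})=\sum_i(\lambda_i^2+\sum_k J_{i,k})$, then using the KKR bijection onto $\mathrm{Rig}(N;\lambda)$ to factorize the sum over independent rigging rows and applying Lemma \ref{lemma:MacMahon} to each. The ``subtle point'' you flag --- independence of the rows subject only to \eqref{vacancy_condition} --- is exactly what the paper also invokes (implicitly, via the bijection stated just before the theorem), so nothing is missing.
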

\begin{proof}
Using Theorem \ref{thm:comaj_recursion} recursively, we have
\begin{equation}
\mathrm{comaj}(\bold{u}) = \sum_{i=1}^\ell \left(
\lambda_i^2 + \sum_{k=1}^{m_i}J_{i,k}\right)
\end{equation}
for $\bold{u}\in\mathcal{P}^+
\left(N,\left(\lambda_1,\ldots,\lambda_\ell\right)\right)$.
We therefore obtain
\begin{align}
Z(N,\lambda) &= 
\sum_{\bold{u}\in\mathcal{P}^+(N;\lambda)}q^{\mathrm{comaj}(\bold{u})}
=\sum_{J_1}\sum_{J_2}\cdots\sum_{J_\ell}
q^{\sum_{i=1}^\ell \left(
\lambda_i^2 + \sum_{k=1}^{m_i}J_{i,k}\right)}
\nonumber\\
&= q^{\lambda_1^2+\lambda_2^2+\cdots+\lambda_\ell^2}
\sum_{J_1}q^{\sum_{k_1=1}^{m_1}J_{1,k_1}}
\sum_{J_2}q^{\sum_{k_2=1}^{m_2}J_{2,k_2}}
\cdots
\sum_{J_\ell}q^{\sum_{k_\ell=1}^{m_\ell}J_{\ell,k_\ell}}.
\label{Z(N,lambda)=...}
\end{align}
Applying \eqref{eq:MacMahon} to \eqref{Z(N,lambda)=...}, 
we have the desired result since 
the $J_i=\{J_{i,1},\ldots, J_{i,m_i}\}$ satisfy condition \eqref{vacancy_condition}. 
\end{proof}
\begin{rem}
Theorem \ref{thm:Z(N,lambda)} is related to a natural $q$-analogue of 
the Catalan numbers \cite{FH,Kirillov,Reynolds}, defined by
\begin{equation}
C_n(q) = \frac{1}{\left[n+1\right]_q}
\begin{bmatrix}2n\\ n\end{bmatrix}_q, 
\label{def:q-Catalan}
\end{equation}
which is connected to a comajor counting as 
\begin{equation}
C_n(q)
=q^{-n}\sum_{\bold{u}\in\mathcal{P}^+(2n,n)}q^{\mathrm{comaj}(\bold{u})}
=q^{-n}\sum_{|\lambda|=n}\prod_{i=1}^{\ell(\lambda)} q^{\lambda_i^2}
\begin{bmatrix}p_i+m_i\\ m_i\end{bmatrix}_q.
\label{Z=qCatalan}
\end{equation}
\end{rem}

\begin{example}[$N=10$, $k=5$, $\lambda=\left(2,2,1\right)$]
\label{example:N=10,k=5,lambda=(2,2,1)}
\[
\mathcal{P}^+\left(10;\left(2,2,1\right)\right)
=\left\{00011000111,\;
00001100111,\;
00001110011\right\}.
\]
As can be seen from Table \ref{table:N=10,k=5,lambda=(2,2,1)}, 
the identity of Theorem \ref{thm:Z(N,lambda)} in this case gives
\[
Z\left(10,\left(2,2,1\right)\right)=
q^{2+7}+q^{3+7}+q^{3+8} = 
q^{2^2}\begin{bmatrix}6+0\\ 0\end{bmatrix}_q
\cdot
q^{2^2}\begin{bmatrix}2+1\\ 1\end{bmatrix}_q
\cdot
q^{1^2}\begin{bmatrix}0+1\\ 1\end{bmatrix}_q.
\]
\begin{table}[htbp]
\ytableausetup{boxsize=1em}
\begin{center}
\begin{tabular}{c|c|c|c}
$\bold{u}=u_0u_1u_2\cdots u_{10}$ & 
$\mathrm{Asc}(\bold{u})$ & $\left\{J_1,J_2,J_3\right\}$
& {\small rigged configuration}\\ \hline &&& \\[-2mm]
$00011000111$ & $\left\{7,2\right\}$ & 
$\left\{\emptyset,\{0\},\{0\}\right\}$
& $\begin{ytableau}
{} & {} & {} & \none[0]\\
{} & {} & \none[0]
\end{ytableau}$
\\[6mm]
$00001100111$ & $\left\{7,3\right\}$ & 
$\left\{\emptyset,\{1\},\{0\}\right\}$
& $\begin{ytableau}
{} & {} & {} & \none[0]\\
{} & {} & \none[1]
\end{ytableau}$
\\[6mm]
$00001110011$ & $\left\{8,3\right\}$ & 
$\left\{\emptyset,\{2\},\{0\}\right\}$
& $\begin{ytableau}
{} & {} & {} & \none[0]\\
{} & {} & \none[2]
\end{ytableau}$
\end{tabular}
\caption{$\mathcal{P}^+\left(10;\left(2,2,1\right)\right)$}
\label{table:N=10,k=5,lambda=(2,2,1)}
\end{center}
\end{table}
\end{example}

We now consider partition functions associated with 
the modified version of the KKR map \eqref{modifiedRC}.
Given a partition 
$\lambda=\left(\lambda_1\geq\cdots\geq\lambda_M\right)$ of 
length $M$, a composition 
$\nu=\left(\nu_1,\ldots,\nu_{\lambda_M}\right)$ 
of length $\lambda_M$, 
and an integer $N\geq
\lambda_1+\cdots+\lambda_{M-1}+\nu_1+\cdots+\nu_{\lambda_M}$, 
we set 
\begin{align}
\mathcal{P}^{(M)}(N;\lambda,\nu) =
\left\{\bold{u}\in\mathcal{P}^{(M)}(N,|\lambda|+|\nu|)\,\right|\,
& \mathrm{asc}\left(\Phi_{01}^{i-1}(\bold{u})\right)=\lambda_i
\; (i=1,\ldots,M), 
\nonumber\\
&\left. 
\nu_j = d^{(M-1)}_j-a^{(M-1)}_j
\; (j=1,\ldots,m_{M})
\right\},
\end{align}
where $\left\{d^{(M-1)}_j\right\}:= 
\mathrm{Des}\left(\Phi_{01}^{M-1}(\bold{u})\right)$ and 
$\left\{a^{(M-1)}_j\right\}:= 
\mathrm{Asc}\left(\Phi_{01}^{M-1}(\bold{u})\right)$. 

\begin{example}[$M=2$, $N=10$, $\lambda=\left(2,2\right)$, 
$\nu=\left(2,1\right)$]
\label{example:M=2,N=10,lambda=(2,2),mu=(2,1)}
\[
\mathcal{P}^{(2)}\left(10;\left(2,2\right),\left(2,1\right)\right) =
\left\{
00011001110,\;
00011000111,\;
00001100111
\right\}.
\]
\ytableausetup{boxsize=1em}
\begin{table}[htbp]
\begin{center}
\begin{tabular}{c|c|c}
$\bold{u}=u_0u_1u_2\cdots u_{10}$ & 
$\mathrm{Asc}(\bold{u})$ 
& {\small modified rigged configuration}\\ \hline && \\[-3mm]
$00011001110$ & $\left\{6,2\right\}$ 
& $\biggl\{\;
\raisebox{1mm}{$
\begin{ytableau}
{} & *(lightgray) \\
{} & *(lightgray) 
\end{ytableau}\;, \;
\begin{ytableau}
\none[3] & {} & {}\\
\none[1] & {}
\end{ytableau}$}
\;\biggr\}$
\\[3mm]
$00011000111$ & $\left\{7,2\right\}$
& $\biggl\{\;
\raisebox{1mm}{$
\begin{ytableau}
{} & *(lightgray) \\
{} & *(lightgray) 
\end{ytableau}\;, \;
\begin{ytableau}
\none[4] & {} & {}\\
\none[1] & {}
\end{ytableau}$}
\;\biggr\}$
\\[3mm]
$00001100111$ & $\left\{7,3\right\}$
& $\biggl\{\;
\raisebox{1mm}{$
\begin{ytableau}
{} & *(lightgray) \\
{} & *(lightgray) 
\end{ytableau}\;, \;
\begin{ytableau}
\none[4] & {} & {}\\
\none[2] & {}
\end{ytableau}$}
\;\biggr\}$
\end{tabular}
\end{center}
\caption{$\mathcal{P}^{(2)}\left(10;\left(2,2\right),
\left(2,1\right)\right)$}
\label{table:P(2)(10;(2,2))}
\end{table}
\end{example}
\begin{rem}
The associated soliton contents 
(asymptotic length of solitons) 
for Example \ref{example:N=10,k=5,lambda=(2,2,1)} 
(as given in Table \ref{table:N=10,k=5,lambda=(2,2,1)})
and those for 
Example \ref{example:M=2,N=10,lambda=(2,2),mu=(2,1)} 
(Table \ref{table:P(2)(10;(2,2))}) 
are the same, given by 
$\begin{ytableau}
{} & {} & {}\\
{} & {} 
\end{ytableau}$.
\end{rem}
\begin{example}[$M=2$, $N=10$, $\lambda=\left(2,2\right)$, 
$\nu=\left(1,2\right)$]
\label{example:M=2,N=10,lambda=(2,2),mu=(1,2)}
\[
\mathcal{P}^{(2)}\left(10;\left(2,2\right),\left(1,2\right)\right) =
\left\{
00011100110,\;
00011100011,\;
00001110011
\right\}.
\]
\ytableausetup{boxsize=1em}
\begin{table}[htbp]
\begin{center}
\begin{tabular}{c|c|c}
$\bold{u}=u_0u_1u_2\cdots u_{10}$ & 
$\mathrm{Asc}(\bold{u})$ 
& {\small modified rigged configuration}\\ \hline && \\[-3mm]
$00011100110$ & $\left\{7,2\right\}$ 
& $\biggl\{\;
\raisebox{1mm}{$
\begin{ytableau}
{} & *(lightgray) \\
{} & *(lightgray) 
\end{ytableau}\;, \;
\begin{ytableau}
\none[4] & {} \\
\none[1] & {} & {}
\end{ytableau}$}
\;\biggr\}$
\\[3mm]
$00011100011$ & $\left\{8,2\right\}$
& $\biggl\{\;
\raisebox{1mm}{$
\begin{ytableau}
{} & *(lightgray) \\
{} & *(lightgray) 
\end{ytableau}\;, \;
\begin{ytableau}
\none[5] & {}\\
\none[1] & {} & {}
\end{ytableau}$}
\;\biggr\}$
\\[3mm]
$00001110011$ & $\left\{8,3\right\}$
& $\biggl\{\;
\raisebox{1mm}{$
\begin{ytableau}
{} & *(lightgray) \\
{} & *(lightgray) 
\end{ytableau}\;, \;
\begin{ytableau}
\none[5] & {}\\
\none[2] & {} & {}
\end{ytableau}$}
\;\biggr\}$
\end{tabular}
\end{center}
\caption{$\mathcal{P}^{(2)}\left(10;\left(2,2\right),
\left(1,2\right)\right)$}
\end{table}
\end{example}
For a partition $\lambda=\left(\lambda_1,\ldots,\lambda_M\right)$ 
and a composition $\nu=\left(\nu_1,\ldots,\nu_{\lambda_M}\right)$, 
we define the space of riggings for the finite carrier case as 
\begin{align}
&\mathrm{Rig}^{(M)}(N;\lambda,\nu) =
\nonumber\\
&\left\{
\begin{array}{l}
J_1 = \left\{J_{1,1},\ldots,J_{1,m_1}\right\}\in\mathbb{Z}^{m_1},\\
\quad\cdots\cdots\cdots\\
J_{\ell-1} = \left\{
J_{M-1,1},\ldots,J_{M-1,m_{\ell-1}}\right\}\in\mathbb{Z}^{m_{M-1}},\\
a_1,\ldots,a_{\lambda_M}\in\mathbb{Z}
\end{array}\right|
\left.
\begin{array}{l}
p_1\geq J_{1,1}\geq\ldots\geq J_{1,m_1}\geq 0,\\
\quad\cdots\cdots\cdots\\
p_{M-1}\geq J_{M-1,1}\geq\ldots\geq J_{M-1,m_{M-1}}\geq 0,\\
a_1 + \nu_1\leq N, \quad
a_j +\nu_j < a_{j-1} \; (j=2,\ldots,\lambda_M)
\end{array}
\right\}.
\end{align}
The map \eqref{modifiedRC} is a bijection from 
$\mathcal{P}^{(M)}(N;\lambda,\nu)$ to
$\mathrm{Rig}^{(M)}(N;\lambda,\nu)$. 
In other words, the data 
$\left\{
\{J_1,\ldots,J_{M-1}\},\right.$ 
$\left.\{a_1,\ldots,a_{\lambda_M}\}
\right\}\in\mathrm{Rig}^{(M)}(N;\lambda,\nu)$ 
uniquely parametrises an element 
$\bold{u}\in\mathcal{P}^{(M)}(N;\lambda,\nu)$. 

Consider the comajor counting associated with 
$\mathcal{P}^{(M)}(N;\lambda,\nu)$: 
\begin{equation}
Z^{(M)}(N;\lambda,\nu) = 
\sum_{\bold{u}\in\mathcal{P}^{(M)}(N;\lambda,\nu)}q^{\mathrm{comaj}(\bold{u})}.
\end{equation}

\begin{thm}[Fermionic formula for a BBS with a carrier of finite
 capacity $M$]
\label{thm:FermionicFormula_finiteM}
\begin{equation}
Z^{(M)}(N;\lambda,\nu) = 
\left(\prod_{j=1}^{M-1}q^{\lambda_j^2}
\begin{bmatrix}p_j+m_j\\ m_j\end{bmatrix}_q\right)
\left(q^{\lambda_M(\lambda_M+1)/2}
\prod_{k=1}^{\lambda_M} q^{(k-1)\nu_{k}}\right)
\begin{bmatrix}p_{M-1}-|\nu|\\ \lambda_M\end{bmatrix}_q.
\label{eq:Z(N,lambda,nu)}
\end{equation}
\end{thm}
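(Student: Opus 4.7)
The plan is to iterate the recursion of Theorem \ref{thm:comaj_recursion} exactly $M-1$ times so as to peel off the $\Phi_{01}$-contributions to $\mathrm{comaj}(\bold{u})$, and then evaluate the resulting multi-sum over $\mathrm{Rig}^{(M)}(N;\lambda,\nu)$ by repeated application of MacMahon's Lemma~\ref{lemma:MacMahon}. Since $\mathrm{asc}(\Phi_{01}^{i-1}(\bold{u}))=\lambda_i$ and $\rho_{01}(\Phi_{01}^{i-1}(\bold{u}))=J_i$ for $i=1,\ldots,M-1$, iterating \eqref{comaj_recursion} gives
\begin{equation*}
\mathrm{comaj}(\bold{u}) = \mathrm{comaj}\bigl(\Phi_{01}^{M-1}(\bold{u})\bigr) + \sum_{i=1}^{M-1}\Bigl(\lambda_i^2 + \sum_{k=1}^{m_i} J_{i,k}\Bigr),
\end{equation*}
and since $\Phi_{01}^{M-1}(\bold{u})=\bold{u}(\bm{a};\bm{d})$ with $\mathrm{Asc}=\{a_1>\cdots>a_{\lambda_M}\}$, the residual term equals simply $a_1+\cdots+a_{\lambda_M}$.

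Using the bijection \eqref{modifiedRC} to parametrise $\bold{u}$ by $(J_1,\ldots,J_{M-1},a_1,\ldots,a_{\lambda_M})$, the generating function factorises as
\begin{equation*}
Z^{(M)}(N;\lambda,\nu) = \prod_{i=1}^{M-1}\Bigl(q^{\lambda_i^2}\sum_{J_i} q^{\sum_k J_{i,k}}\Bigr) \cdot \sum_{(a_1,\ldots,a_{\lambda_M})} q^{a_1+\cdots+a_{\lambda_M}},
\end{equation*}
where each $J_i$-sum runs over $p_i\geq J_{i,1}\geq\cdots\geq J_{i,m_i}\geq 0$ and thus evaluates directly to $\begin{bmatrix}p_i+m_i\\ m_i\end{bmatrix}_q$ via Lemma~\ref{lemma:MacMahon}, producing the first product in \eqref{eq:Z(N,lambda,nu)}. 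What remains is to evaluate the $a$-sum under the constraints from $\mathrm{Rig}^{(M)}(N;\lambda,\nu)$.

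To handle the $a$-sum I would first perform the substitution $b_j:=a_j-\sum_{k=j+1}^{\lambda_M}\nu_k$. This converts each interlacing constraint $a_{j-1}>a_j+\nu_j$ into the simple strict inequality $b_{j-1}>b_j$, turns the upper bound into $b_1\leq p_{M-1}-|\nu|$, and preserves the lower bound $b_{\lambda_M}\geq 1$ (which originates from $u_0=u_1=0$ in $\tilde{\mathcal{U}}$). A direct rearrangement gives $\sum_j a_j=\sum_j b_j+\sum_{k=1}^{\lambda_M}(k-1)\nu_k$, isolating the factor $\prod_k q^{(k-1)\nu_k}$ in \eqref{eq:Z(N,lambda,nu)}. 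A second substitution $c_j:=b_j-(\lambda_M-j+1)$ then converts the strict chain into a weakly decreasing one, $0\leq c_{\lambda_M}\leq\cdots\leq c_1\leq p_{M-1}-|\nu|-\lambda_M$, and the triangular shift contributes $\sum_{j=1}^{\lambda_M}(\lambda_M-j+1)=\lambda_M(\lambda_M+1)/2$ to the exponent. A final application of Lemma~\ref{lemma:MacMahon} delivers $q^{\lambda_M(\lambda_M+1)/2}\begin{bmatrix}p_{M-1}-|\nu|\\ \lambda_M\end{bmatrix}_q$, and multiplying all factors reproduces \eqref{eq:Z(N,lambda,nu)}.

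I anticipate the main obstacle to be the bookkeeping in these two nested changes of variables: one must track strict versus weak inequalities carefully, and in particular verify that the lower bound $b_{\lambda_M}\geq 1$ (rather than $\geq 0$) is precisely what shifts the triangular exponent from $\binom{\lambda_M}{2}$ up to $\lambda_M(\lambda_M+1)/2$, thereby matching the target formula exactly.
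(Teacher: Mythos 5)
Your proposal is correct and follows essentially the same route as the paper: the paper also iterates Theorem \ref{thm:comaj_recursion} $M-1$ times to produce the product $\prod_{j=1}^{M-1}q^{\lambda_j^2}\begin{bmatrix}p_j+m_j\\ m_j\end{bmatrix}_q$ and then evaluates the residual sum over the ascent positions of $\Phi_{01}^{M-1}(\bold{u})$ via MacMahon's lemma, packaged as a separate lemma in which your two substitutions are combined into the single shift $l_k=a_k-(\lambda_M+1-k)-\sum_{j>k}\nu_j$.
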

\begin{example}
In the cases of 
Example \ref{example:M=2,N=10,lambda=(2,2),mu=(2,1)} 
and Example \ref{example:M=2,N=10,lambda=(2,2),mu=(1,2)}, 
the identity \eqref{eq:Z(N,lambda,nu)} gives
\begin{align*}
Z^{(2)}\left(10;\left(2,2\right),\left(2,1\right)\right)
&=q^{2+6}+q^{2+7}+q^{3+7} = 
q^{2^2}\begin{bmatrix}6+0\\ 0\end{bmatrix}_q
\cdot \left(q^{3}q^{1}\right)
\cdot\begin{bmatrix}6-3\\ 2\end{bmatrix}_q,
\\
Z^{(2)}\left(10;\left(2,2\right),\left(1,2\right)\right)
&=q^{2+7}+q^{2+8}+q^{3+8} = 
q^{2^2}\begin{bmatrix}6+0\\ 0\end{bmatrix}_q
\cdot \left(q^{3}q^{2}\right)
\cdot\begin{bmatrix}6-3\\ 2\end{bmatrix}_q.
\end{align*}
\end{example}

Theorem \ref{thm:FermionicFormula_finiteM} 
follows directly form 
Theorem \ref{thm:comaj_recursion} and 
\begin{lemma}
Given a composition $\nu=\left(\nu_1,\ldots,\nu_n\right)$ and an integer 
$N'\geq \nu_1+\cdots +\nu_n+n$, we define
\begin{equation}
\mathcal{P}'(N';\nu) = \left\{
\bold{u}\in\tilde{\mathcal{U}}\,\Big|\, (\bold{u})_j=0\mbox{ if }j>N', \;
\mbox{length of the $j$th soliton}=\nu_j \;(j=1,\ldots,n)
\right\}.
\end{equation}
Then we have
\begin{equation}
\sum_{\bold{u}\in\mathcal{P}'(N';\nu)}q^{\mathrm{comaj}(\bold{u})}
=q^{n(n+1)/2}\left(\prod_{k=1}^{n-1}q^{k\nu_{k+1}}\right)
\begin{bmatrix}
N'-(\nu_1+\cdots +\nu_n) \\ n
\end{bmatrix}_q.
\end{equation}
\end{lemma}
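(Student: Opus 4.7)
The plan is to parametrise the elements of $\mathcal{P}'(N';\nu)$ by their ascent positions, make a change of variables to turn the interlacing/spacing inequalities into independent non-negativity conditions, and then reduce the resulting sum to MacMahon's identity (the lemma quoted just above the theorem).

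First I would observe that an element $\bold{u}\in\mathcal{P}'(N';\nu)$ is completely determined by the positions of its $n$ ascents $a_1>a_2>\cdots>a_n$, because the $j$th block of consecutive $1$s (counted from the right) then occupies the sites $a_j+1,\dots,a_j+\nu_j$. The constraints are $a_n\geq 1$ (coming from $u_0=u_1=0$), $a_{j-1}\geq a_j+\nu_j+1$ for $j=2,\dots,n$ (non-overlap of successive blocks) and $a_1+\nu_1\leq N'$ (the rightmost $1$ lies within position $N'$). Moreover, by definition, $\mathrm{comaj}(\bold{u})=a_1+\cdots+a_n$.

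Next I would introduce new variables that absorb the rigid spacings: set
\begin{equation*}
c_n=a_n-1,\qquad c_j=a_j-a_{j+1}-\nu_{j+1}-1\quad(j=1,\dots,n-1),
\end{equation*}
so that each $c_j\geq 0$ and
\begin{equation*}
a_j=\sum_{k=j}^{n}c_k+\sum_{k=j+1}^{n}\nu_k+(n-j+1).
\end{equation*}
A direct summation then yields
\begin{equation*}
\sum_{j=1}^{n}a_j=\sum_{k=1}^{n}k\,c_k+\sum_{k=1}^{n-1}k\,\nu_{k+1}+\frac{n(n+1)}{2},
\end{equation*}
while the remaining constraint $a_1+\nu_1\leq N'$ becomes the single inequality $c_1+\cdots+c_n\leq N'-n-|\nu|$, where $|\nu|=\nu_1+\cdots+\nu_n$ (and the hypothesis on $N'$ ensures the upper bound is non-negative).

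Finally I would peel off the overall $q$-factor $q^{n(n+1)/2}\prod_{k=1}^{n-1}q^{k\nu_{k+1}}$, leaving the sum
\begin{equation*}
\sum_{\substack{c_1,\dots,c_n\geq 0\\ c_1+\cdots+c_n\leq N'-n-|\nu|}}q^{\,c_1+2c_2+\cdots+nc_n}.
\end{equation*}
Substituting $j_k=c_k+c_{k+1}+\cdots+c_n$ turns the constraints into $N'-n-|\nu|\geq j_1\geq j_2\geq\cdots\geq j_n\geq 0$ and, by a short telescoping computation, rewrites the exponent as $j_1+j_2+\cdots+j_n$. Lemma~\ref{lemma:MacMahon} (MacMahon) then evaluates this sum as $\begin{bmatrix}N'-|\nu|\\ n\end{bmatrix}_q$, which combined with the prefactor gives the announced formula. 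The only delicate point is the bookkeeping in translating the geometric constraints on $(a_1,\dots,a_n)$ into the clean non-negativity conditions on $(c_1,\dots,c_n)$ and correctly tracking the exponent of $q$; everything after that is a mechanical application of MacMahon's identity.
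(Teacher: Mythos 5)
Your proposal is correct and follows essentially the same route as the paper: parametrise $\bold{u}$ by its ascent positions, absorb the rigid spacings by an affine change of variables, and invoke MacMahon's identity (Lemma \ref{lemma:MacMahon}); your variables $j_k=c_k+\cdots+c_n$ coincide exactly with the paper's $l_k=a_k-(n+1-k)-\sum_{i>k}\nu_i$, so the two arguments differ only in that you pass through the intermediate difference variables $c_k$. The bookkeeping (the prefactor $q^{n(n+1)/2}\prod_k q^{k\nu_{k+1}}$ and the bound $j_1\leq N'-n-|\nu|$ giving $\bigl[\begin{smallmatrix}N'-|\nu|\\ n\end{smallmatrix}\bigr]_q$) all checks out.
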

\begin{proof}
Given $\bold{u}\in\mathcal{P}'
\left(N';\left(\nu_1,\nu_2,\ldots,\nu_n\right)\right)$, we define
$\{a_1>\cdots>a_n\}=\mathrm{Asc}(\bold{u})$, 
$\{d_1>\cdots>d_n\}=\mathrm{Des}(\bold{u})$, 
where $a_j$ and $d_j$ satisfy the interlacing condition
(see Figure \ref{fig:def_nu})
\begin{equation}
1\leq a_n<d_n<a_{n-1}<d_{n-1}<\cdots<a_1<d_1\leq N'.
\end{equation}

Define $l_k$ ($k=1,\ldots,n$) as
\begin{equation}
l_k = a_k -(n+1-k) - \sum_{j=k+1}^n \nu_j. 
\end{equation}
These satisfy 
\begin{equation}
0\leq l_n\leq l_{n-1}\leq \cdots\leq l_2
\leq l_1\leq N'-(\nu_1+\cdots +\nu_n)-n
=: N''.
\end{equation}
It follows that
\begin{equation}
\mathrm{comaj}(\bold{u}) 
= a_1+\cdots +a_n 
= \sum_{k=1}^nl_k + \sum_{k=1}^{n}(k-1)\nu_{k} + \frac{n(n+1)}{2}, 
\end{equation}
and hence we have
\begin{align}
\sum_{\bold{u}\in\mathcal{P}'(N';\nu)}q^{\mathrm{comaj}(\bold{u})}
&=\sum_{0\leq l_n\leq\cdots\leq l_1\leq N''}
 q^{\sum_{j}l_j + \sum_{k=1}^{n}(k-1)\nu_{k} + n(n+1)/2}
\nonumber\\
&=q^{n(n+1)/2}\left(\prod_{k=1}^{n-1}q^{k\nu_{k+1}}\right)
\begin{bmatrix}N'-\sum_{j}\nu_j\\ n\end{bmatrix}_q,
\end{align}
where we have used Lemma \ref{lemma:MacMahon}.
\end{proof}

\section*{Acknowledgments}
The authors are grateful to Atsuo Kuniba, Masato Okado, and 
Taichiro Takagi for the clear explanations they gave of 
their results and for helpful comments concerning our work. 
We also thank Claire Gilson and Christian Korff 
for discussions and useful comments.
Finally, we would like to express our gratitude to the anonymous referee for
valuable comments that greatly strengthened our paper. 
This work was supported by JSPS KAKENHI Grant
Numbers 15K04893, 16K05184, 16K13761.

\end{document}